\pgfplotsset{compat=1.5}
\newtheorem{theorem}{Theorem}
\newtheorem{definition}[theorem]{Definition}
\newtheorem{lemma}[theorem]{Lemma}
\newtheorem{claim}[theorem]{Claim}
\numberwithin{theorem}{section}
\newcommand{\kCenter}{\mathsf{center}}
\newcommand{\totalCover}{\mathsf{totalCover}}
\newcommand{\grid}{\mathsf{grid}}
\newcommand{\cover}{\mathsf{cover}}
\newcommand{\Lap}{\mbox{Lap}}
\newcommand{\snap}[2]{\lfloor #1 \rfloor^{(#2)}}
\newcommand{\pluseq}{\mathrel{{+}{=}}}
\newcommand{\E}{\mathop{\mathbb{E}}}
\newcommand{\ex}[2]{{\ifx&#1& \mathbb{E} \else \underset{#1}{\mathbb{E}} \fi \left[#2\right]}}
\newcommand{\pr}[2]{{\ifx&#1& \mathbb{P} \else \underset{#1}{\mathbb{P}} \fi \left[#2\right]}}
\newcommand{\R}{\mathbb{R}}
\newcommand{\eps}{\varepsilon}
\newcommand{\poly}{\mathrm{poly}}
\DeclareMathOperator*{\argmin}{arg\,min}
\newcommand{\OPT}{\texttt{OPT}}
\newcommand{\C}{\mathcal{C}}
\newcommand{\U}{\mathcal{U}}
\newcommand{\Z}{\mathcal{Z}}
\title{Differentially private $k$-means clustering via exponential mechanism and max cover}
\author{Anamay Chaturvedi\thanks{Khoury College of Computer Sciences, Northeastern University \dotfill \texttt{chaturvedi.a@northeastern.edu}} \and Huy L\^{e} Nguy\~{\^{e}}n \thanks{Khoury College of Computer Sciences, Northeastern University \dotfill \texttt{hlnguyen@cs.princeton.edu}}\and Eric Xu\thanks{Khoury College of Computer Sciences, Northeastern University \dotfill \texttt{xu.er@northeastern.edu}}}
\begin{document}
	\maketitle
	
	\begin{abstract}
		We introduce a new $(\epsilon_p, \delta_p)$-differentially private algorithm for the $k$-means clustering problem. Given a dataset in Euclidean space, the $k$-means clustering problem requires one to find $k$ points in that space such that the sum of squares of Euclidean distances between each data point and its closest respective point among the $k$ returned is minimised. Although there exist privacy-preserving methods with good theoretical guarantees to solve this problem (\cite{balcan2017differentially, KS18}), in practice it is seen that it is the additive error which dictates the practical performance of these methods. By reducing the problem to a sequence of instances of maximum coverage on a grid, we are able to derive a new method that achieves lower additive error then previous works. For input datasets with cardinality $n$ and diameter $\Delta$, our algorithm has an $O(\Delta^2 (k \log^2 n  \log(1/\delta_p)/\epsilon_p + k\sqrt{d \log(1/\delta_p)}/\epsilon_p))$ additive error whilst maintaining constant multiplicative error. We conclude with some experiments and find an improvement over previously implemented work for this problem.
	\end{abstract}

	\section{Introduction}	
	Clustering is a well-studied problem in theoretical computer science. The objective can vary between identifying good cluster sets, as one might desire in unsupervised learning, or good cluster centers, which may be framed as more of an optimization problem. One relatively general variant of this problem is to find a fixed number of centers $k$ for a given dataset $D$ of size $n$ such that the sum of distances of each point to the closest center is minimized. When the ambient space is Euclidean and the distance is the square of the Euclidean metric this is known as the $k$-means problem. Although solving the $k$-means problem is NP hard \cite{aloise2009np, dasgupta2008hardness, mahajan2009planar}, practical algorithmic solutions with good approximation guarantees are well-known \cite{lloyd1982least, ostrovsky2013effectiveness, kanungo2004local, ahmadian2019better}. 
	
	When algorithms handle sensitive information (for example location data), an important requirement that they might be expected to fulfill is that of being \emph{differentially private} \cite{dwork2006calibrating}. Differential privacy provides a framework for capturing the loss in privacy that occurs when sensitive data is processed. This framework encompasses different models that vary depending on who the trusted parties are, what is considered sensitive data, and how loss in privacy is measured. In this work we are interested in the centralized model of differential privacy, where we assume that the algorithm whose privacy loss we want to bound is executed by a trusted curator with access to many agents' private information. This trusted curator publicly reveals the output obtained at the end of their computation, which is when all privacy loss occurs.
	
	In the theoretical study of the $k$-means problem, reducing the worst-case multiplicative approximation factor has been the focus of a major line of work \cite{kanungo2004local, ahmadian2019better}. However, it is important to note that Lloyd's algorithm, which has a tight sub-optimal multiplicative guarantee of $O(\log k)$, works quite well in practice. This behaviour can be understood by showing (as in \cite{aggarwal2009adaptive}) that Lloyd's finds a solution with constant multiplicative error with constant probability, or that for a general class of datasets satisfying a certain separability condition \cite{ostrovsky2013effectiveness} the multiplicative error again has a strong $O(1)$ bound.
	
	In contrast, when the algorithm is required to be $(\epsilon_p, \delta_p)$-differentially private, no pure multiplicative approximation is attainable and additive error is necessary. Differential privacy forces a lower bound on the clustering cost - morally, for an easy clustering instance a perfect solution would reveal too much information about the sensitive dataset. This is formalised for the closely related discrete $k$-medians\footnote{The discrete $k$-medians problem has an identical formulation to the $k$-means except that the distance function is a metric, not squared, and the centers come from a public finite set and not the whole ambient space.} problem in theorem 4.4 of \cite{gupta2010differentially} which shows that there is a family of instances whose optimal clustering cost is $0$ but any differentially private algorithm must incur an $\Omega(\Delta^2 k (\log n/k)/\epsilon_p)$ expected cost. In practice, for many datasets it is seen that although the non-private clustering cost naturally decreases as the number of centers $k$ increases, the costs incurred by differentially private algorithms quickly plateau (as in the experiments of \cite{balcan2017differentially}), suggesting that they have reached their limit in the additive error. Given this fundamental barrier, a major question is:
	
	\paragraph{Question:} Is it possible to obtain a finite approximation with additive error nearly linear in $k$?

    \subsection{Related work}
    The work \cite{gupta2010differentially} initiated the study of differentially private clustering algorithms and gave a private algorithm for the $k$-means problem with constant factor multiplicative approximation and $\tilde{O}(k^{1.5})$ additive error in fixed dimensions. Motivated by applications in high dimensions, the work~\cite{balcan2017differentially} developed a different method that scales to high dimensions but with an $O(\log^3 n)$ multiplicative approximation and $\tilde{O}(k^{1.5})$ additive error. Subsequently, the work~\cite{KS18} gave an algorithm with $O(1/\gamma)$ approximation and $\tilde{O}(k^{1.5}+k^{1+\gamma}d^{0.5+\gamma})$ additive error with one key idea (among others) being the application of locality sensitive hashing (LSH). The trade-off between the multiplicative and additive errors is in part a consequence of the trade-off between the distance approximation quality of LSH and the number of hash functions. Recently, the work~\cite{JNN} improved the additive error to $\tilde{O}(k^{1+\gamma}d^{0.5+\gamma})$ but the multiplicative error remains $O(1/\gamma)$. In this work, we simultaneously improve upon the minimum additive error in this trade-off to nearly linear in $k$ and eliminate the resulting blow-up in the multiplicative factor using a very different approach.
    
    \begin{table}
    \centering
    \renewcommand{\arraystretch}{1.5}
    \begin{tabular}{l|c|c}
    \textbf{Reference} & \textbf{Multiplicative Error} & \textbf{Additive Error}\\ \hline
    Balcan et al. \cite{balcan2017differentially}  & $O\left(\log^3n\right)$ & $\tilde{O}\left(k^2+d \right)$ \\ \hline
    Kaplan and Stemmer \cite{KS18} & $O(1/\gamma)$ & $ \tilde{O}\big(k^{1.5} + d^{0.5 + \gamma} k^{1 + \gamma} \big)$ \\ \hline 
    Jones, Nguyen and Nguyen \cite{JNN} & $O(1/\gamma)$ & $ \tilde{O}\big(k + d^{0.5 + \gamma} k^{1 + \gamma} \big)$ \\ \hline 
    \textbf{Ours}& $O(1)$ & $\tilde{O}(k \sqrt{d})$ \\ 
    \end{tabular}
    \caption{\label{tab:compare}
    Comparison of our clustering algorithms with prior works where we omit all $\log$ factors in the additive error, the dependence on privacy parameters, set $\delta_p = 1/n^{1.5}$ and suppress the common $\Delta^2$ factor in the additive error.}
    \end{table}
	
	\subsection{Contributions}
	
	We introduce a differentially private $k$-means clustering algorithm for the global model of differential privacy. This method has an $O(1)$ multiplicative approximation and an $O(\Delta^2 (k \log^2 n\log(1/\delta_p)/\epsilon_p+k\sqrt{d \log(1/\delta_p)}/\epsilon_p ))$ additive approximation guarantee. 
	The additive error is nearly linear in $k$ in contrast with a polynomial in $k$ overhead in previous works, and the multiplicative error is a constant, which is competitive with all previous works. Apart from theoretical value, the algorithm also exhibits an improvement experimentally over earlier work on synthetic and real-world datasets \cite{balcan2017differentially}. For a specific setting of parameters with constants applicable for experiments, we have the following bound. More general asymptotic bounds can be found in the subsequent sections.
	
	\begin{theorem}
	 There is an $(\epsilon_p, \delta_p)$ differentially private algorithm for the $k$-means problem that achieves a utility bound of
	 \begin{align*}
	     O(1) f_D (\OPT_D) + O\left( \frac{k \Delta^2 \log^2 n \log 1/\delta_p}{\epsilon_p} \right) + O\left( \frac{\Delta^2 k \sqrt{d \log 1/\delta_p}}{\epsilon_p} \right), 
	 \end{align*}
	 where $D$ is the input dataset, $f_D (\OPT_D)$ is the optimal $k$-means cost for the input dataset $D$, $d$ is the ambient dimension of $D$, $n$ is the cardinality of $D$, $\Delta$ is the diameter of $D$, and the failure probability of the algorithm is polynomially small in $n$.
	\end{theorem}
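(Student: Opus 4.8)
\emph{Overall strategy.} I would reduce $k$-means to a small number of private maximum-coverage instances over a polynomial-size grid of candidate centers, and account for the two additive terms separately: the $\log^2 n$ term from the coverage subroutines and the $\sqrt d$ term from privately releasing the centers. After translating and rescaling I may assume the data lies in $[0,1]^d$, paying the $\Delta^2$ factor at the end. I then apply a public Johnson--Lindenstrauss-type embedding into $d' = O(\log n)$ dimensions, which preserves all $k$-means costs up to a constant and costs nothing in privacy; in $\R^{d'}$ I place axis-parallel grids at the geometric scales $2^0, 2^{-1}, \dots, 2^{-O(\log n)}$, so that the full candidate set $G$ has $n^{O(1)}$ cells and $\log|G| = O(\log^2 n)$.

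\emph{Reduction to max coverage.} For a scale $r$, say a grid cell $c$ ``$r$-covers'' a point $x$ when $\|x-c\| \le r$. I would run, over the $O(\log n)$ scales, a sequence of greedy max-coverage steps that select grid cells (re-using the same $k$ cells across all scales) so as to $r$-cover as much data mass as possible at each scale; equivalently, iteratively open the cell maximizing the marginal number of data points brought within their natural radius scale. The resulting $k$ cells, used as $k$-means centers, should be a constant-factor solution: feasibility is witnessed by $\OPT$, since snapping its $k$ centers to $G$ and applying Markov's inequality to the radius of each optimal cluster shows those $k$ cells $r$-cover essentially all of the mass at every scale (a cell opened near an optimal center serves both the fine-scale and the coarse-scale points of that cluster), with total squared-radius-weighted cost $O(f_D(\OPT_D))$ plus a grid-rounding term absorbed into the additive error; monotone submodularity of coverage lets the greedy track this to within a constant factor.

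\emph{Privacy and the final release.} Each coverage count has sensitivity $1$, so I would implement the greedy via the exponential mechanism (or by adding Laplace noise to the counts), split the privacy budget across the scales/rounds, and track the loss via advanced composition / zero-concentrated DP to obtain $(\epsilon_p,\delta_p)$ overall; summing the per-step additive slack, which is $\tilde O(\Delta^2 \log|G|/\epsilon')$ for the calibrated $\epsilon'$, yields a bound of the shape $\tilde O(\Delta^2 k \log^2 n \log(1/\delta_p)/\epsilon_p)$, one $\log n$ coming from $\log|G|$ and the rest from the number of scales and the composition. Given the clustering, I assign each point to its nearest selected cell and release the $k$ cluster means in the \emph{original} $\R^d$ via the Gaussian mechanism on noisy sums and counts; clusters that a noisy-count test deems too small to estimate are dropped, their bounded cost $\le |C_j|\Delta^2$ absorbed into the additive error, and the perturbation of each surviving center by noise of the calibrated scale contributes the $\tilde O(\Delta^2 k \sqrt{d\log(1/\delta_p)}/\epsilon_p)$ term. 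A union bound over the $\poly(n)$ events that the exponential mechanism, the embedding, and the noisy counts behave as expected gives failure probability $n^{-\Omega(1)}$.

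\emph{Main obstacle.} The crux is the combinatorial core of the second step: showing that $k$ greedy coverage rounds --- not the $\Theta(k\log n)$ rounds a blunt set-cover argument needs to drive the uncovered mass to zero, and not a cost potential for which greedy loses a multiplicative $(1-1/e)$ of something as large as $n\Delta^2$ --- already give a constant-factor $k$-means bound. Making the ``one cell serves all scales of its cluster'' observation carry the argument, while simultaneously controlling the grid-rounding error, the discretization of the radius ladder, and the noise of the private greedy so that all three stay within the claimed additive budget (in particular keeping the $k$-dependence nearly linear rather than $k^{3/2}$), is where I expect the real work to lie.
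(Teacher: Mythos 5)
Your outline reproduces the paper's general architecture (grid candidates at geometric scales, private max coverage, Gaussian-mechanism release of centers giving the $\sqrt{d}$ term), but it has two genuine gaps, and the second is exactly the point you flag as the ``main obstacle'' --- which is the step the paper never needs to prove. First, the privacy accounting: you propose splitting the budget over the scales and greedy rounds via advanced composition or zCDP. There are $\Theta(k\log n)$ adaptive exponential-mechanism rounds, so this calibration forces a per-round parameter $\epsilon' \approx \epsilon_p/\sqrt{k\log n\,\log(1/\delta_p)}$, and summing the per-round slack $O(\log|G|/\epsilon')$ over all rounds yields additive error $\tilde{O}(k^{3/2})$, not the nearly linear-in-$k$ bound claimed. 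The paper's essential ingredient here is the coverage-based privacy argument adapted from Gupta et al.\ (\cref{lem:gupta}, used in \cref{lem:loopPrivacyLoss}): because the utility in each round is a coverage count and a data point is removed from the instance as soon as it is covered, a single point's total influence on the sampling distributions across \emph{all} rounds concentrates, and the whole sequence of calls to \cref{alg:privateGridSetCover} is $(O(\epsilon_E\log(1/\delta_E)),\delta_E)$-private independently of the number of rounds. Without this (or an equivalent) argument, your budget split cannot give the stated bound.

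Second, the utility structure: you want the $k$ greedily selected cells, re-used across all scales, to themselves be an $O(1)$-approximate set of $k$-means centers, and you concede you do not know how to prove it. The paper avoids this claim entirely. It runs a \emph{bicriteria} relaxation (\cref{lem:maxCoverRelax}), selecting $k'=O(k/\epsilon)$ cells per scale and hence $O(k\log n/\epsilon^2)$ candidate centers in total; these are never the output. They are used only to build a proxy dataset $D''$ by snapping points to their nearest candidate and privatizing the multiplicities with Laplace noise, with the total movement charged to $O(1)\,f_{D'}(\OPT_{D'})$ plus the additive budget (\cref{lem:costIncrease}). A non-private constant-factor $k$-means algorithm is then run on the now-public $D''$ to recover exactly $k$ centers, and the relaxed triangle inequality transfers the guarantee back to $D'$ (\cref{lem:D'ClusterCostBound}) before the noisy-average release in $\R^d$. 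So the combinatorial statement you identify as the crux (constant-factor $k$-means from $k$ coverage rounds, while keeping the $k$-dependence linear) is not needed; what is needed, and what your proposal lacks, are the proxy-dataset reduction and the round-count-free privacy analysis.
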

	
    For lower bounds, we extend the construction of~\cite{gupta2010differentially} originally for the discrete $k$-medians problem to our setting and show that a linear dependence on $k$ in the additive error is necessary for any finite multiplicative approximation.
    
    \begin{theorem}
        For any $0<\epsilon_p, \delta_p \le 1$ and integer $k$, there is a family of $k$-means instances over the cube $[0,\Delta/\sqrt{d}]^d$ with $d=O(\ln(k/(\epsilon_p\delta_p)))$ dimensions such that the optimal clustering cost is $0$ but any $(\epsilon_p, \delta_p)$-differentially private algorithm would incur an expected cost of $\Omega\left(\frac{\Delta^2 k \ln(\epsilon_p/\delta_p)}{\epsilon_p}\right)$.
    \end{theorem}
    
    The same construction also implies a lower bound for $(\epsilon_p, 0)$-differential privacy.

    \begin{theorem}
        For any $0<\epsilon_p \le 1$ and integers $k$ and $d=\Omega(\ln(k))$, there is a family of $k$-means instances over the cube $[0,\Delta/\sqrt{d}]^d$ such that the optimal clustering cost is $0$ but any $(\epsilon_p, 0)$-differentially private algorithm would incur an expected cost of $\Omega\left(\frac{\Delta^2 k d}{\epsilon_p}\right)$.
    \end{theorem}
    	
	In \cite{gupta2010differentially}, the authors gave an algorithm for solving the discrete version of the problem and subsequent works have focused on identifying a good discretization of the continuous domain and then invoking their algorithm for the discrete case. A recent approach by~\cite{KS18} uses locality sensitive hashing (LSH) to identify a small discrete set of points that serve as potential centers. Inherent in this approach is a trade-off between the multiplicative approximation and the size of this discrete set, which comes from the trade-off in LSH between the approximation and the number of hash functions. The number of discrete candidate centers directly impacts the additive error and thereby brings about a trade-off between the multiplicative and additive errors.
	
	In this work, we avoid this trade-off by going back to the basics and using the most natural approach: discretizing the space using a grid and using all grid points as candidate centers. We can preprocess the data to reduce dimensions to $O((\log n)/\epsilon^2)$ and preserve all distances. However, there can be as many as $(n)^{\log n}$ many points in the grid that we construct since the grid size must start from $1/n$ for negligible additive error. It is not clear how to implement a selection algorithm (such as the exponential mechanism) on such a large number of choices. In fact, it was this hurdle, identified in~\cite{balcan2017differentially}, that prompted subsequent works to find alternative approaches.
	
	An important observation is that a large number of choices is not inherently difficult since it is not hard to sample uniformly among them. Our task is nontrivial since the $k$-means cost objective is a complex function.  
	To simplify the sampling weights, we exploit the connection between clustering and coverage and reduce the problem to finding maximum coverage: count the number of data points within a given radius of each candidate center. The notable advantage is that in maximum coverage, the value of each center is an integer in the range from $1$ to $n$ instead of a real number corresponding to a potential improvement in the $k$-means cost. The centers can hence be partitioned into $n$ classes and we just need to sample uniformly among them. The crucial observation is that there are at most $n^{O(1/\epsilon^4)}$ grid points within the threshold radius of any data point, meaning that there are only a polynomial number of grid points with non-zero coverage. Thus, all but a polynomial number of choices have the same coverage of $0$ making it possible to implement the exponential mechanism in polynomial time.
	
	Given the implementation of the exponential mechanism for coverage, we follow the approach of \cite{JNN} to cover the points using clusters of increasing radii. Note that the approach goes back to the non-private coreset construction of \cite{Chen09}. However, the use of coverage for dealing with each radius has another crucial advantage: like in \cite{JNN}, using the technique of \cite{gupta2010differentially}, the privacy loss only increases by a $\log 1/\delta_p$ factor even though the algorithm has $\Omega(k)$ adaptive rounds of exponential mechanism. 
	
	To obtain a comparable coverage to the optimal solution at each radius $r$, we use a bi-criteria relaxation of maximum coverage and pick many more cluster centers than in the optimal solution. We increase $r$ multiplicatively with factor $(1+\epsilon)$ from $1/n$ to the diameter $2$ and invoke this subroutine for each such $r$, taking the union of all $O(\frac{\log n}{\epsilon})$ sets of candidate centers generated this way to generate a set $C$ of ``good" centers derived from many grids.
	
	By moving each point $p \in D'$ to its closest candidate center $\grid[p]$ in $C$ (with some additional noise), we end up privately constructing a proxy dataset $D''$ with total movement of data points on the order of the $k$-means cost (by virtue of $C$ containing a good $k$-means solution for $D'$). 
	\begin{align*}
		\sum_{p \in D'} d(p, \grid[p]) \simeq \mbox{cost}(D')
	\end{align*}
	
	By the triangle inequality, this bound on the total movement means that any $k$-means solution for $D''$ is immediately a good $k$-means solution for the dataset $D'$ as well, with a constant multiplicative overhead in the cost. A $k$-means solution can be constructed for $D''$ using any non-private $k$-means clustering algorithm, and finally $D$ can be clustered by using noisy averages of clusters in $D'$.
	
	We are able to use a parsimonious privacy budget primarily because of the round independent privacy analysis for the grid-based proxy construction routine using the exponential mechanism which behaves well under composition. There are other privacy preserving noise additions when snapping the data points to the grid for the proxy dataset construction and the final cluster centers but apart from a $\sqrt{d}$ dependence on the ambient dimension they are dominated by the privacy expenditure of the exponential mechanism.
	
	We finish with an experimental evaluation of our algorithm, in which we find that this method performs better than an implementation of previous work~\cite{balcan2017differentially}.

	\section{Preliminaries}
	\subsection{Objective function}
	
	We are given a dataset $D$ of $n$ points that lies in a ball $B_{\Delta/2}(0)$ (the ball of radius $\Delta/2$ centered at $0$) in some high dimensional space $\mathbb{R}^d$. The goal is to find a set of $k$ points $S = \{\mu_1, \dots, \mu_k\}$ such that $\sum_{p \in D} d(p, S)$ is minimal.  Here $d(\cdot, \cdot) : \R^d \times \R^d \to \R$ is the square of the Euclidean distance, that is $d(p,q) := \sum_{i=1}^d (p_i - q_i)^2$. We abuse notation to set $d(p, S) := \min_{\mu \in S} d(p, \mu)$. We define
	\begin{align*}
		f_D (S) = \sum_{p \in D} d(p, S),
	\end{align*}
	so when $S$ is a set of size $k$, $f_D(S)$ is the $k$-means cost of the solution $S$ for the dataset $D$.
	
	\subsection{Differential privacy}
	
	There are a couple of closely related definitions of central differential privacy which can be trivially related to each other. For clarity we disambiguate the situation by uniformly adhering to the formalization in \cite{dwork2014algorithmic}.
	
	\begin{definition}
		We say that two datasets $D, D' \in X^n$ are \emph{neighbouring} if $\lvert D \triangle D'\rvert = 1$, i.e. there is exactly one element in their symmetric difference. We say that an algorithm $A$ is $(\epsilon, \delta)$-differentially private if for any two neighbouring input datasets $D, D'$ and any measurable output set $S$ lying in the co-domain of $A$,
		\begin{align*}
			P(A(D) \in S) \leq e^{\epsilon} P(A(D') \in S) + \delta.
		\end{align*}
	\end{definition}
	
	We now mention some standard tools from the literature of differential privacy.
	
	\begin{lemma}[Exponential Mechanism,~\cite{McSherryT07}]\label{lem:expMech}
		Let input set $D\subset X$, range $R$, and utility function $q: X \times R \rightarrow \mathbb{R}$. 
		The {\em Exponential Mechanism} $M_E (D,q,\epsilon_E)$ with privacy parameter $\epsilon_E$ outputs an element $r\in R$ sampled according to the distribution
		\begin{align*} 
			P(M_E (D,q,\epsilon_E) = r) := \exp\left(\frac{\epsilon_E \cdot q(D,r)}{2\Delta q}\right),
		\end{align*}
		where $\Delta q$ is the sensitivity of the utility $q$; i.e. 
		\begin{align*}
		    \Delta q = \max_{\substack{ r \in R \\ \lvert A \Delta B \rvert = 1}} \lvert q(A,r) - q(B,r) \rvert.
		\end{align*}
		The Exponential Mechanism is $(\epsilon_E,0)$-differentially private and with probability at least $1-\gamma$,
		\begin{align*}
    		\lvert \max\limits_{r\in R} q(D,r) - q(D, M_E (D,q,\epsilon_E))\rvert \leq \frac{2\Delta q}{\epsilon_E}\log\left(\frac{\lvert R\rvert }{\gamma}\right).
		\end{align*}
	\end{lemma}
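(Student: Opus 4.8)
The plan is to prove the two assertions — $(\epsilon_E,0)$-privacy and the high-probability utility guarantee — separately by the standard argument, the only bookkeeping point being that the displayed formula for $P(M_E(D,q,\epsilon_E)=r)$ is unnormalized; so throughout I introduce the partition function $Z_D := \sum_{r'\in R}\exp\!\left(\frac{\epsilon_E q(D,r')}{2\Delta q}\right)$ and work with the actual density $P(M_E(D,q,\epsilon_E)=r) = \exp\!\left(\frac{\epsilon_E q(D,r)}{2\Delta q}\right)/Z_D$. (I treat $R$ as discrete, as in the applications; the continuous case is identical with a base measure in place of counting measure.)

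\textbf{Privacy.} Fix neighbouring datasets $D,D'$ and an arbitrary $r\in R$. Expanding the ratio of densities,
\begin{align*}
\frac{P(M_E(D,q,\epsilon_E)=r)}{P(M_E(D',q,\epsilon_E)=r)} = \exp\!\left(\frac{\epsilon_E\,(q(D,r)-q(D',r))}{2\Delta q}\right)\cdot\frac{Z_{D'}}{Z_D}.
\end{align*}
The first factor is at most $\exp(\epsilon_E/2)$ since $\lvert q(D,r)-q(D',r)\rvert\le\Delta q$ by the definition of sensitivity. For the second factor, applying the same sensitivity bound termwise gives $\exp\!\left(\frac{\epsilon_E q(D',r')}{2\Delta q}\right)\le\exp(\epsilon_E/2)\exp\!\left(\frac{\epsilon_E q(D,r')}{2\Delta q}\right)$ for every $r'$, hence $Z_{D'}\le\exp(\epsilon_E/2)\,Z_D$. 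Multiplying, the density ratio is at most $\exp(\epsilon_E)$ for every $r$; summing this pointwise bound over any measurable $S\subseteq R$ yields $P(M_E(D)\in S)\le e^{\epsilon_E}P(M_E(D')\in S)$, which is exactly $(\epsilon_E,0)$-differential privacy with no additive slack.

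\textbf{Utility.} Set $q^\ast := \max_{r\in R}q(D,r)$ and $t := \frac{2\Delta q}{\epsilon_E}\log(\lvert R\rvert/\gamma)$. Since $q(D,M_E(D))\le q^\ast$ deterministically, it suffices to upper bound $P(q(D,M_E(D))\le q^\ast-t)$. Its numerator is $\sum_{r:\,q(D,r)\le q^\ast-t}\exp\!\left(\frac{\epsilon_E q(D,r)}{2\Delta q}\right)\le\lvert R\rvert\exp\!\left(\frac{\epsilon_E(q^\ast-t)}{2\Delta q}\right)$, and its denominator $Z_D$ is bounded below by the single term corresponding to an $r$ attaining $q^\ast$, namely $\exp\!\left(\frac{\epsilon_E q^\ast}{2\Delta q}\right)$. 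Dividing, $P(q(D,M_E(D))\le q^\ast-t)\le\lvert R\rvert\exp\!\left(-\frac{\epsilon_E t}{2\Delta q}\right)=\lvert R\rvert\cdot\frac{\gamma}{\lvert R\rvert}=\gamma$ by the choice of $t$. Hence with probability at least $1-\gamma$ we have $q(D,M_E(D))>q^\ast-t$, i.e. $\lvert q^\ast-q(D,M_E(D))\rvert\le t$, which is the claimed bound.

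Both parts reduce to elementary manipulations of the exponential weights, so there is no real obstacle; the one place to be careful is to avoid double-counting the $\exp(\epsilon_E/2)$ loss — the sensitivity bound is used once in the numerator and, separately, once more to compare the two normalizing constants $Z_D$ and $Z_{D'}$, and it is exactly the product of these two halves that yields $e^{\epsilon_E}$ (rather than $e^{\epsilon_E/2}$), which is why the $2\Delta q$ appears in the denominator of the exponent in the first place.
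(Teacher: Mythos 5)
Your proof is correct: the paper states this lemma as a known preliminary, citing McSherry--Talwar, and gives no proof of its own, and your argument is exactly the standard one from that literature — bounding the density ratio by splitting the $e^{\epsilon_E}$ loss between the numerator weight and the normalizing constants, and obtaining the utility bound by comparing the total weight of $t$-suboptimal outcomes against the weight of a maximizer. Your observation that the displayed sampling formula in the statement is unnormalized, and your explicit introduction of the partition function $Z_D$, is a sensible clarification rather than a deviation; nothing further is needed.
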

	
	\begin{lemma}[Laplace mechanism, \cite{dwork2006calibrating}]\label{lem:lapMech}
		Given any function $f:\mathbb{N}^{\lvert X \rvert} \to \mathbb{R}^k$, the Laplace mechanism with privacy parameter $\epsilon_L$ is defined as
		\begin{align*}
			M_L (x, f(\cdot), \epsilon_L) := f(x) + (Y_1, \dots, Y_k)
		\end{align*}
		where $Y_i$ are i.i.d. $\sim \Lap \left( \frac{\Delta f}{\epsilon_L}\right)$, $\Lap (x|b) = \frac{1}{2b} \exp\left(-\frac{\lvert x \rvert}{b} \right)$ and $\Delta f = \max_{\lvert X \triangle Y\rvert = 1} \lvert f(X) - f(Y) \rvert$, i.e. the $\ell_1$ sensitivity of $f$ over all pairs of neighbouring datasets. The Laplace mechanism is $(\epsilon_L, 0)$-differentially private.
	\end{lemma}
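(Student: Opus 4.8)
The plan is to compare the probability density functions of $M_L(x, f(\cdot), \epsilon_L)$ and $M_L(x', f(\cdot), \epsilon_L)$ pointwise for an arbitrary pair of neighbouring datasets $x, x'$, show that their ratio is bounded everywhere by $e^{\epsilon_L}$, and then integrate this bound over an arbitrary measurable set to recover the definition of $(\epsilon_L, 0)$-differential privacy (with $\delta = 0$). This is the standard argument, so I expect no genuine obstacle; the work is essentially one application of the triangle inequality.

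First I would write out the density. Since the coordinates $Y_i$ are independent $\Lap(\Delta f/\epsilon_L)$ variables, the output $M_L(x, f(\cdot), \epsilon_L)$ has density at a point $z = (z_1, \dots, z_k) \in \mathbb{R}^k$ equal to $p_x(z) = \prod_{i=1}^k \frac{\epsilon_L}{2\Delta f}\exp\!\left(-\frac{\epsilon_L \lvert z_i - f(x)_i \rvert}{\Delta f}\right)$, and likewise $p_{x'}(z)$ for $x'$. Taking the ratio, the normalising constants cancel and I obtain $\frac{p_x(z)}{p_{x'}(z)} = \prod_{i=1}^k \exp\!\left(\frac{\epsilon_L\bigl(\lvert z_i - f(x')_i \rvert - \lvert z_i - f(x)_i \rvert\bigr)}{\Delta f}\right)$.

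Next I would bound each factor with the reverse triangle inequality, $\lvert z_i - f(x')_i \rvert - \lvert z_i - f(x)_i \rvert \le \lvert f(x)_i - f(x')_i \rvert$, so that summing over the coordinates in the exponent gives $\frac{p_x(z)}{p_{x'}(z)} \le \exp\!\left(\frac{\epsilon_L \lVert f(x) - f(x')\rVert_1}{\Delta f}\right) \le e^{\epsilon_L}$, where the last inequality uses that $\lVert f(x) - f(x')\rVert_1 \le \Delta f$ by the definition of the $\ell_1$ sensitivity over neighbouring datasets (reading $\Delta f$ as that $\ell_1$ sensitivity, as the statement clarifies). Since this holds for every $z$ simultaneously, for any measurable $S \subseteq \mathbb{R}^k$ we get $P(M_L(x, f(\cdot), \epsilon_L) \in S) = \int_S p_x(z)\,dz \le e^{\epsilon_L}\int_S p_{x'}(z)\,dz = e^{\epsilon_L} P(M_L(x', f(\cdot), \epsilon_L) \in S)$, which is exactly the required inequality. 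The only minor point to address is the degenerate case $\Delta f = 0$, where $f$ is constant on neighbouring inputs so the two output distributions coincide and the claim is trivial; alternatively it follows as the limit $\Delta f \to 0^+$.
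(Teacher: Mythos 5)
Your proof is correct and is the standard argument for the Laplace mechanism (pointwise density ratio, reverse triangle inequality, bound by the $\ell_1$ sensitivity, then integrate); the paper itself states this lemma without proof, citing \cite{dwork2006calibrating}, and your argument matches the canonical proof given there. The handling of the degenerate case $\Delta f = 0$ is a nice touch and nothing is missing.
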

	
	\begin{theorem}[Basic composition, \cite{dwork2009differential}]\label{thm:basicComp}
		If a sequence of algorithms $M_i$ with $(\epsilon_i, \delta_i)$-differential privacy guarantees are composed in order then the composite process satisfies $(\sum_{i=1}^m \epsilon_i , \sum_{i=1}^m \delta_i)$-differential privacy.
	\end{theorem}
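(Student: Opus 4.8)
The plan is to reduce the general $m$-fold claim to the case $m=2$ by a one-line induction — treating $(M_1,\dots,M_{m-1})$ as a single mechanism that is $(\sum_{i<m}\epsilon_i,\sum_{i<m}\delta_i)$-differentially private by the inductive hypothesis and composing it with $M_m$ — and to prove the $m=2$ case by passing to the \emph{privacy loss random variable}. Fix neighbouring datasets $D,D'$ and, for the first mechanism, let $P=M_1(D)$ and $Q=M_1(D')$ be the output distributions, with densities $p,q$ with respect to a common dominating measure; write $L(r):=\ln\frac{p(r)}{q(r)}$ for the privacy loss at output $r$ (with the convention $L=+\infty$ where $q$ vanishes, a set of $P$-mass at most $\delta_1$ by privacy of $M_1$). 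The reformulation I would build everything on is the identity
\begin{align*}
E_\epsilon(P\,\|\,Q):=\sup_{S}\big(P(S)-e^{\epsilon}Q(S)\big)=\ex{r\sim P}{\big(1-e^{\epsilon-L(r)}\big)^{+}},
\end{align*}
which follows because the supremum is attained at the ``bad set'' $S=\{p>e^{\epsilon}q\}$ and there $p-e^{\epsilon}q=p\,(1-e^{\epsilon-L})$. Since $(\epsilon,\delta)$-differential privacy is by definition the statement that $E_\epsilon(A(D)\,\|\,A(D'))\le\delta$ for every pair of neighbours, the whole theorem is now a statement purely about these divergences.

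The next step is to use that $M_1$ and $M_2$ draw independent internal randomness: the joint output density of $M(D)=(M_1(D),M_2(D))$ factors, so its privacy loss is \emph{additive}, $L^{(12)}_{D\to D'}(r_1,r_2)=L^{(1)}_{D\to D'}(r_1)+L^{(2)}_{D\to D'}(r_2)$. Writing $\phi(t):=(1-e^{-t})^{+}$, the pivotal elementary fact is the pointwise subadditivity
\begin{align*}
\phi(a+b)\le\phi(a)+\phi(b)\qquad\text{for all }a,b\in\R,
\end{align*}
which is immediate when $a$ or $b$ is $\le 0$ (one term vanishes and $\phi$ is nondecreasing) and which for $a,b\ge 0$ is just concavity of $1-e^{-t}$ together with $\phi(0)=0$. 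Applying it with $a=L^{(1)}-\epsilon_1$ and $b=L^{(2)}-\epsilon_2$, taking expectation over the independent draws $r_1\sim M_1(D)$, $r_2\sim M_2(D)$, and splitting the expectation along the product structure yields
\begin{align*}
E_{\epsilon_1+\epsilon_2}\big(M(D)\,\|\,M(D')\big)
&\le \ex{r_1\sim M_1(D)}{\phi\big(L^{(1)}-\epsilon_1\big)}+\ex{r_2\sim M_2(D)}{\phi\big(L^{(2)}-\epsilon_2\big)}\\
&= E_{\epsilon_1}\big(M_1(D)\,\|\,M_1(D')\big)+E_{\epsilon_2}\big(M_2(D)\,\|\,M_2(D')\big)\\
&\le \delta_1+\delta_2 .
\end{align*}
The reverse direction ($D\leftrightarrow D'$) is word for word the same, so $M$ is $(\epsilon_1+\epsilon_2,\delta_1+\delta_2)$-differentially private and the induction closes. (If one wants the adaptive version where $M_2$ is allowed to see the output of $M_1$, the same computation goes through with $L^{(2)}$ replaced by the privacy loss of $M_2(\cdot,r_1)$ and its expectation taken conditionally on $r_1$, the conditional bound $\delta_2$ then being integrated out.)

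I expect the only real obstacle to be getting the additive slack to equal exactly $\delta_1+\delta_2$. The tempting ``condition-and-peel'' proof — fix the output $r_1$ of $M_1$, apply $(\epsilon_2,\delta_2)$-privacy of $M_2$ to each slice $S_{r_1}$ of the event, then apply $(\epsilon_1,\delta_1)$-privacy of $M_1$ to the resulting $[0,1]$-valued function of $r_1$ — only delivers the weaker bound $\delta_1+e^{\epsilon_1}\delta_2$ (or $e^{\epsilon_2}\delta_1+\delta_2$, depending on the order in which one peels), because the $e^{\epsilon_i}$ generated by the outer application multiplies the $\delta_j$ from the inner one. Routing everything through $E_\epsilon(P\,\|\,Q)=\ex{}{\phi(L-\epsilon)}$ and the subadditivity of $\phi$ is precisely the device that keeps any $e^{\epsilon_i}$ from ever touching a $\delta_j$. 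The remaining points are purely mechanical: the $m$-fold induction, and the measure-theoretic bookkeeping when the output spaces are continuous and densities can vanish so that $L$ takes the value $+\infty$, handled uniformly by working with a common dominating measure and checking that both the identity for $E_\epsilon$ and the subadditivity of $\phi$ survive the $+\infty$ conventions.
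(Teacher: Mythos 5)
Your argument is correct, but note that the paper does not prove this statement at all: it is quoted as a known tool with a citation to Dwork et al., so there is no in-paper proof to compare against. What you supply is a genuine, self-contained derivation of the \emph{tight} form of basic composition with additive $\delta$'s, which is exactly the part that the folklore ``condition-and-peel'' argument misses (that route indeed only yields $\delta_1+e^{\epsilon_1}\delta_2$ or $e^{\epsilon_2}\delta_1+\delta_2$). The key steps all check out: the hockey-stick identity $\sup_S\bigl(P(S)-e^{\epsilon}Q(S)\bigr)=\ex{r\sim P}{(1-e^{\epsilon-L(r)})^{+}}$ (with the $L=+\infty$ convention contributing mass $p$ exactly as it should), the equivalence of $(\epsilon,\delta)$-DP with this divergence being at most $\delta$ in both directions $D\to D'$ and $D'\to D$, the additivity of the privacy loss for independently randomized mechanisms, and the pointwise subadditivity $\phi(a+b)\le\phi(a)+\phi(b)$ for $\phi(t)=(1-e^{-t})^{+}$ (your three cases are complete, since $\phi$ is nondecreasing, vanishes on $(-\infty,0]$, and is concave with $\phi(0)=0$ on $[0,\infty)$). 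Linearity of expectation over the product measure then gives exactly $\delta_1+\delta_2$, and the induction to general $m$ is immediate. Your parenthetical treatment of the adaptive case is also right: with $L^{(2)}$ replaced by the loss of $M_2(\cdot,r_1)$, the second term becomes $\ex{r_1}{E_{\epsilon_2}\bigl(M_2(D,r_1)\,\Vert\,M_2(D',r_1)\bigr)}\le\delta_2$, which is the version actually needed downstream in the paper (e.g.\ in the proof of Theorem~\ref{thm:finalPrivacy}, where composed stages are adaptive). So: correct, and strictly more informative than the paper's citation-only treatment.
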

	
	We state the parallel composition theorem from \cite{mcsherry2009privacy}, slightly modifying the proof to extend this result to be able to use $(\epsilon, \delta)$-differentially private subroutines where $\delta\not= 0$.
	
	\begin{theorem}[Parallel composition, \cite{mcsherry2009privacy}]\label{thm:parallelComp}
		Let $M_i$ each provide $(\epsilon_i, \delta_i)$-differential privacy. Let $\{ D_i : i \in \Lambda \}$ be arbitrary disjoint subsets of the input domain $D$. For input dataset $X$, the sequence of $M_i (X \cap D_i)$ provides $(\max_i \epsilon_i, \max_i \delta_i)$differential privacy.
	\end{theorem}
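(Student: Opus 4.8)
The plan is to follow the original argument of \cite{mcsherry2009privacy}, taking care to propagate the $\delta_i$ terms correctly through the product structure. Fix two neighbouring inputs $X$ and $X'$ with $\lvert X \triangle X' \rvert = 1$, and let $e$ be the unique element of $X \triangle X'$. Since the sets $\{D_i : i \in \Lambda\}$ are pairwise disjoint, $e$ lies in at most one of them; say $e \in D_j$ if such an index exists (otherwise all inputs $X \cap D_i = X' \cap D_i$ coincide, the composed output distributions are identical, and the claim is immediate). For every $i \neq j$ we have $X \cap D_i = X' \cap D_i$, so $M_i$ receives exactly the same input on $X$ and on $X'$; for $i = j$, the inputs $X \cap D_j$ and $X' \cap D_j$ differ in exactly the element $e$, hence are themselves neighbouring, so the $(\epsilon_j, \delta_j)$-privacy of $M_j$ applies to this pair.

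Next I would exploit independence. The $M_i$ are run on disjoint portions of the data with independent internal randomness, so $M(X) := (M_i(X \cap D_i))_{i \in \Lambda}$ has a product distribution $P = \bigotimes_i \mu_i^X$ on the product range $R = \prod_i R_i$, and similarly $P' = \bigotimes_i \mu_i^{X'}$ for $M(X')$. By the previous paragraph $\mu_i^X = \mu_i^{X'} =: \mu_i$ for all $i \neq j$, so $P$ and $P'$ share every factor except the $j$-th; write $\nu := \mu_j^X$ and $\nu' := \mu_j^{X'}$, so that $\nu(T) \le e^{\epsilon_j} \nu'(T) + \delta_j$ for every measurable $T \subseteq R_j$ (and symmetrically). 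Let $\rho := \bigotimes_{i \neq j} \mu_i$ be the common marginal on $\prod_{i \neq j} R_i$.

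Now fix a measurable event $S \subseteq R$. For each point $y$ in the support of $\rho$ let $S_y := \{ z \in R_j : (y,z) \in S \}$ be the corresponding section. By Tonelli's theorem, $P(S) = \int \nu(S_y)\, d\rho(y)$, and applying the $(\epsilon_j,\delta_j)$-guarantee of $M_j$ pointwise inside the integral gives
\[
P(S) = \int \nu(S_y)\, d\rho(y) \le \int \bigl( e^{\epsilon_j}\nu'(S_y) + \delta_j \bigr)\, d\rho(y) = e^{\epsilon_j} P'(S) + \delta_j .
\]
Since $\epsilon_j \le \max_i \epsilon_i$ and $\delta_j \le \max_i \delta_i$, this yields $P(S) \le e^{\max_i \epsilon_i} P'(S) + \max_i \delta_i$. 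The roles of $X$ and $X'$ are symmetric (neighbouring is a symmetric relation, and $M_j$'s guarantee holds in both directions), so the same inequality holds with $X$ and $X'$ exchanged, establishing $(\max_i \epsilon_i, \max_i \delta_i)$-differential privacy of the composed mechanism.

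The only real subtlety — and the reason this is not just a one-line corollary of \Cref{thm:basicComp} — is the presence of nonzero $\delta_i$: one cannot simply multiply the per-coordinate approximate-DP inequalities, so the Fubini/sectioning step that isolates the single affected coordinate $j$ and integrates the $\delta_j$ slack against the common marginal $\rho$ is doing the essential work. A secondary bookkeeping point is the index set $\Lambda$: for finite $\Lambda$ — the only case used in this paper — the product measures above are elementary, and for countable $\Lambda$ one first invokes the standard infinite-product-measure construction, after which the section argument goes through verbatim.
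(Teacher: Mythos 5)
Your proof is correct, and it shares the paper's central observation — since the $D_i$ are disjoint, the two neighbouring inputs induce identical inputs to every $M_i$ except the single affected block $j$ — but it handles the $\delta$ terms by a genuinely different device. The paper collapses the pointwise likelihood ratio of the output tuple to the ratio for the one differing block and then invokes the informal ``with probability $1-\delta_{i^*}$ the ratio is at most $e^{\epsilon_{i^*}}$'' reading of approximate differential privacy; this is a heuristic characterization rather than the event-based definition, and it is exactly the delicate point the paper alludes to when it says it ``slightly modifies'' McSherry's argument for $\delta \neq 0$. You instead stay with the definition: you factor the joint output law as a product measure (using the same implicit assumption as the paper, namely that the $M_i$ run with independent randomness on their disjoint shares), section an arbitrary measurable event $S$ along the $j$-th coordinate, apply the $(\epsilon_j,\delta_j)$ inequality to each section $S_y$, and integrate against the common marginal $\rho$, so the single $\delta_j$ slack survives intact because $\rho$ has total mass one. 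What your route buys is rigor — it never needs a pointwise-density or high-probability-ratio surrogate for $(\epsilon,\delta)$-DP, and it works verbatim for general measurable ranges — at the cost of invoking Tonelli and the product-measure structure explicitly; the paper's version is shorter and more transparent for readers who accept the pointwise-ratio heuristic, but yours is the argument one would write to make the $\delta \neq 0$ extension airtight. Your closing remarks (symmetry of the neighbouring relation, and the trivial case where the differing element lies in no $D_i$) cover the edge cases the paper leaves implicit.
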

	
	\begin{proof}
		Let $A$ and $B$ be neighbouring datasets, $A_i = A \cap D_i$ and $B_i = B \cap D_i$ for $i \in \Lambda$. Let $M_i$ be $(\epsilon_i, \delta_i)$ differentially private subroutines for $i \in \Lambda$. Since $\lvert A \triangle B \rvert = 1$, there is at most one partition set $D_{i*}$ of the domain such that $A_{i*} \not= B_{i*}$. We bound the ratio of probabilities of the output tuple as
		\begin{align*}
			\frac{P((M_i(A_i))_{i \in \Lambda} = (S_i)_{i \in \Lambda)}}{P((M_i(B_i))_{i \in \Lambda} = (S_i)_{i \in \Lambda})} &= \prod_{i\in\Lambda} \frac{P(M_i(A_i) = S_i}{P(M_i(B_i) = S_i} \\
			&= \frac{P(M_{i^*} (A_{i^*}) = S_{i^*})}{P(M_{i^*}(B_{i^*}) = S_{i^*})}
		\end{align*}
		Since $M_{i^*}$ is $(\epsilon_{i^*}, \delta_{i^*})$-differentially private, it follows that with probability $1 - \delta_{i^*}$, this ratio is bounded by $\exp(\epsilon_i^*)$. Since this is true for every pair of neighbouring datasets, we can summarise this by saying that this sequence of operations is $(\max_i \epsilon_i, \max_i \delta_i)$- differentially private.
	\end{proof}
	
	\subsection{Private averaging}
	
	Here we describe the noisy averaging algorithm from \cite{NSV16} which we use in the last step to derive cluster centers privately from cluster IDs and data without revealing the cluster member vectors or the cluster size. The authors of that work use the Gaussian mechanism in conjunction with a bound on the sensitivity of the averaging operator and addition of Laplace noise to the number of points in the cluster to ensure privacy. This additional step of adding Laplace noise is necessary because in this setting the cluster sizes are derived from sensitive information, and since the Gaussian noise added to mask the exact mean depends upon the cluster size, the noise parameter must itself be masked by additional noise (which is data-independent). 
	
	In \cite{NSV16}, the authors use a slightly different definition of privacy where datasets $D, D'$ are considered neighboring if $D = D' \backslash \{p \} \cup \{ p'\}$ for some $p \in D'$. The statements here are with the privacy parameters modified to fit the definition of differential privacy that we are working with; $\Delta$ is any upper bound on the diameter of the input set.
	
	\begin{algorithm}
		\caption{NoisyAVG\cite[Algorithm 5]{NSV16}}
		\label{alg:noisyAVG}
		\KwData{Multiset $V$ of vectors in $\mathbb{R}^d$, predicate $g$, parameters $\epsilon,\delta$}
		Set $\hat{m} = \lvert \{ v \in V: g(v) = 1 \}\rvert + \Lap (5/\epsilon) - \frac{5}{\epsilon} \ln(2/\delta)$. If $\hat{m} < 0$, output a uniformly random point in the domain $B_{\Delta/2}(0)$.\\
		Denote $\sigma = \frac{5 \Delta}{4\epsilon \hat{m}} \sqrt{2 \ln(3.5/\delta)}$, and let $\eta \in \mathbb{R}^d$ be a random noise vector with each coordinate sampled independently from $N(0,\sigma^2)$. \\
		\KwRet{$g(V) + \eta$}
	\end{algorithm}
	
	\begin{theorem} [Privacy {\cite[Theorem A.3]{NSV16}} and noise {\cite[Observation A.1]{NSV16}} bounds for \cref{alg:noisyAVG}] \label{thm:noisyAVG}
	    \Cref{alg:noisyAVG} is an $(\epsilon, \delta)$-differentially private algorithm for $\epsilon\le 1/3$. Further, if $V$ and $g$ are such that $m = \lvert \{v\in V : g(v) = 1 \} \rvert \ge A\left(\frac{1}{\epsilon} \ln \left(\frac{1}{\beta \delta}\right)\right)$ for sufficiently large constant $A$, then with probability $1-\beta$, \cref{alg:noisyAVG} returns $g(V) + \eta$ where $\eta$ is a vector where every coordinate is sampled i.i.d. from $N(0,\sigma^2)$ for some $\sigma \leq \frac{4 \Delta}{\epsilon m} \sqrt{2 \ln(8/\delta)}$, where $\Delta$ is the diameter of the input set.
	\end{theorem}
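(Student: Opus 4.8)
The final statement is essentially a restatement of Theorem~A.3 and Observation~A.1 of \cite{NSV16}; the only substantive change is that the privacy guarantee is phrased for the symmetric-difference neighbouring relation we use rather than the ``replace one point'' relation of \cite{NSV16}. So the first thing I would do is observe that this change is harmless: under either relation, adding, removing, or replacing a single vector changes the number of $g$-satisfying points by at most $1$, and changes the sum $\sum_{v : g(v)=1} v$ by at most $\Delta$ in $\ell_2$-norm (all points lie in a set of diameter $\Delta$). Hence the sensitivity bounds that drive the \cite{NSV16} analysis carry over verbatim, and only the final bookkeeping of $(\epsilon,\delta)$ changes, by constant factors already absorbed into the constants in \cref{alg:noisyAVG}. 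One may therefore simply cite their result and check this one-line reduction; I will instead sketch the self-contained argument below.

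\textbf{Privacy.} There are two noisy releases to handle, the count $\hat m$ and the average. Since $D\mapsto |\{v\in V: g(v)=1\}|$ has $\ell_1$-sensitivity $1$, \cref{lem:lapMech} shows $D\mapsto\hat m$ is $(\epsilon/5,0)$-differentially private; the deterministic shift by $-\frac{5}{\epsilon}\ln(2/\delta)$ does not affect sensitivity but guarantees, by the Laplace tail bound, that $\hat m\le m$ except with probability at most $\delta/4$ over the Laplace draw. I would then condition on the value of $\hat m$. If $\hat m<0$ the output is data-independent, so assume $\hat m\ge 0$. Writing the effect of inserting a vector $v$ into the current $g$-cluster (of size $m$ and mean $\bar v$) as $\frac{1}{m+1}\bigl(v-\bar v\bigr)$, the average $g(V)$ has $\ell_2$-sensitivity at most $\Delta/(m+1)\le\Delta/m$, which on the event $\hat m\le m$ is at most $\Delta/\hat m$; adding coordinatewise $N(0,\sigma^2)$ noise with $\sigma=\frac{5\Delta}{4\epsilon\hat m}\sqrt{2\ln(3.5/\delta)}$ then makes the conditional release of $g(V)+\eta$ an $(\epsilon/2,\delta/2)$-private mechanism by the standard Gaussian mechanism bound, with the restriction $\epsilon\le 1/3$ supplying the slack needed for the Gaussian constants and the split of the privacy budget to close. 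Composing the count release with the conditional average release via \cref{thm:basicComp} and charging the probability-$\delta/4$ event $\hat m>m$ to the additive slack then yields $(\epsilon,\delta)$-differential privacy overall.

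\textbf{Noise bound.} Suppose $m\ge\frac{A}{\epsilon}\ln\frac{1}{\beta\delta}$ with $A$ a large enough constant. By the two-sided Laplace tail bound, $|\Lap(5/\epsilon)|\le\frac{5}{\epsilon}\ln(1/\beta)$ with probability at least $1-\beta$, so on this event $\hat m\ge m-\frac{5}{\epsilon}\ln(2/\delta)-\frac{5}{\epsilon}\ln(1/\beta)\ge m/2\ge 0$ by the assumed lower bound on $m$. Thus the algorithm does not enter the uniform-output branch, it returns $g(V)+\eta$ with $\eta$ Gaussian as claimed, and
\[
\sigma=\frac{5\Delta}{4\epsilon\hat m}\sqrt{2\ln(3.5/\delta)}\le\frac{5\Delta}{2\epsilon m}\sqrt{2\ln(3.5/\delta)}\le\frac{4\Delta}{\epsilon m}\sqrt{2\ln(8/\delta)}.
\]

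\textbf{Main obstacle.} The delicate part is the privacy of the Gaussian step, since the noise scale $\sigma$ is itself a data-dependent quantity (a function of the noisy count $\hat m$), so one cannot simply invoke the Gaussian mechanism. The point is that $\hat m$ has already been released privately, and conditioned on its value the deterministic shift forces $\hat m\le m$ with all but $\delta/4$ probability, so $\sigma$ is never smaller than what the true sensitivity $\Delta/m$ requires; the rare event where it is too small is absorbed into the additive $\delta$. Getting the particular constants ($5$, $5/4$, $3.5$, and the restriction $\epsilon\le 1/3$) to line up is exactly the content of the \cite{NSV16} proof, which I would either reproduce in full or cite directly.
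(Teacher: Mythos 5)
Your proposal is correct and matches the paper's approach: the paper gives no proof of \cref{thm:noisyAVG} at all, but simply imports Theorem~A.3 and Observation~A.1 of \cite{NSV16} with the constants adjusted for the add/remove (symmetric-difference) neighbouring relation, which is exactly your primary move of citing the result together with the one-line sensitivity reduction. Your additional self-contained sketch is a faithful reconstruction of the \cite{NSV16} argument (the shift forcing $\hat m\le m$ up to probability $\delta/4$, the data-dependent $\sigma$ handled by conditioning on the released count, and the noise-bound computation $\frac{5\Delta}{2\epsilon m}\sqrt{2\ln(3.5/\delta)}\le\frac{4\Delta}{\epsilon m}\sqrt{2\ln(8/\delta)}$), and it goes beyond what the paper itself records, while correctly deferring the exact constants to \cite{NSV16}.
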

	
	\subsection{Other technical tools}
	
	In this subsection we list some technical lemmata that we will find useful to refer to in the main body of this work.
	
	\begin{lemma}[Greedy set cover bicriteria solution guarantee]\label{lem:maxCoverRelax}
		Let there be a set of elements $U$, a family of sets $\mathcal{S} \subset 2^U$, and the promise that there is some subfamily of sets $\Z \subset \mathcal{S}$ that covers $\U$. Suppose that for $Y = 2 \lceil \lvert \Z \rvert \log 1/\epsilon \rceil + 1$ we iteratively pick a collection of sets $\C = \{c_1, \dots, c_Y\} \subset \mathcal{S}$. We denote the set of elements not picked by the $i$th iteration $U_1 = U$ and $U_i = U_{i-1} \backslash c_{i-1}$. If $c_i \cap U_i$ is at least half as big as $\max_{c \in \mathcal{S}} (c \cap U_i)$ for all $i$, then $\mathcal{S}$ will cover $(1-\epsilon)$ of all elements in $U$. Formally, if
		\begin{align*}
		\C &= \{c_i : c_i \in \mathcal{S}, i = 1, \dots, \lceil 2 \lvert\Z \rvert \log 1/\epsilon \rceil \},\\
		U_i &= U \backslash ( c_1 \cup \dots \cup c_{i-1} )
		\end{align*}
		where 
		\begin{align*}
			\left|c_i \cap U_i \right| \geq \frac{\max\limits_{c \in S} \left( c \cap U_i \right) }{2} ,
		\end{align*}
		then $|\bigcup_{i} c_i | \geq (1-\epsilon) |U|$.
	\end{lemma}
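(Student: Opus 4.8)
The plan is to track the size of the uncovered universe $U_i$ across rounds and show it shrinks by a constant factor every $|\Z|$ rounds, so that after the prescribed number of rounds it has fallen below $\epsilon|U|$. The key observation is a greedy-type averaging argument: since $\Z$ covers $U$, it also covers all of $U_i \subseteq U$, so the $|\Z|$ sets of $\Z$ together cover $U_i$, and hence by pigeonhole at least one set $c \in \Z \subseteq \mathcal{S}$ satisfies $|c \cap U_i| \ge |U_i|/|\Z|$. Combined with the hypothesis that $|c_i \cap U_i| \ge \tfrac12 \max_{c \in \mathcal{S}}(c \cap U_i)$, we get $|c_i \cap U_i| \ge |U_i|/(2|\Z|)$, and therefore
\begin{align*}
    |U_{i+1}| = |U_i| - |c_i \cap U_i| \le |U_i|\left(1 - \frac{1}{2|\Z|}\right).
\end{align*}

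Iterating this inequality over $Y = \lceil 2|\Z| \log(1/\epsilon)\rceil$ rounds gives $|U_{Y+1}| \le |U| (1 - 1/(2|\Z|))^{Y}$. Then I would use the standard estimate $(1 - 1/(2|\Z|))^{2|\Z|} \le e^{-1}$ to bound the right-hand side by $|U| \cdot e^{-Y/(2|\Z|)} \le |U| \cdot e^{-\log(1/\epsilon)} = \epsilon |U|$, using $Y \ge 2|\Z|\log(1/\epsilon)$. Since $|\bigcup_i c_i| = |U| - |U_{Y+1}|$ (where the union ranges over the picked sets and restriction to $U$ is implicit because $U_1 = U$), this yields $|\bigcup_i c_i| \ge (1 - \epsilon)|U|$, as claimed. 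A small bookkeeping point to get right is the exact indexing convention — the statement defines $U_i = U \setminus (c_1 \cup \dots \cup c_{i-1})$ and picks $c_i$ for $i = 1, \dots, \lceil 2|\Z|\log(1/\epsilon)\rceil$ — so I would just confirm the off-by-one constants match (the extra ``$+1$'' and ceiling in the definition of $Y$ give slack).

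I do not expect a serious obstacle here; the only mild subtlety is ensuring the pigeonhole step is valid, i.e. that $\Z \subseteq \mathcal{S}$ so that the near-best set $c_i \in \mathcal{S}$ can be compared against the best set of $\Z$ — this is exactly why the hypothesis takes the max over all of $\mathcal{S}$ rather than over $\Z$. The rest is the routine geometric-decay calculation.
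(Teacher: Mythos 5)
Your proposal is correct and matches the paper's own argument: the same pigeonhole/averaging step giving $|c_i \cap U_i| \ge |U_i|/(2|\Z|)$, followed by the same geometric-decay telescoping to conclude that after $\lceil 2|\Z|\log(1/\epsilon)\rceil$ rounds at most an $\epsilon$ fraction of $U$ remains uncovered. The only cosmetic difference is that you bound the decay via $(1-1/(2|\Z|))^{2|\Z|} \le e^{-1}$ while the paper rearranges the inequality with $\log_{1-1/(2|\Z|)}\epsilon$; these are equivalent.
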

	
	\begin{proof}
		The idea behind this proof is simple; it suffices to show that in every iteration we always cover a certain fraction of the thus far uncovered elements. Telescoping this multiplicative guarantee will give us our desired result. Formally, we know that since $\bigcup_{z \in \Z} z = U $, the union $\bigcup_{z \in \Z} z$ also covers the set of unpicked elements $U_{i-1}$. Enumerating the elements of $U_i$ by summing the cardinalities of its intersections with the members of $\Z$, we get
		\begin{align*}
			\sum_{z \in \Z} \lvert z\cap U_{i-1} \rvert & = \lvert U_{i-1} \rvert \\
			\Rightarrow \max_{z \in \Z} \left\lvert z\cap U_{i-1} \right\rvert & \geq \frac{\left\lvert U_{i-1} \right\rvert}{\left\lvert \Z \right\rvert } \\
			\Rightarrow \left\lvert c_i \cap U_{i-1} \right\rvert &\geq \frac{\left\lvert U_{i-1} \right\rvert}{2 \left\lvert \Z \right\rvert } \\
			\Rightarrow \left\lvert c_i \right\rvert &\geq \frac{\left\lvert U_{i-1} \right\rvert}{2 \left\lvert \Z \right\rvert }
		\end{align*}
		Since $U_{i-1} = U_i \sqcup c_i$, 
		\begin{align*}
			\lvert U_i \rvert &\leq \left(1 - \frac{1}{2 \lvert \Z \rvert } \right) \lvert U_{i-1} \rvert \\
			\Rightarrow \lvert U_i \rvert &\leq \left(1 - \frac{1}{2 \lvert \Z \rvert } \right)^{i-1} \lvert U \rvert
		\end{align*}
		It follows that for 
		\begin{align*}
			i-1 > \log_{1 - \frac{1}{2 \lvert \Z \rvert}} \epsilon = \frac{\log \epsilon}{\log 1 - \frac{1}{2 \lvert \Z \rvert}} > \frac{\log \epsilon}{- \frac{1}{2 \lvert \Z \rvert}} = 2 \lvert \Z \rvert \log 1/\epsilon,
		\end{align*}
		$\lvert U_Y \rvert < \epsilon \lvert U \rvert$ and so the size of the complement $\lvert c_1 \cup \dots \cup c_Y \rvert$ is $\geq (1-\epsilon) \lvert U \rvert$.
	\end{proof}
	
	\begin{lemma}\label{lem:triangle}
		In any metric space with a metric $d(\cdot,\cdot)$ and $p\geq 1$, $d^p(a,b) \leq 2^{p-1} (d^p(a,c) + d^p (c,b))$.
	\end{lemma}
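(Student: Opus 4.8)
The plan is to separate the statement into two elementary pieces: a use of the metric axioms, and a purely scalar inequality about the map $t \mapsto t^p$. First I would invoke the triangle inequality for $d$ to get $d(a,b) \le d(a,c) + d(c,b)$. Since $p \ge 1$, the function $t \mapsto t^p$ is nondecreasing on $[0,\infty)$, so raising both sides to the $p$-th power preserves the inequality and gives
\begin{align*}
d^p(a,b) \le \big(d(a,c) + d(c,b)\big)^p .
\end{align*}
Thus it suffices to show that for all reals $x,y \ge 0$ one has $(x+y)^p \le 2^{p-1}(x^p + y^p)$, and then apply this with $x = d(a,c)$ and $y = d(c,b)$.

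For the scalar inequality I would appeal to convexity of $\phi(t) = t^p$ on $[0,\infty)$, which holds for $p \ge 1$ since $\phi''(t) = p(p-1)t^{p-2} \ge 0$. Applying the definition of convexity (equivalently Jensen's inequality) to the midpoint $\tfrac{x+y}{2} = \tfrac12 x + \tfrac12 y$ yields
\begin{align*}
\left(\frac{x+y}{2}\right)^p \le \frac{x^p + y^p}{2},
\end{align*}
and multiplying through by $2^p$ gives exactly $(x+y)^p \le 2^{p-1}(x^p+y^p)$. The degenerate case $x = y = 0$ (equivalently $a = b = c$) makes both sides $0$ and is trivially fine; alternatively one can just note $\phi$ is defined and convex at $0$.

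Chaining the two displays gives $d^p(a,b) \le \big(d(a,c)+d(c,b)\big)^p \le 2^{p-1}\big(d^p(a,c) + d^p(c,b)\big)$, which is the claim. I do not expect any real obstacle here: the only point requiring a word of justification is the convexity of $t \mapsto t^p$ for the full range $p \ge 1$ (handled by the second-derivative computation above, or by citing the power-mean / $\ell_p$-norm inequality on $\mathbb{R}^2$), and the monotonicity step, which is immediate. Everything else is a one-line substitution.
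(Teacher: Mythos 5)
Your proof is correct and follows essentially the same route as the paper: both combine the triangle inequality for $d$ with midpoint convexity (Jensen) of $t \mapsto t^p$ to obtain $(x+y)^p \leq 2^{p-1}(x^p + y^p)$. Your explicit mention of monotonicity of $t^p$ and the second-derivative check of convexity are just minor elaborations of the same argument.
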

	
	\begin{proof}
		Applying Jensen's inequality with the function $g(x) = x^p$ we have
		\begin{align*}
			\left( \frac{d (a,c)}{2} + \frac{d (c,b)}{2} \right)^p \leq \frac{d^p (a,c) + d^p (c,b)}{2}.
		\end{align*}
		Since $d$ is a metric, by the triangle inequality
		\begin{align*}
			d(a,b) &\leq d(a,c) + d(c,b)\\
			\Rightarrow \frac{d^p(a,b)}{2^p} & \leq \left( \frac{d (a,c)}{2} + \frac{d (c,b)}{2} \right)^p \\
			&\leq \frac{d^p (a,c) + d^p (c,b)}{2} \\
			\Rightarrow d^p (a,b) &\leq 2^{p-1} (d^p(a,c) + d^p (c,b))
		\end{align*}
	\end{proof}
	
	\begin{lemma}[Concentration bound for privacy analysis, \cite{gupta2010differentially}]\label{lem:gupta}
		Let $R_1, \dots, R_n$ be some Bernoulli random variables, $R_i \sim \mbox{Ber}(p_i)$, i.e. $R_i = 1$ with probability $p_i$ and $0$ with probability $1-p_i$, where $R_i$ may depend arbitrarily on $R_1, \dots R_{i-1}$. Let $Z_j = \prod_{i=1}^j (1 - R_i)$. Then
		\begin{align*}
			P\left(\sum_{i=1}^n p_i Z_i > q \right) \leq \exp(-q).
		\end{align*}
	\end{lemma}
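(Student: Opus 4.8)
The plan is to package $\sum_{i=1}^n p_i Z_i$ into a supermartingale and then extract the tail bound via the optional stopping theorem and Markov's inequality. Work with the filtration $\mathcal{F}_j = \sigma(R_1,\dots,R_j)$, with respect to which each $p_i$ is $\mathcal{F}_{i-1}$-measurable, set $S_j = \sum_{i=1}^j p_i Z_i$ (with the conventions $S_0 = 0$, $Z_0 = 1$), and define the process $X_j := e^{S_j} Z_j$, so that $X_0 = 1$ and $X_j \ge 0$ for all $j$.

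First I would record the structural facts: each $1 - R_i$ lies in $\{0,1\}$, so $Z_j \in \{0,1\}$, the sequence $Z_j = Z_{j-1}(1-R_j)$ is nonincreasing and absorbing at $0$, and consequently $S_j$ is nondecreasing with $S_j = S_{j-1} + p_j Z_j$. Next I would check that $(X_j)$ is a supermartingale. On the event $\{Z_{j-1} = 0\}$ we get $X_j = X_{j-1} = 0$. On $\{Z_{j-1} = 1\}$ we get $X_j = e^{S_{j-1}} e^{p_j(1-R_j)}(1-R_j)$, and conditioning on $\mathcal{F}_{j-1}$ (under which $S_{j-1}$ and $p_j$ are fixed) gives $\E[X_j \mid \mathcal{F}_{j-1}] = e^{S_{j-1}}(1-p_j)e^{p_j} \le e^{S_{j-1}} = X_{j-1}$, the inequality being just $1 - x \le e^{-x}$. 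Hence $\E[X_j \mid \mathcal{F}_{j-1}] \le X_{j-1}$ in all cases.

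Now introduce the stopping time $\tau := \min\{ j \le n : S_j > q\}$, with the convention $\tau = n$ if $S_j \le q$ for every $j \le n$; this is a bounded stopping time since each event $\{S_j > q\}$ is $\mathcal{F}_j$-measurable. Optional stopping yields $\E[X_\tau] \le \E[X_0] = 1$. To link this with the claim, assume $q \ge 0$ (the case $q \le 0$ is trivial since then $e^{-q} \ge 1$), and work on the event $\{S_n > q\}$. Since $S_0 = 0 \le q$ and $S$ is nondecreasing, $\tau$ is exactly the first time $S$ rises above $q$, so $\tau \ge 1$ and the increment $S_\tau - S_{\tau - 1} = p_\tau Z_\tau$ at that step is strictly positive, which forces $Z_\tau = 1$; therefore $X_\tau = e^{S_\tau} > e^q$ on $\{S_n > q\}$. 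Combining, $1 \ge \E[X_\tau] \ge \E[X_\tau \mathbf{1}\{S_n > q\}] \ge e^q\, P(S_n > q)$, which rearranges to $P\big(\sum_{i=1}^n p_i Z_i > q\big) \le e^{-q}$.

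The step I expect to need the most care is this last linkage rather than the supermartingale computation: the naive process $e^{S_j}$ (without the $Z_j$ factor) is \emph{not} a supermartingale — indeed $\E[e^{p_j(1-R_j)} \mid \mathcal{F}_{j-1}] = p_j + (1-p_j)e^{p_j} > 1$ — so one cannot apply Markov to $e^{S_n}$ directly, and the argument only goes through because once $S$ crosses $q$ it necessarily does so while $Z$ still equals $1$, which is precisely what lets $X_\tau$ witness the crossing. A secondary point worth stating explicitly is that the $p_i$ may be arbitrary (measurable) functions of $R_1,\dots,R_{i-1}$; this causes no difficulty because every computation above is a conditional expectation given $\mathcal{F}_{j-1}$, relative to which $p_j$ behaves as a constant. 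Everything else reduces to $1-x \le e^{-x}$ and Doob's optional stopping theorem for a bounded stopping time.
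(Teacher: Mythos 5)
Your proof is correct. Note that the paper itself does not prove this lemma at all --- it is imported verbatim from Gupta et al.\ \cite{gupta2010differentially} and used as a black box in \cref{lem:loopPrivacyLoss} --- so there is no in-paper argument to match; what matters is that your self-contained derivation is sound, and it is: $X_j = e^{S_j}Z_j$ is indeed a nonnegative supermartingale (the case split on $Z_{j-1}$ and the inequality $(1-p_j)e^{p_j}\le 1$ are exactly right, and the adaptivity of the $p_j$ is harmless because they are $\mathcal{F}_{j-1}$-measurable), optional stopping at the bounded time $\tau$ gives $\E{}{X_\tau}\le 1$, and the key observation that a crossing of level $q$ can only occur at a step where $Z_\tau=1$ correctly turns $X_\tau$ into a witness of value $>e^q$. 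Your remark that plain Markov on $e^{S_n}$ cannot work is also accurate --- in the regime of many small $p_i$ the tail is essentially exactly $e^{-q}$, so $\E{}{e^{S_n}}$ is not bounded by any useful constant, and the compensating factor $Z_j$ (or an equivalent device) is genuinely needed. For comparison, the argument in the cited source can be run more elementarily as an induction on $n$: conditioning on $R_1$, either $R_1=1$ and the whole sum is $0$, or (with probability $1-p_1$) the sum is $p_1$ plus an $(n-1)$-round instance, giving $P(S_n>q)\le (1-p_1)e^{-(q-p_1)}\le e^{-q}$; this induction and your supermartingale both rest on the same two facts, $1-p\le e^{-p}$ and the absorption of $Z$ at $0$, so your route is a slightly heavier but equivalent packaging of the same idea, with the small benefit that the stopping-time formulation makes explicit why the constant in front of $e^{-q}$ is exactly $1$.
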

	
	\section{Lower bounds}\label{sec:lowerBounds}

Following the construction in theorem 4.4 of \cite{gupta2010differentially}, we derive lower bounds for the $k$-means clustering problem in the $(\epsilon,\delta)$ and $(\epsilon,0)$-differential privacy regimes. 

\begin{theorem}
        For any $0<\epsilon_p, \delta_p \le 1$ and integer $k$, there is a family of $k$-means instances over the cube $[0,\Delta/\sqrt{d}]^d$ with $d=O(\ln(k/(\epsilon_p\delta_p)))$ dimensions such that the optimal clustering cost is $0$ but any $(\epsilon_p, \delta_p)$-differentially private algorithm would incur an expected cost of $\Omega\left(\frac{\Delta^2 k \ln(\epsilon_p/\delta_p)}{\epsilon_p}\right)$.
    \end{theorem}
    
	\begin{proof}
	    Let the ambient dimension $d=\Theta(\ln(k/((e^{\epsilon_p}-1)\delta_p))$ and $W$ be the set of codewords of an error correcting code with constant rate and constant relative distance in $\{0,1\}^d$. The dimension $d$ and codewords $W$ are chosen so that $|W| \ge k/((e^{\epsilon_p}-1)\delta_p)$. Let $L = \ln((e^{\epsilon_p}-1)/(4\delta_p))/(2\epsilon_p)$. Our input domain is the unit cube $[0,1]^d$ with diameter $\Delta=\sqrt{d}$. Note that for other values of $\Delta$, we can simply re-scale the construction.
	    
	    Suppose $M$ is an arbitrary $(\epsilon_p, \delta_p)$-differentially private algorithm that on input $D \subset [0,1]^d$ outputs a set of $k$ locations. Let $M'$ be the algorithm that first runs $M$ on the input and then snaps each output point to the nearest point in $W$. By post-processing, $M'$ is $(\epsilon_p, \delta_p)$-differentially private. Furthermore, observe that if the input points are located at a subset of $W$ then the cost of $M'$ is within a factor $4$ of the cost of $M$. Let $A$ be a size $k$ subset of $W$ chosen uniformly at random and the dataset $D_A$ is a multiset containing each point in $A$ with multiplicity $L$. Note that the optimal cost for $D_A$ is $0$.
	    
	    We would like to analyze $\phi = \E_{A,M'}[|A \cap M'(D_A)|]/k$. We have:
	    \begin{align*}
	    	k\phi &= \E_{A, M'} \left[ \sum_{i\in A} 1_{i\in M'(D_A)}\right]  \\
	    	&= k\E_{A, M'}\E_{i\in A}[1_{i\in M'(D_A)}]\\
	    	&= k\E_{i\in W}\E_{A, M'}[1_{i\in M'(D_A)} | i\in A]
	    \end{align*}
	    Let $i'$ be an random point in $W$ not in $A$. Changing $A$ to $A' = A\setminus\{i\} \cup \{i'\}$ requires changing $2L$ elements of $D_A$. Notice that for random $A\setminus\{i\}$ in $W\setminus\{i\}$ and random $i'$ in $W\setminus A$, we have that $A'$ is still a uniformly random subset of $W\setminus\{i\}$.
	    Thus, 
	    \begin{align*}
            \E_{i\in W}\E_{A', M'}[1_{i\in M'(D_{A'})} | i\not\in A'] \ge \left(\E_{i\in W}\E_{A, M'}[1_{i\in M'(D_{A})} | i\in A]\right)\exp(-\epsilon_p \cdot 2L) -  \frac{\delta_p}{e^{\epsilon_p}-1}
	    \end{align*}
	    Here we use the fact that $M'$ is $(\epsilon_p, \delta_p)$-differentially private, and that the $\delta_p$ losses in expectation decrease geometrically with factor $\exp(-\epsilon_p)$ so the net leakage from the $\delta$ term can be lower bounded by the sum of an infinite geometric progression. Continuing,
	    \begin{align*}
	    	\E_{i\in W}\E_{A', M'}[1_{i\in M'(D_{A'})}] &\ge \phi \exp(-\epsilon_p \cdot 2L)- \delta_p/(e^{\epsilon_p}-1) \\
	    	&\ge 4\phi\delta_p/(e^{\epsilon_p}-1) -\delta_p/(e^{\epsilon_p}-1)
	    \end{align*}
	    
	    Since the output $M'(D_{A'})$ has at most $k$ points, the LHS is at most $k/|W|$. Thus, $\phi \le (k/|W| +\delta_p/(e^{\epsilon_p}-1))/(4\delta_p/(e^{\epsilon_p}-1)) \le 1/2$. 
	    
	    For each point in $A\setminus M'(D_A)$, the algorithm incurs a cost of $\Theta(L\Delta^2)$ due to the multiplicity of $L$ of points in $D_A$ and the fact that all points in $W$ are at distance $\Theta(\Delta)$ apart. Therefore, the expected cost of $M'$, and consequently the cost of $M$, is $\Omega(kL\Delta^2)=\Omega\left(\frac{\Delta^2 k\ln(\epsilon_p/\delta_p)}{\epsilon_p}\right)$.
	\end{proof}
	
	By the same proof, one can also obtain a lower bound for $(\epsilon_p,0)$-differential privacy.
    \begin{theorem}
        For any $0<\epsilon_p \le 1$ and integers $k$ and $d=\Omega(\ln(k))$, there is a family of $k$-means instances over the cube $[0,\Delta/\sqrt{d}]^d$ such that the optimal clustering cost is $0$ but any $(\epsilon_p, 0)$-differentially private algorithm would incur an expected cost of $\Omega\left(\frac{\Delta^2 k d}{\epsilon_p}\right)$.
    \end{theorem}
    	
	\begin{proof}
	    Let $W$ be the set of codewords of an error correcting code with constant relative rate and constant relative distance in $\{0,1\}^d$. Note that $|W| = 2^{\Omega(d)}$. Let $L = \ln(|W|/(2k))/(2\epsilon_p)$. Our input domain is the unit cube $[0,1]^d$ with diameter $\Delta=\sqrt{d}$. Note that for other values of $\Delta$, we can simply re-scale the construction.
	    
	    Suppose $M$ is an arbitrary $(\epsilon_p, 0)$-differentially private algorithm that on input $D \subset [0,1]^d$ outputs a set of $k$ locations. Let $M'$ be the algorithm that first runs $M$ on the input and then snaps each output point to the nearest point in $W$. By post-processing, $M'$ is $(\epsilon_p, 0)$-differentially private. Furthermore, observe that if the input points are located at a subset of $W$ then the cost of $M'$ is within a factor $4$ of the cost of $M$. Let $A$ be a size $k$ subset of $W$ chosen uniformly at random and the dataset $D_A$ is a multiset containing each point in $A$ with multiplicity $L$. Note that the optimal cost for $D_A$ is $0$.
	    
	    We would like to analyze $\phi = \E_{A,M'}[|A \cap M'(D_A)|]/k$. We have:
	    \begin{align*}
	    	k\phi &= \E_{A, M'} \left[ \sum_{i\in A} 1_{i\in M'(D_A)}\right]  \\
	    	&= k\E_{i\in W}\E_{A, M'}[1_{i\in M'(D_A)} | i\in A]
	    \end{align*}
	    Let $i'$ be an arbitrary point in $W$ not in $A$. Changing $A$ to $A' = A\setminus\{i\} \cup \{i'\}$ requires changing $2L$ elements of $D_A$. Thus,
	    \[
	    	\E_{i\in W}\E_{A', M'}[1_{i\in M'(D_{A'})}] \ge \phi \exp(-\epsilon_p \cdot 2L)
	    \]
	    
	    Since the output $M'(D_{A'})$ has at most $k$ points, the LHS is at most $k/|W|$. Thus, $\phi \le (k/|W|) \exp(2L\epsilon_p) \le 1/2$. 
	    
	    For each point in $A\setminus M'(D_A)$, the algorithm incurs a cost of $\Theta(L\Delta^2)$ due to the multiplicity of $L$ of points in $D_A$ and the fact that all points in $W$ are at distance $\Theta(\Delta)$ apart. Therefore, the expected cost of $M'$, and consequently the cost of $M$, is $\Omega(kL\Delta^2)=\Omega\left(\frac{\Delta^2 kd}{\epsilon_p}\right)$.
	\end{proof}
    
    \section{Algorithm}
	
	We introduce some notation to make the analysis of \cref{alg:privateKMeans,alg:privateGridSetCover} easier.
	\begin{itemize}
		\item $\OPT_D$, optimal solution: We let $\OPT_D$ be the lexicographically minimal optimal $k$-means set for the dataset $D$. The lexicographic minimality is just for uniqueness, it has no other significance.
		\item $\epsilon$, multiplicative approximation constant: We let $\epsilon$ be an approximation constant that is used in the Johnson-Lindenstrauss transform and such that $1 + \epsilon$ is the factor the grid unit length and threshold increase by in each iteration of the loop on \cref{alg:privateKMeans;line:loopStart} to \cref{alg:privateKMeans;line:loopEnd}. We will require that $\epsilon \in (0,1)$ and be bounded away from $1$, say $\epsilon\leq 0.5$.
		\item $m$, number of iterations: We let $m = \lceil \log_{1 + \epsilon} 2n \rceil$, the total number of iterations for which \cref{alg:privateGridSetCover} is called.
		\item $r_i$, threshold radii: For $i\in \{1,\dots ,m\}$, we let $r_i = \frac{t_i \sqrt{d}}{\epsilon} = \frac{(1 + \epsilon)^{i-1}}{n}$, the $i$th threshold radius used for computing the max cover bi-criteria relaxation. For notational convenience we set $r_0 = 0$, and note that $r$ increases geometrically from $r_1 = \frac{1}{n}$ to $r_m = 2$.
		\item $G_i, t_i$, grid and unit length: For $i\in \{1,\dots ,m\}$, we let $G_i = \{-1, -1 + t_i, -1 + 2t_i , \dots , 1 - t_i, 1\}^{d'}$, where $t_i = \frac{\epsilon}{n\sqrt{d}} (1+\epsilon)^{i-1}$, the grid unit length in the $i$th iteration. Note that $\lvert G_i \rvert = \lfloor \frac{1}{t_i} \rfloor^{d'}$.
		\item $\snap{\cdot}{i}$, floor to grid function: We let $\snap{v}{i}$ for any vector $v \in \mathbb{R}^d$ denote $((t_i \lfloor \frac{v_1}{t_i} \rfloor), \dots, (t_i \lfloor \frac{v_{d'}}{t_i} \rfloor))$, i.e. $\snap{v}{i}$ is the coordinate-wise ``floor" of $v$ in the grid of unit length $t_i$. 
		\item $o_i$, ideal thresholded objectives: For $i\in \{1,\dots ,m\}$, we let $o_i = \{p \in D' : d(p,\OPT_{D'}) \in [r_{i-1},r_i) \}$. Since $D' \subset B_{1} (0)$, $D' = \sqcup_{i=1}^m o_i$
		\item $a_i$, set of points covered in $i$th call: For $i\in \{1,\dots ,m\}$, we let $a_i = B_{r_i + t_i \sqrt{d'}}(C_i) \cap D'$ where $C_i$ is the set of points returned by \cref{alg:privateGridSetCover} when called in the $i$th iteration of \cref{alg:privateKMeans}.
		\item $B_r (\cdot)$: We let $B_r (S)$ be the union of all balls of radius $r$ whose center is an element of $S$. We abuse notation to let $B_r (g) := B_r (\{g\})$ and observe that $|B_r (\cdot)|$ is a monotonic positive submodular function for any $r\geq 0$, as is $\lvert T \cap B_r(\cdot) \rvert$, for any fixed set $T$.
		\item $\cover$: We let $\cover[g]$ denote that set of uncovered data points that are covered within the radius $(r_i + t_i \sqrt{d'})$ around $g$ in the $i$th call to \cref{alg:privateGridSetCover}. This is a subset of the data points in $B_{(1+\epsilon)r_i}$
		\item $\grid$: We let $\grid[p]$ denote the grid point that covers $p$ in the call to \cref{alg:privateGridSetCover} that removes $p$ from $D'_1$.
		\item $\kCenter$: We let $\kCenter[p]$ denote the closest element of $\OPT_{D'}$ to the datapoint $p \in D'$.
	\end{itemize}
	
	At a high level the algorithm can be described in four steps. 
	\paragraph{Step 1:}{First the dataset $D \subset B(0,\Lambda/2) \subset \mathbb{R}^d$ is preprocessed via dimension reduction, scaling and projection to produce a dataset $D' \subset B_1(0) \subset \mathbb{R}^{d'}$ where $d=O((\log n)/ \epsilon^2 )$. Note that with high probability we do not need to project any point and so need not account for it in the privacy analysis; however, by projecting instead of re-scaling, we preserve privacy. To start with a finite number of candidate centers we construct multi-dimensional grids of side lengths $t_i$ and observe that if $\mu$ is a center of a cluster with radius $r_i$ in the optimal solution, then by the triangle inequality a ball of radius $r_i + t_i \sqrt{d'}$ centered at $\snap{\mu}{i}$ (the ``floor" of $\mu$ the in grid) contains all the points of the same cluster.}
	\paragraph{Step 2:}{Next, for geometrically increasing grid unit lengths $t_i$ with growth factor $(1+\epsilon)$ starting from $\epsilon/n\sqrt{d'}$ and increasing to $2/\sqrt{d'}$ we create grids and identify possible centers of clusters with radius in the interval $[r_{i-1}, r_i)$. This is done by counting the number of datapoints within $r_i + t_i \sqrt{d'}$ of every grid point. To ensure a polynomial time method this enumeration is done by iterating over datapoints and adding each datapoint to the grid points which could be valid cluster centers - we do this by keeping in hand a set of valid offsets $V_i$ and simply incrementing counts for all grid points within an offset of $\snap{p}{i}$. Since we are looking for a $k$-means solution there could be as many as $k$ clusters for any given radius, which requires us to greedily identify the $k/\epsilon$ best grid points to obtain close to optimal coverage (see \cref{lem:maxCoverRelax}). We take the union of all $\log_{1 + \epsilon} 2/(1/n) = O((\log n)/\epsilon)$ sets of $k/\epsilon$ points so found to construct the set $C$.}
	\paragraph{Step 3:}{Once this set $C$ containing a good cluster solution is identified, the idea is to construct the proxy dataset $D''$ by moving each point to its closest point in $C$. However, constructing the proxy dataset in this way means accessing the sensitive data again. In order to maintain privacy in this step, instead of directly moving datapoints to points in $C$, we compute the counts $n_c$ of the number of datapoints that would ideally be moved to $c$ and add Laplace noise to $n_c$ to get $\tilde{n}_c$. $D''$ then contains $\tilde{n}_c$ copies of $c$ for all $c \in C$.}
	\paragraph{Step 4:}{In the final step we apply any non-private $k$-means clustering algorithm to $D''$ to get some cluster centers $S''$. We cluster $D'$ using these cluster centers to get clusters $C'$, and define final clusters for $D$ by identifying points with their images under the Johnson Lindenstrauss map. Since this step again uses sensitive data we use the Gaussian mechanism to return noisy averages of these clusters $C'$ for the dataset $D$ to derive the set of $k$-means $S$.}
	\vspace{\baselineskip}
	
	\begin{algorithm}
		\caption{Private $k$-means}
		\label{alg:privateKMeans}
		\KwData{$D \subset \mathbb{R}^d$ dataset, $\lvert D' \rvert = n$.}
		\KwResult{$S = \{ \tilde{\mu}_1, \dots, \tilde{\mu}_k\} \subset \mathbb{R}^d$}
		$T \sim \mbox{JohnsonLindenstrauss}(n, \epsilon)$\\
		$D' \leftarrow T(D) $  \\
		$d' \leftarrow \mbox{dim}(T) = O((\log n)/\epsilon^2)$ \\
		Scale $D'$ down by a factor of $\frac{\Delta}{2 (1+\epsilon)}$ and project to $B_1 (0)$ \\
		Let $T'$ be the composition of $T$ with the scaling and projection so that $T'(D) = D'$\\
		$r_1 \leftarrow 1/n$\\
		$t_1 \leftarrow \epsilon/(n \sqrt{d'})$ \\
		\For{$i = 1, \dots, m = \lceil \log_{1 + \epsilon} 2n \rceil $}{\label{alg:privateKMeans;line:loopStart}
			$C_i \leftarrow \cref{alg:privateGridSetCover}(D', t_i, r_i)$ \\
			$r_{i+1} \leftarrow (1 + \epsilon)r_i$.\\
			$t_{i+1} \leftarrow (1 + \epsilon)t_i$.\\
		}\label{alg:privateKMeans;line:loopEnd}
		$D' \leftarrow T'(D)$ \tcp*[r]{resetting the dataset to account for points lost in call to \cref{alg:privateGridSetCover}}
		$C = \bigcup_{i=1}^m C_i$\\
		Assign all points in $D'$ to their closest point $c \in C$\\
		Let $n_c$ be the number of points in $D'$ assigned to $c$\\
		For each $c \in C$ set $n'_c = n_c + \mbox{Lap}\left(\frac{1}{\epsilon_L}\right)$\label{alg:privateKMeans;line:laplaceMech}\\
		Let $D''$ be the dataset where every $c \in C$ is repeated $n'_c$ times\\
		$S'' = \{\mu''_1, \dots, \mu''_k\} \leftarrow \mbox{Lloyd}(D'')$\\
		$ D'_i \leftarrow \{ p \in D' : \argmin_{\mu'' \in S''} d(p, \mu'') = \mu''_i \}$ for $i = 1, \dots, k$\label{alg:privateKMeans;line:cluster}\\
		\For{$i = 1,\dots, k$}{
		$\tilde{\mu}_i = \cref{alg:noisyAVG} (D, 1_{D'_i}, \epsilon_G, \delta_G)$\label{alg:privateKMeans;line:gaussMech} \tcp*[r]{$1_{D'_i} (p)$ indicates whether $T'(p)\in D'_i$ for $p\in D$}
		} 
		\KwRet{$\tilde{S} = \{ \tilde{\mu}_1, \dots, \tilde{\mu}_k\}$}
	\end{algorithm}
	
	\begin{algorithm}
		\caption{Private grid set cover}
		\label{alg:privateGridSetCover}
		\SetKwFor{RepTimes}{repeat}{times}{end}
		\KwData{$D'$ dataset (passed by reference), $t_i$ grid unit length, $r_i$ threshold radius}
		\KwResult{set $C_i \subset G_i$}
		$C_i \leftarrow \emptyset$ \\
		\RepTimes{$k'$}{
			$\cover \leftarrow$ empty linked list \\
			$V_i \leftarrow \{ v:v \in \mathbb{N}^{d'},\: \sum_{j=1}^{d'} (t_i v_j)^2 < (r_i + t_i \sqrt{d'})^2 \}$\\
			\For{all $p \in D'$}{
				\For{all $v \in V_i$}{
					\For{all $s \in \{0,1\}^{d'}$}{
						$t_i b = \snap{p}{i} + t_i s + (2s - \bar{1}) t_i v$ \tcp*[r]{where $\bar{1}$ is the all-ones vector}
						\If{$d(t_i b, p) < (r_i + t_i \sqrt{d'})^2 $}{
						$\cover[t_i b] \pluseq \{p\}$\\
					}
					}
				}
			}
			$\totalCover \leftarrow 0$\\
			\For{$g \in \cover$}{
				$\totalCover \pluseq \exp \left(\frac{\epsilon_E \lvert \cover[g] \rvert}{2}\right)$\\
			}
			$\totalCover \pluseq \lvert G_i \rvert - len[\cover]$ \\
			Let $P_{samp} = 1 - \frac{\lvert G_i \rvert}{\totalCover}$.\\
			\uIf{$\mbox{Ber}\left(P_{samp} \right) = 1$}{
				$g \leftarrow$ pick $i \in [len[\cover]] \sim P(g) \propto \: \exp\left(\frac{\epsilon_E \lvert \cover[g] \rvert}{2} \right) -1$
			}
			\Else{
				$g \leftarrow$ pick $i$ uniformly at random from $G_i$
			}
			$C_i \leftarrow C_i \cup \{g\}$\\
			$D' \leftarrow D' \backslash \cover[g]$\\
		}
		\KwRet{$C_i$}
	\end{algorithm}
	
	The formal pseudocode \cref{alg:privateGridSetCover} requires some additional justification; the construction of the offset set $V_i$, and the polynomial time implementation of the exponential mechanism.
	
	\begin{claim} \label{clm:nbdBound}
		A data point $p$ is within distance $r_i + t_i \sqrt{d'}$ of a grid point $t_i b$ for $b \in \mathbb{Z}^{d'}$ only if
		\begin{align*}
			\sum_{j=1}^{d'} \min ( (\snap{p}{i}_j - t_i b_j)^2, (\snap{p}{i}_j - t_i (b_j + 1) )^2) \leq (r_i + t_i \sqrt{d'})^2.
		\end{align*}
		Let $V_i = \{ v  : v \in \mathbb{N}^{d'}, \sum_{j=1}^{d'} t_i^2 v_j^2 < (r_i + t_i \sqrt{d'})^2\}$. If $t_i b$ is a grid point such that $d(p, t_i b) < (r_i + t_i \sqrt{d'})^2$ then for some $s \in \{ 0, 1\}^d$ and $v \in V_i$, $t_i b = \snap{p}{i} + t_i s + (2s - \bar{1}) t_i v$, where $\bar{1} = (1,1,\dots, 1)$, the $d'$-dimensional all-ones vector.
	\end{claim}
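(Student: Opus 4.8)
The plan is to establish two things: first, a purely geometric necessary condition for $p$ to be within distance $r_i + t_i\sqrt{d'}$ of a grid point $t_i b$; and second, a parametrization of all such grid points in terms of the offset set $V_i$ and a sign vector $s \in \{0,1\}^{d'}$, which is exactly the enumeration the algorithm performs in its innermost loops.

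For the first part, I would work coordinate by coordinate. Write $\snap{p}{i}$ for the coordinate-wise floor of $p$ into the grid of side $t_i$; by definition $\snap{p}{i}_j \le p_j < \snap{p}{i}_j + t_i$, so $\snap{p}{i}_j = t_i\lfloor p_j/t_i\rfloor$ is itself a grid coordinate. Now fix any grid point $t_i b$. In coordinate $j$, the value $p_j$ lies in the half-open interval $[\snap{p}{i}_j, \snap{p}{i}_j + t_i)$, and since $t_i b_j$ is some grid coordinate, the distance $|p_j - t_i b_j|$ is minimized over the two nearest grid coordinates $\snap{p}{i}_j$ and $\snap{p}{i}_j + t_i$ — more precisely $|p_j - t_i b_j| \ge \min(|\,\snap{p}{i}_j - t_i b_j\,|, |\,\snap{p}{i}_j + t_i - t_i b_j\,|)$ whenever... actually the cleaner statement is $(p_j - t_i b_j)^2 \ge \min((\snap{p}{i}_j - t_i b_j)^2, (\snap{p}{i}_j - t_i(b_j+1))^2)$ minus a correction, so I need to be slightly careful: what is really true is that for any grid coordinate $t_i b_j$, the quantity $\min((\snap{p}{i}_j - t_i b_j)^2, (\snap{p}{i}_j - t_i(b_j+1))^2)$ lower bounds $(p_j - t_i b_j)^2$, because $\snap{p}{i}_j$ and $\snap{p}{i}_j + t_i$ are the two grid points bracketing $p_j$ and $t_i b_j$ is no closer to $p_j$ than whichever of these two is on the same side. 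Summing over $j$ and using $d(p, t_i b) < (r_i + t_i\sqrt{d'})^2$ gives the displayed inequality. The one subtlety to handle carefully is the direction of these per-coordinate comparisons when $t_i b_j$ equals $\snap{p}{i}_j$ or $\snap{p}{i}_j + t_i$ exactly, but in those cases the minimum is attained and the bound is immediate.

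For the second part, given a grid point $t_i b$ with $d(p, t_i b) < (r_i + t_i\sqrt{d'})^2$, I want to write $t_i b = \snap{p}{i} + t_i s + (2s - \bar 1)t_i v$ for some $s \in \{0,1\}^{d'}$ and $v \in V_i$. Coordinate-wise, set $s_j = 1$ if $b_j \ge \snap{p}{i}_j/t_i + 1$ (i.e.\ $t_i b_j$ lies weakly to the right of $\snap{p}{i}_j + t_i$) and $s_j = 0$ otherwise; then $(2s_j - 1) = +1$ on the right side and $-1$ on the left. Define $v_j \ge 0$ to be the integer such that the identity $t_i b_j = \snap{p}{i}_j + t_i s_j + (2s_j - 1)t_i v_j$ holds — this is just reading off how many grid steps $t_i b_j$ is from the appropriate bracketing point $\snap{p}{i}_j$ (if $s_j = 0$) or $\snap{p}{i}_j + t_i$ (if $s_j = 1$), with a nonnegative count by construction. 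It then remains to check $v \in V_i$, i.e.\ $\sum_j t_i^2 v_j^2 < (r_i + t_i\sqrt{d'})^2$; but $t_i v_j$ is precisely $\min(|\snap{p}{i}_j - t_i b_j|, |\snap{p}{i}_j + t_i - t_i b_j|)$, so $\sum_j t_i^2 v_j^2$ equals the left-hand side of the inequality from the first part, which is $< (r_i + t_i\sqrt{d'})^2$ as required.

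I expect the main obstacle to be bookkeeping rather than depth: keeping the per-coordinate case analysis ($t_i b_j$ left of, inside, or right of the bracketing interval $[\snap{p}{i}_j, \snap{p}{i}_j + t_i]$) consistent with the sign convention $(2s - \bar 1)$ and verifying that $v_j$ comes out nonnegative in every case. Once the coordinate-wise identity $t_i v_j = \min(|\snap{p}{i}_j - t_i b_j|, |\snap{p}{i}_j - t_i(b_j+1)|)$ is nailed down, both claims follow by summing and applying the hypothesis $d(p, t_i b) < (r_i + t_i\sqrt{d'})^2$. A secondary point worth a sentence is that this shows the algorithm's triple loop over $p \in D'$, $v \in V_i$, $s \in \{0,1\}^{d'}$ really does enumerate a superset of all grid points with nonzero coverage, which is what is needed for correctness of the coverage counts in \cref{alg:privateGridSetCover}.
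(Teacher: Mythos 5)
Your proposal is correct and follows essentially the same route as the paper's own proof: a coordinate-wise argument that one of the two bracketing grid coordinates $\snap{p}{i}_j$, $\snap{p}{i}_j + t_i$ is at least as close to $t_i b_j$ as $p_j$ is, followed by defining $s_j$ from the side on which $t_i b_j$ lies and $v_j$ as the nonnegative number of grid steps, with membership in $V_i$ read off from the first inequality and the hypothesis $d(p,t_i b) < (r_i + t_i\sqrt{d'})^2$. Your identity $t_i v_j = \min(\lvert \snap{p}{i}_j - t_i b_j\rvert, \lvert \snap{p}{i}_j - t_i(b_j+1)\rvert)$ is exactly the intended one (the paper's proof states it with squared rather than absolute differences, an evident typo), so no gap remains.
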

	
	\begin{proof}
		Informally, $p_j$ is a real number and the $t_i b_j$ lie on regularly spaced intervals on the number line. Since $\snap{p}{i}_j$ and $\snap{p}{i}_j + t_i$ are the two closest grid points to $p$, any grid point must be closer to one of these neighbours than it is to $p_j$.
		
		If $p$ is within distance $r_i + t_i \sqrt{d'}$ of a grid point $t_i b$ for $b \in \mathbb{Z}^{d'}$ then by definition
		\begin{align*}
			\sum_{j=1}^{d'} (p_j - t_i b_j)^2 \leq (r_i + t_i \sqrt{d'})^2.
		\end{align*}
		If $p_j \geq t_i b_j$ then $p_j = t_i b_j + t_i x + y$ for some $x \in \mathbb{N}$ and $y \in [0,1)$. Since $\snap{p}{i}_j = t_i b_j + t_i x$, $(\snap{p}{i}_j - t_i b_j)^2 < (p_j - t_i b_j)^2$. Else, if $p_j < t_i b_j$ then $p_j = t_i b_j - t_i x - y$, with same ranges for $x$ and $y$. Then $\snap{p}{i}_j + t_i = t_i b_j - t_i x$ so $(\snap{p}{i}_j - t_i b_j + t_i)^2 < (p_j - t_i b_j)^2$. Therefore we have that $\min((\snap{p}{i}_j - t_i b_j)^2,(\snap{p}{i}_j - t_i (b_j + 1))^2) < (p_j - t_i b_j)^2$. Summing up this inequality over the index $j$ and using the display above gives us the desired result.

		Let $s_j = 0$ if $p_j \geq t_i b_j$ and $s_j = 1$ if $p_j < t_i b_j$. Tracing the proof of the first half and letting $v$ be such that $t_i v_j = \min ((\snap{p}{i}_j - t_i b_j)^2,(\snap{p}{i}_j - t_i (b_j + 1) )^2)$, it follows that $t_i b_j = \snap{p}{i}_j + t_i s_j + (2 s_j - 1) v_j$ for all $j \in [k]$. Putting together all coefficients this gives us that $t_i b = \snap{p}{i} + t_i s + (2s - \bar{1}) t_i v$ for some $v \in V_i$.
	\end{proof}
	
	\begin{claim}
		After computing the cover of each grid point, \cref{alg:privateGridSetCover} executes the exponential mechanism correctly and in polynomial time.
	\end{claim}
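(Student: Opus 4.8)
The claim has two parts, and the plan is to treat them separately: first, that the sampling step reproduces exactly the exponential-mechanism distribution of \cref{lem:expMech} with range $R = G_i$ and utility $q(D',g) = \lvert \cover[g]\rvert$; second, that once the linked list $\cover$ is in hand, every remaining operation runs in time polynomial in $n$ (with $\epsilon$ treated as a constant). I would start by noting that the coverage utility has sensitivity $1$: inserting or deleting a single point of $D'$ changes $\lvert B_{r_i + t_i\sqrt{d'}}(g)\cap D'\rvert$ by at most one for every grid point $g$, so $\Delta q = 1$, and \cref{lem:expMech} therefore prescribes sampling $g$ with probability proportional to $\exp(\epsilon_E\, q(D',g)/2)$ --- exactly the weight $\exp(\epsilon_E \lvert\cover[g]\rvert/2)$ the algorithm assigns.

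For correctness I would use a weight-splitting identity. Write $w_g := \exp(\epsilon_E \lvert \cover[g]\rvert /2)$ and split $w_g = (w_g - 1) + 1$. Any grid point $g$ absent from the list $\cover$ has $\lvert\cover[g]\rvert = 0$, hence $w_g = 1$ and $w_g - 1 = 0$; consequently
\[
\sum_{g \in G_i} w_g \;=\; \sum_{g \in \cover}(w_g - 1) \;+\; \lvert G_i\rvert ,
\]
which is precisely the value $\totalCover$ computed in the pseudocode (since $\sum_{g\in\cover} 1 = \mathrm{len}[\cover]$). Putting $W_1 := \sum_{g\in\cover}(w_g-1)\ge 0$, so that $\totalCover = W_1 + \lvert G_i\rvert$ and $P_{samp} = 1 - \lvert G_i\rvert/\totalCover = W_1/\totalCover \in [0,1)$, one gets for every $g \in G_i$
\[
\frac{w_g}{\totalCover} \;=\; \frac{W_1}{\totalCover}\cdot\frac{w_g - 1}{W_1} \;+\; \frac{\lvert G_i\rvert}{\totalCover}\cdot\frac{1}{\lvert G_i\rvert},
\]
i.e. the target law is the two-component mixture: with probability $P_{samp}$ draw $g$ from the list $\cover$ proportionally to $w_g - 1$ (well-defined, as $w_g-1$ vanishes off the list), and with probability $1 - P_{samp}$ draw $g$ uniformly from all of $G_i$. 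This is exactly the $\mathrm{Ber}(P_{samp})$ branching in \cref{alg:privateGridSetCover}; note the uniform branch must range over \emph{all} of $G_i$, covered points included, to supply the ``$+1$'' contribution for those points, and the two cases do not double-count because the decomposition above is an exact partition of $w_g$. Hence the output distribution equals the exponential-mechanism distribution.

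For the running time the only quantitative step is bounding $\mathrm{len}[\cover]$, the number of grid points of nonzero coverage. By \cref{clm:nbdBound} each such point equals $\snap{p}{i} + t_i s + (2s - \bar{1}) t_i v$ for some $p \in D'$, $s \in \{0,1\}^{d'}$, $v \in V_i$, so $\mathrm{len}[\cover] \le n\cdot 2^{d'}\cdot \lvert V_i\rvert$. Since $d' = O((\log n)/\epsilon^2)$ we have $2^{d'} = n^{O(1/\epsilon^2)}$, and $\lvert V_i\rvert$ counts $v \in \mathbb{N}^{d'}$ with $\lVert v\rVert_2 < (r_i + t_i\sqrt{d'})/t_i = \Theta(\sqrt{d'}/\epsilon)$ (using the definitions of $r_i, t_i$); bounding the lattice points in such a ball by the volume of a ball of radius $\Theta(\sqrt{d'}/\epsilon)$ and invoking Stirling gives $\lvert V_i\rvert \le (O(1/\epsilon))^{d'} = n^{O(\log(1/\epsilon)/\epsilon^2)} \le n^{O(1/\epsilon^4)}$, so $\mathrm{len}[\cover]$ is polynomial in $n$. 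Everything else is then inexpensive: $\lvert G_i\rvert = \lfloor 1/t_i\rfloor^{d'}$ is a positive integer of $O(d'\log(1/t_i)) = O((\log^2 n)/\epsilon^2)$ bits, computable by repeated squaring, and the arithmetic forming $\totalCover$ and $P_{samp}$ is over $\poly$-bit integers; the $\totalCover$-loop makes $\mathrm{len}[\cover]$ iterations, each costing one exponentiation; sampling from $\cover$ proportionally to the weights is linear in $\mathrm{len}[\cover]$; and sampling uniformly from $G_i$ amounts to drawing each of the $d'$ coordinates uniformly among the $\lfloor 1/t_i\rfloor$ grid values. Summing these, the post-$\cover$ computation is $\poly(n)$ for constant $\epsilon$. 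The single nontrivial point --- the one I would flag as the main obstacle --- is the bound on $\lvert V_i\rvert$: the naive per-coordinate estimate $(\Theta(\sqrt{d'}/\epsilon)+1)^{d'}$ is only quasi-polynomial in $n$, so the volumetric (Stirling) estimate is genuinely needed to obtain a true polynomial-time guarantee; the correctness half is just the routine weight-splitting identity.
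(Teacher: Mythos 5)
Your proof is correct and follows essentially the same route as the paper: the same mixture decomposition of the exponential-mechanism distribution into a uniform component on $G_i$ plus a component supported on the covered grid points with weights proportional to $\exp(\epsilon_E \lvert \cover[g]\rvert/2)-1$ (with $P_{samp}=1-\lvert G_i\rvert/\totalCover$), and the same reduction of the running-time question to showing that only polynomially many grid points have nonzero cover via the offset set $V_i$. The only cosmetic difference is how $\lvert V_i\rvert$ is bounded --- you count lattice points volumetrically via Stirling, while the paper uses a stars-and-bars count of tuples with $\sum_j v_j^2 < d'(1/\epsilon+1)^2$ --- and both yield the $n^{O(1/\epsilon^4)}$ bound.
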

	
	\begin{proof}
		First we note that there are at most polynomially many grid points whose cover is updated in any call to \cref{alg:privateGridSetCover}. From \cref{clm:nbdBound} we know that for any data point the only grid points whose cover must be updated lie in $V_i$. It will hence suffice to show that $\lvert V_i \rvert < n^{O(1/\epsilon^4)}$. To get the number of unsigned $d'$-dimensional ordered tuples $v$ for which $\sum_i t_i^2 v_i^2 < (r_i + t_i \sqrt{d'})^2 \Leftrightarrow \sum_i v_i^2 < d' (\frac{1}{\epsilon} + 1)^2$, it suffices to count the number of ways of partitioning $d' (\frac{1}{\epsilon} + 1)^2 + d' + 1$ balls into $d'+1$ distinguishable bins. We can do this by placing the balls in a line and choosing $d'$ gaps between them. It follows that $\lvert V \rvert= 2^{d'} \binom{d' (\frac{1}{\epsilon} + 1)^2 + d' + 1 }{d' + 1} = O\left(2^{\frac{d'}{\epsilon^2}}\right) < n^{O(1/\epsilon^4)}$, using that $d' = O\left(\frac{\log n }{\epsilon^2}\right)$.
		
		To execute the exponential mechanism, we want that the grid point $g \in G_i$ be sampled with the probability $P(g)$ given by the expression 
		\begin{align*}
			P(g)= \frac{\exp\left(\frac{\epsilon_E \lvert \cover[g] \rvert}{2} \right) }{\sum_{h \in G_i} \exp\left( \frac{\epsilon_E \lvert \cover[h] \rvert}{2}  \right) }.
		\end{align*}
		Since all but polynomially many grid points $\{g : \cover[g]  = 0\}$ are being sampled with exactly the same probability, which also happens to be the smallest value any point is sampled with, we can use the law of total probability to write this sampling distribution as a uniform distribution on the entire grid with some probability $1-P_{samp}$, and a second distribution with $P'$ supported only on the polynomially many grid points with non-zero cover with probability $P_{samp}$.
		\begin{align*}
			P(g) &= P_{samp} P'(g)  + (1-P_{samp}) \frac{1}{\lvert G_i \rvert} \\
		\end{align*}
		Letting $g_0$ be any grid point with $\cover[g_0] = \emptyset$, so that $P'(g_0) = 0$,
		\begin{align*}
			(1 - P_{samp}) \frac{1}{\lvert G_i \rvert} &= \frac{1}{\sum_{h \in G_i} \exp\left( \frac{\epsilon_E \lvert \cover[h] \rvert}{2}\right)} \\
			\Rightarrow P_{samp} &= 1 - \frac{\lvert G_i \rvert}{\sum_{h \in G} \exp\left( \frac{\epsilon_E \lvert \cover[h] \rvert}{2}\right)}
		\end{align*}
		Putting together the last 3 displays, we get an expression for $P'(g)$:
		\begin{align*}
			P_{samp} P'(g) &=  \frac{\exp\left(\frac{\epsilon_E \lvert \cover[g] \rvert}{2} \right) -1}{\sum_{h \in G} \exp\left( \frac{\epsilon_E \lvert \cover[h] \rvert}{2}  \right) } \\
			\Rightarrow P'(g) &= \frac{\exp\left(\frac{\epsilon_E \lvert \cover[g] \rvert}{2} \right) -1}{\sum_{h \in G_i} \exp\left( \frac{\epsilon_E \lvert \cover[h] \rvert}{2}  \right) } \frac{\sum_{h \in G} \exp\left( \frac{\epsilon_E \lvert \cover[h] \rvert}{2}\right)}{\sum_{h \in G_i} \exp\left( \frac{\epsilon_E \lvert \cover[h] \rvert}{2}\right) - \lvert G_i \rvert }\\
			&= \frac{\exp\left(\frac{\epsilon_E \lvert \cover[g] \rvert}{2} \right) -1}{\sum_{h \in G_i} \exp\left( \frac{\epsilon_E \lvert \cover[h] \rvert}{2}\right) - \lvert G_i \rvert }
		\end{align*}
		Suppressing the normalization, the derived expression can be summarised as $P'(g) \propto \exp\left(\frac{\epsilon_E \lvert \cover[g] \rvert}{2} \right) -1$.
	\end{proof}

	\section{Utility}
	
	The outline of the utility analysis is as follows; we know that the optimal $k$-means solution $\OPT_{D'}$ leads to $k$ clusters, each of which has some radius between $0$ and $1$. We will try to catch these clusters at threshold radii $r_1 = (1/n), r_2 = (1 + \epsilon)(1/n), r_3 = (1 + \epsilon)^2 (1/n) ,\dots, r_m = 2$ for $i=1,\dots m = \log_{1 + \epsilon} 2n$. If $o_i$ is the number of points in $D'$ such that for all $p\in o_i$, $d(p, D') \in [r_{i-1},r_i)$, then we can relate the cost of the optimal $k$-means solution as
	\begin{align*}
		\frac{\sum_{i=1}^m \lvert o_i \rvert r_i}{1 + \epsilon} \leq f_{D'} (\OPT_{D'}) < \sum_{i=1}^m \lvert o_i \rvert r_i.
	\end{align*}
	We show that the sets $a_i$, i.e. the set of points covered within $r_i + t_i \sqrt{d'}$ is close to the number of points that lie within $r_i$ of their closest mean in $\OPT_{D'}$ (\cref{lem:privateMaxCoverRelax}). It will then follow that moving each data point to its closest point in $C = \cup_{i=1}^m C_i$ will lead to moving datapoints a total distance of $\sim f_{D'} (\OPT_{D'})$ (\cref{lem:costIncrease}). Similarly it will also follow that the proxy dataset $D''$ constructed by enumerating points in $C$ with multiplicity the number of points moved to them will have a similar clustering cost (\cref{lem:D''ClusterCostBound}), and that cluster centers for $D''$ also work well as cluster centers for $D'$(\cref{lem:D'ClusterCostBound}). We identify points in $D$ with their images in $D'$ under $T'$ and show that noisy averaging of clusters in $D$ so found gives us a good solution for the $k$-means problem for $D$ (\cref{thm:finalUtility}).
	
	\begin{lemma}\label{lem:privateMaxCoverRelax}
		With probability $1-\gamma$
		\begin{align*}
			\lvert a_l \rvert \geq (1-\epsilon) \left| o_l \right| - O\left(\frac{k \log n}{\epsilon_E \cdot  \poly(\epsilon)} \log\frac{n }{\gamma}\right).
		\end{align*}
		where $\epsilon_E$ is the privacy parameter used in the exponential mechanism.
	\end{lemma}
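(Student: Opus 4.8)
Fix $l$. The plan is to regard the $l$-th call to \cref{alg:privateGridSetCover} as a noisy greedy run of maximum coverage, exhibit a size-$k$ feasible cover built from $\OPT_{D'}$, and then combine the telescoping guarantee of \cref{lem:maxCoverRelax} with the approximation guarantee of the exponential mechanism (\cref{lem:expMech}). Throughout I work inside $D'\subset B_1(0)$, so no Johnson--Lindenstrauss distortion enters.

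\emph{A size-$k$ cover from the optimum.} Snap each optimal center to the grid of the $l$-th call: put $Z:=\{\snap{\mu}{l}:\mu\in\OPT_{D'}\}$, so $|Z|\le k$, and since each optimal center (a centroid of a subset of $D'$) lies in $B_1(0)$ we have $Z\subseteq G_l$. Every coordinate of $\snap{\mu}{l}$ is within $t_l$ of that of $\mu$, so $\|\mu-\snap{\mu}{l}\|\le t_l\sqrt{d'}$, and the radii are defined precisely so that $t_l\sqrt{d'}$ is the right small fraction of $r_l$ (the relation $r_l=t_l\sqrt{d'}/\epsilon$). By the triangle inequality --- using \cref{lem:triangle} to pass between distances and squared distances --- every point $p$ with $d(p,\OPT_{D'})<r_l$, in particular every point of $o_l$ still present when the $l$-th call begins, lies within the covering radius $r_l+t_l\sqrt{d'}$ of $\snap{\kCenter[p]}{l}$. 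Hence inside the $l$-th call the grid points of $Z$ form at most $k$ sets covering (the live portion of) $o_l$. I would carefully pin down here where squared Euclidean distance is used versus the ``radius'' convention for the $r_i$, since that is exactly what the choice $t_i\sqrt{d'}=\epsilon r_i$ is calibrated for.

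\emph{Each round is a noisy greedy step with controlled error.} The call performs $k'=2\lceil k\log(1/\epsilon)\rceil+1$ rounds, and by the claim establishing the polynomial-time implementation, each round samples $g$ exactly according to the exponential mechanism with utility $q(D',g)=\lvert\cover[g]\rvert$, sensitivity $\Delta q=1$ (inserting or deleting one data point changes every coverage count by at most one), and range $G_l$. By \cref{lem:expMech} with failure probability $\gamma/k'$ and a union bound over the $k'$ rounds, with probability $1-\gamma$ every selected $g$ in this call has $\lvert\cover[g]\rvert\ge\max_h\lvert\cover[h]\rvert-\tau$ with $\tau=\tfrac{2}{\epsilon_E}\log(\lvert G_l\rvert k'/\gamma)$. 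Since $\lvert G_l\rvert=\lfloor 1/t_l\rfloor^{d'}$ with $1/t_l\le n\sqrt{d'}/\epsilon$ and $d'=O(\log n/\epsilon^2)$, this gives $\log\lvert G_l\rvert=O(\log^2 n/\epsilon^2)$ (or, using that only $\le n\lvert V_l\rvert=n^{O(1/\epsilon^4)}$ grid points have non-empty cover, $O(\log n/\epsilon^4)$), hence $\tau=O\!\big(\tfrac{\log n}{\epsilon_E\,\poly(\epsilon)}\log\tfrac{n}{\gamma}\big)$.

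\emph{Bridging the additive noise and the multiplicative cover guarantee.} This is the main obstacle: \cref{lem:maxCoverRelax} wants each round to capture at least half of the current maximum, which the additive bound above delivers only once the current maximum exceeds $2\tau$. I would split on the first round $i^\star$ of this call at which $\max_h\lvert\cover[h]\rvert\le 2\tau$. Before round $i^\star$ the ``at least half the max'' hypothesis holds, so if $i^\star>k'$ then \cref{lem:maxCoverRelax}, applied with universe the live part of $o_l$ and the family $Z$ of at most $k$ sets, leaves at most an $\epsilon$-fraction of it uncovered after $k'\ge 2\lceil k\log(1/\epsilon)\rceil$ rounds; and if $i^\star\le k'$ then at round $i^\star$ the family $Z$ still covers whatever remains, so at most $k\cdot\max_h\lvert\cover[h]\rvert\le 2k\tau$ points are uncovered, and later rounds only help. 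Either way $C_l$ covers all but $\epsilon\lvert o_l\rvert+2k\tau$ of the live part of $o_l$; since a live point covered by a selected grid point is deleted within the $l$-th call and therefore lies in $a_l$, this yields $\lvert a_l\rvert\ge(1-\epsilon)\lvert o_l\rvert-O\!\big(\tfrac{k\log n}{\epsilon_E\,\poly(\epsilon)}\log\tfrac{n}{\gamma}\big)$. The one remaining point is replacing the live part of $o_l$ by all of $o_l$: points of $o_l$ removed in an earlier call were covered there within a strictly smaller radius, so this is pure bookkeeping and in fact only helps where the lemma is used downstream (in \cref{lem:costIncrease}, charging such a point at its smaller radius can only lower the total movement). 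I expect essentially all of the work to be the case split above together with getting the right powers of $\log n$ and $1/\epsilon$ into $\tau$; the cover construction and the charging are routine once the distance conventions are fixed.
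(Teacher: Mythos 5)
Your proposal follows essentially the same route as the paper: snap the optimal centers to $G_l$ to obtain a size-$k$ promise for \cref{lem:maxCoverRelax}, invoke the per-round exponential-mechanism guarantee (\cref{lem:expMech}) with a union bound over the $k'$ rounds, and absorb the rounds in which the maximum residual coverage falls below the noise threshold into an additive term of order $k$ times the mechanism error. The only difference is cosmetic: you cut the analysis at the first round the maximum drops below $2\tau$ and bound the leftover by $2k\tau$ via the promise, whereas the paper swaps the post-threshold picks for ideal maximal picks and charges each swapped round at most the threshold (losing $O(k'\tau)$); the two bookkeeping devices give the same bound up to the $\poly(\epsilon)$ already present in the statement.
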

	
	\begin{proof}
		Since the $\ell_2$ distance between $\mu$ and $\snap{\mu}{l}$ is at most $t_l\sqrt{d'}$, it follows from the definition of $o_l$ that $o_l \subset B_{r_l + t_l \sqrt{d'}} (\{\lfloor \mu \rfloor : \mu \in \OPT_{D'} \})$. We can hence apply \cref{lem:maxCoverRelax} with the promise that the set of $k$ balls with centers in $\{ \argmin_{g \in G'} d(g, c) : c \in \OPT_{D'} \}$ and radii $r_l + t_l \sqrt{d'}$ cover the set $o_l$, and that the $k$ balls lie in the family of sets $\{B_{r_l + t_l \sqrt{d'}} (g) : g \in G_l \}$.
		
		We let $h$ denote the submodular function $\lvert o_l \cap B_{t_l + r_l\sqrt{d}}(\cdot ) \rvert$ and $\Delta_i h$ denote the marginal utility function, i.e. the increase in $h$ when picking the $i$th element from the domain and adding it to the set of $i-1$ elements already picked. If $g_i^{EM}$ is the $i$th element picked by the exponential mechanism, the \cref{lem:expMech} guarantee gives us that with probability $1-\gamma$,
		\begin{align*}
		     \Delta_i h(g_i^{EM}) \geq \max\limits_{g \in G_l \backslash\{g_1^{EM},\dots,g_{i-1}^{EM} \}} \Delta_i h(g) - \frac{2}{\epsilon_E} \log \frac{\lvert G_l \rvert}{\gamma}. 
		\end{align*}
		This implies that when
		\begin{align*}  \max\limits_{g \in G_l \backslash\{g_1^{EM},\dots,g_{i-1}^{EM} \}} \Delta_i h(g) \geq \frac{4}{\epsilon_E} \log \frac{\lvert G_l \rvert}{\gamma},
		\end{align*}
		with probability $1-\gamma$,
		\begin{align}\label{eqn:marginalLowerBound}
		\Delta_i h(g_i^{EM}) \geq \frac{\max\limits_{g \in G_l \backslash\{g_1^{EM},\dots,g_{i-1}^{EM} \}} \Delta_i h(g)}{2}.
		\end{align}
		Let $g_1^{EM}, \dots , g_{k'}^{EM}$ be the grid points chosen by the exponential mechanism in the course of \cref{alg:privateGridSetCover}. Note that this implies 
		\begin{align}
			B_{r_i + t_i \sqrt{d'}} (\{ g_1^{EM}, \dots, g_{k'}^{EM}\}) \cap o_l &\subset  a_l \nonumber\\
			\Rightarrow h(\{ g_1^{EM}, \dots, g_{k'}^{EM}\}) &\leq \lvert a_l \rvert. \label{eqn:a_i}
		\end{align}
		If $j$ is the greatest index for which $\max\limits_{g \in G_l \backslash\{g_1^{EM},\dots,g_{i-1}^{EM} \}} \Delta_j h(g) \geq \frac{4}{\epsilon_E} \log \frac{\lvert G_l\rvert}{\gamma}$ (noting that the maximum possible marginal increase in cover is non-increasing), then let $g_{j+1}^{MAX}, \dots , g_{k'}^{MAX}$ be the grid points with maximal marginal utility in the round indicated by the subscript. Combining \cref{lem:maxCoverRelax} with \cref{eqn:marginalLowerBound} gives us that with probability at least $1- j \gamma \geq 1 - k' \gamma$
		\begin{align*}
			h(\{ g_1^{EM}, \dots , g_{j}^{EM} , g_{j+1}^{MAX}, \dots , g_{k'}^{MAX} \}) \geq (1-\epsilon) \lvert o_l \rvert \\
			\Rightarrow \sum_{i=1}^j \Delta_i h (g_i^{EM}) + \sum_{i=j+1}^{k'} \Delta_i h(g_i^{MAX}) \geq (1-\epsilon) \lvert o_l \rvert \\
			\Rightarrow \sum_{i=1}^j \Delta_i h (g_i^{EM}) + \sum_{i=j+1}^{k'} \frac{4}{\epsilon_E} \log \frac{|G_l|}{\gamma} \geq (1-\epsilon) \lvert o_l \rvert \\
			\Rightarrow h(\{ g_1^{EM}, \dots , g_{k'}^{EM} \}) + \frac{4 k'}{\epsilon_E} \log \frac{|G_l|}{\gamma} \geq (1-\epsilon) \lvert o_l \rvert
		\end{align*}
		We recall that $|G_l|$ is $\left\lfloor \frac{1}{t_l}\right\rfloor^{d'}$, and that $t_l$ is at least $\frac{\epsilon}{n \sqrt{d'}}$. Using this, and \cref{eqn:a_i}, we see that with probability $1-k' \gamma$,
		\begin{align*}
			\lvert a_l \rvert \geq (1-\epsilon) \left| o_l \right| - O\left(\frac{k' d'}{\epsilon_E} \log\frac{n \sqrt{d'} }{\gamma \epsilon}\right).
		\end{align*}
		We absorb the $k' = \frac{k}{\epsilon}$ factor into the failure probability $\gamma$ and noting that $d' = O\left(\frac{\log n}{\epsilon^2}\right)$, $k' = \frac{k}{\epsilon}$ and $k<n$ gives us
		\begin{align*}
			\lvert a_l \rvert &\geq (1-\epsilon) \left| o_l \right| - O\left(\frac{k \log n}{\epsilon^3 \cdot \epsilon_E} \log\frac{k n \sqrt{\log n} }{\gamma \epsilon^3}\right) \\
			&\geq (1-\epsilon) \left| o_l \right| - O\left(\frac{k \log n}{\epsilon_E \cdot \poly(\epsilon)} \log\frac{n }{\gamma}\right).
		\end{align*}
	\end{proof}
	
	We see that although the cluster radii thresholds are $r_i$ for $i=1\dots m$, discretization leads to slightly inflated cluster radii $r_i + t_i \sqrt{d} = (1 + \epsilon)r_i$. This also leads to a slight multiplicative inflation in the total movement, showing up as the $(1+\epsilon)$ coefficient of $\sum_{i=1}^m \lvert a_i \rvert r_i$ in the following result bounding the total movement of points when constructing the proxy dataset $D''$ (without addition of noise).
	
	\begin{lemma}\label{lem:costIncrease}
		The total movement of points $p \in D'$ to the closest point $\grid[p] \in C$ is bounded by the following inequalities
		\begin{align*}
			\sum_{p \in D'} d(p, \grid[p]) \leq (1+ \epsilon) \sum_{i=1}^m |a_i| r_i \leq\left(1 + \frac{3\epsilon}{1 - \epsilon - \epsilon^2}\right) f_{D'} (\OPT_{D'}) + O\left(\frac{k \log n}{\epsilon_E\cdot \poly(\epsilon)} \log \frac{n}{\gamma} \right).
		\end{align*}
	\end{lemma}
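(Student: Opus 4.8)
The plan is to charge the movement of each point to the (slightly inflated) threshold radius of the round of \cref{alg:privateKMeans} in which it leaves $D'$, and then to amortize, against $f_{D'}(\OPT_{D'})$, both the discretization loss and the loss from points covered later than their ``natural'' round. For $p\in D'$ I would write $i(p)$ for the round whose call to \cref{alg:privateGridSetCover} removes $p$, and $j(p)$ for the index with $d(p,\OPT_{D'})\in[r_{j(p)-1},r_{j(p)})$, so that $p\in o_{j(p)}$ and $d(p,\OPT_{D'})\ge r_{j(p)-1}=r_{j(p)}/(1+\epsilon)$. Since $\grid[p]\in C_{i(p)}$ and the catching condition of \cref{alg:privateGridSetCover} puts $p$ within $r_{i(p)}+t_{i(p)}\sqrt{d'}$ of $\grid[p]$, and $t_i\sqrt{d'}\le\epsilon r_i$, the point $p$ is moved at most $(1+\epsilon)r_{i(p)}$; grouping by round and using $\{p:i(p)=i\}\subseteq a_i$ yields the first inequality $\sum_{p\in D'}d(p,\grid[p])\le(1+\epsilon)\sum_{i=1}^m|a_i|r_i$.

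For the second inequality I would first pass from $\sum_i|a_i|r_i$ to $\sum_j|o_j|r_j$. Writing $S_i$ for the points uncovered at the start of round $i$ (so $a_i=S_i\setminus S_{i+1}$), a summation by parts together with $r_i-r_{i-1}=\epsilon r_{i-1}$ and $|S_1|r_1=1$ gives $\sum_i|a_i|r_i\le 1+\epsilon\sum_{i=1}^{m-1}|S_{i+1}|r_i$; working with $|S_{i+1}|$ rather than $|a_i|$ automatically credits points covered ahead of schedule. Then I would split $S_{i+1}$ into the \emph{overdue} points (still uncovered, natural round already passed), a set of size $u_i$, and the at most $\sum_{j>i}|o_j|$ points with $j(p)>i$. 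The latter contribution sums exactly to $\sum_j|o_j|r_j-1$, leaving $\sum_i|a_i|r_i\le\sum_j|o_j|r_j+\epsilon\sum_i u_i r_i$.

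The core step is the recursion for the overdue counts. At round $i$, every overdue point and every still-uncovered point of $o_i$ has optimal distance $<r_i$, hence lies within $r_i+t_i\sqrt{d'}$ of its snapped optimal center in $G_i$; so this ``due'' set, of size at most $u_{i-1}+|o_i|$, is covered by the $k$ balls $\{B_{r_i+t_i\sqrt{d'}}(\snap{\mu}{i}):\mu\in\OPT_{D'}\}$, and --- exactly like $o_l$ in \cref{lem:privateMaxCoverRelax} --- it is contained in the set of still-uncovered points. The argument proving \cref{lem:privateMaxCoverRelax} (namely \cref{lem:maxCoverRelax} combined with the sampling guarantee of \cref{lem:expMech}, run for $k'=k/\epsilon$ rounds) therefore shows that all but an $\epsilon$-fraction of this due set is covered at round $i$, up to the additive error $E=O\big(\tfrac{k\log n}{\epsilon_E\,\poly(\epsilon)}\log\tfrac n\gamma\big)$, giving $u_i\le\epsilon(u_{i-1}+|o_i|)+E$. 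Unrolling (with $u_0=0$) and summing against the weights $r_i$, the relation $r_{j+\delta}=(1+\epsilon)^\delta r_j$ meets the $\epsilon^\delta$ weight of a point covered $\delta$ rounds late to produce the convergent series $\sum_{\delta\ge0}(\epsilon(1+\epsilon))^\delta=\tfrac1{1-\epsilon-\epsilon^2}$ (convergent since $\epsilon\le 1/2$), whence $\epsilon\sum_i u_i r_i\le\frac{\epsilon^2}{1-\epsilon-\epsilon^2}\sum_j|o_j|r_j+O(E/\epsilon)$ and so $\sum_i|a_i|r_i\le\frac{1-\epsilon}{1-\epsilon-\epsilon^2}\sum_j|o_j|r_j+O(E/\epsilon)$; the accumulated additive error is $O(E/\epsilon)$ because $\sum_i r_i=O(1/\epsilon)$, which is still of the form $O(\tfrac{k\log n}{\epsilon_E\poly(\epsilon)}\log\tfrac n\gamma)$. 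Finally each $p\in o_j$ has $r_{j-1}\le d(p,\OPT_{D'})$, so $\sum_j|o_j|r_j\le(1+\epsilon)f_{D'}(\OPT_{D'})+O(1)$; combining with the leading factor $(1+\epsilon)$ and using $(1-\epsilon)(1+\epsilon)^2\le 1+2\epsilon-\epsilon^2$ gives $\sum_{p\in D'}d(p,\grid[p])\le\frac{(1-\epsilon)(1+\epsilon)^2}{1-\epsilon-\epsilon^2}f_{D'}(\OPT_{D'})+O(E/\epsilon)\le\big(1+\frac{3\epsilon}{1-\epsilon-\epsilon^2}\big)f_{D'}(\OPT_{D'})+O(E/\epsilon)$, the stray $O(1)$ terms being absorbed into the additive error.

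The hardest part will be the recursion $u_i\le\epsilon(u_{i-1}+|o_i|)+E$: the algorithm's greedy maximizes the marginal coverage of \emph{all} currently uncovered points, not only of the due set, so one must argue as in \cref{lem:privateMaxCoverRelax} that approximately maximizing the total marginal still forces the right progress on the due subset. I also expect to need a small argument that, because $r_m$ is within a $(1+\epsilon)$ factor of the diameter, the residual $|S_{m+1}|$ of points never covered is negligible --- most cleanly by appending a conceptual extra round at radius equal to the diameter that catches every remaining point. A minor bookkeeping point is the union bound over the $m=O(\log n/\epsilon)$ rounds, which only rescales $\log(1/\gamma)$ inside $E$.
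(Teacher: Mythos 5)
Your plan follows the paper's architecture quite closely: the first inequality is obtained the same way (each point moves at most $r_i+t_i\sqrt{d'}=(1+\epsilon)r_i$ in the round that covers it), the bound $\sum_j|o_j|r_j\le(1+\epsilon)f_{D'}(\OPT_{D'})+1$ is the paper's first auxiliary step, and your geometric amortization reproduces exactly the paper's factor $\tfrac{1-\epsilon}{1-\epsilon-\epsilon^2}$, which the paper derives from the tail sums $A_i=\sum_{j\ge i}|a_j|$, $O_i=\sum_{j\ge i}|o_j|$ and a self-referential inequality rather than by unrolling a recursion. The difference is only in the bookkeeping, except at one point where it becomes substantive.

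That point is the step you yourself flag as hardest, and it is a genuine gap: the recursion $u_i\le\epsilon(u_{i-1}+|o_i|)+E$ for the literal overdue count does not follow from what the algorithm guarantees. In round $i$ the exponential mechanism (approximately) maximizes $|\cover[g]|$, the marginal coverage of \emph{all} currently uncovered points; if there is a large mass of uncovered points whose optimal radius exceeds $r_i$ but which is densely coverable at radius $(1+\epsilon)r_i$, all $k'$ picks can go to that mass and make no progress on the due subset, so no fixed fraction of the due set need be covered in that round --- near-maximality of the total marginal does not force progress on a prescribed subset. What the per-round argument does give, and what the paper uses in \cref{eqn:step2}, is a guarantee on \emph{total} coverage: since the due set is a subset of the uncovered points coverable by the $k$ snapped centers $\snap{\mu}{i}$, the marginal of total coverage dominates the marginal restricted to that set, and the argument behind \cref{lem:privateMaxCoverRelax} yields $|a_i|\ge(1-\epsilon)\bigl(|S_i|-O_{i+1}\bigr)-E$. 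Your proof is repaired by running the recursion on the surplus $\tilde u_i=\max\bigl(|S_{i+1}|-O_{i+1},0\bigr)$ instead of the true overdue count: the total-coverage bound gives $\tilde u_i\le\epsilon(\tilde u_{i-1}+|o_i|)+E$, and since your bookkeeping only ever invokes $|S_{i+1}|\le u_i+O_{i+1}$, the surplus suffices. With that substitution the rest of your calculation (the Abel summation, $\sum_i r_i=O(1/\epsilon)$, the series $\sum_{\delta\ge0}(\epsilon(1+\epsilon))^\delta$) goes through and lands on the paper's constants and error term.
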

	
	We break this proof down into a couple of smaller steps.

	\begin{lemma}
		The thresholded cost obeys the bound
		\begin{align}\label{eqn:step1}
			\sum_{i=1}^m |o_i| r_i \leq (1 + \epsilon) f_{D'}(\OPT_{D'}) + 1.
		\end{align}
	\end{lemma}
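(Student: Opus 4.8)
The plan is to bound $\sum_{i=1}^m |o_i| r_i$ by treating the innermost shell $i=1$ separately from the shells $i \ge 2$. The reason for the split is that a point $p \in o_1$ has $d(p,\OPT_{D'}) \in [r_0, r_1) = [0, 1/n)$, so it may coincide with a center of $\OPT_{D'}$ and contribute nothing to $f_{D'}(\OPT_{D'})$; no multiplicative comparison against the cost is available for such points, and this is precisely where the additive $+1$ in the statement comes from.

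For $i = 1$ I would use only the crude bound $|o_1| \le |D'| = n$, which gives $|o_1| r_1 \le n \cdot (1/n) = 1$. For $i \ge 2$ I would exploit the geometric growth of the thresholds: by construction $r_i = (1+\epsilon) r_{i-1}$, and every $p \in o_i$ satisfies $d(p,\OPT_{D'}) \ge r_{i-1}$ by the definition of $o_i$, so $r_i \le (1+\epsilon)\, d(p,\OPT_{D'})$ for each such $p$. Summing over $p \in o_i$ gives $|o_i| r_i \le (1+\epsilon) \sum_{p \in o_i} d(p,\OPT_{D'})$, and then summing over $i = 2,\dots,m$ and using that the sets $o_2,\dots,o_m$ are pairwise disjoint subsets of $D'$ yields $\sum_{i=2}^m |o_i| r_i \le (1+\epsilon) \sum_{p \in D'} d(p,\OPT_{D'}) = (1+\epsilon)\, f_{D'}(\OPT_{D'})$. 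Adding the $i=1$ term back in completes the argument: $\sum_{i=1}^m |o_i| r_i \le (1+\epsilon)\, f_{D'}(\OPT_{D'}) + 1$.

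There is no substantive obstacle here; this is a purely combinatorial accounting step. The only point to be careful about is to peel off the shell $i=1$ before invoking the $r_i = (1+\epsilon) r_{i-1}$ comparison, since for $i=1$ the lower threshold $r_0 = 0$ makes the per-point inequality vacuous. Note also that we do not even need the full partition identity $D' = \sqcup_{i=1}^m o_i$: disjointness of the $o_i$ together with the trivial cardinality bound $|o_1| \le n$ suffices.
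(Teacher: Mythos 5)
Your proof is correct and essentially matches the paper's argument: both peel off the shell $i=1$ (where the paper bounds $|o_1| r_1 \leq 1 + \sum_{p \in o_1} d(p,\OPT_{D'})$ while you use the even simpler $|o_1| r_1 \leq n \cdot \tfrac{1}{n} = 1$) and for $i \geq 2$ use $d(p,\OPT_{D'}) \geq r_{i-1} = r_i/(1+\epsilon)$ together with disjointness of the $o_i$ to charge $(1+\epsilon) f_{D'}(\OPT_{D'})$. The only difference is cosmetic: you discard the nonnegative cost contribution of $o_1$ rather than folding it into $f_{D'}(\OPT_{D'})$, which yields the same bound.
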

	
	\begin{proof}
		We recall that $o_i = \{p \in D' : d(p, \OPT_{D'}) \in [r_{i-1}, r_i)\}$. Since $D' \in B_1 (0)$, and $r_m = 2$, $\cup_{i=1}^m a_i = D'$. From this we have that
		\begin{align*}
			f(\OPT_{D'}) &= \sum_{i=1}^m \sum_{p \in o_i} d(p,\OPT_{D'}) \\
			&= \sum_{p \in o_1} d(p,\OPT) + \sum_{i=2}^m \sum_{p \in o_i} d(p, \OPT_{D'}).
		\end{align*}
		We note that for $d(p,\OPT_{D'}) \in [0,r_1) = [0,1/n) \Rightarrow \frac{1}{n} + d(p,\OPT_{D'}) > r_1$, since $r_1 = 1/n$ so 
		\begin{align*}
			\sum_{p\in o_1} \frac{1}{n} + d(p,\OPT_{D'}) &> \sum_{p \in o_1 } r_1 \\
			\Rightarrow 1 + \sum_{p \in o_1} d(p,\OPT_{D'}) &\geq |o_1| r_1.
		\end{align*}
		For $d(p,\OPT_{D'}) \in [r_{i-1},r_i)$ for $i \not= 1$, since $\frac{r_i}{r_{i-1}} = (1 + \epsilon)$, it follows that $d(p,\OPT_{D'}) > \frac{r_i}{1 + \epsilon}$, so summing over all such $p$ we have
		$$ \sum_{d(p,\OPT_{D'}) \in [t_{i-1},t_i)} d(p,\OPT_{D'}) > \frac{|o_i| r_i}{1 + \epsilon}.$$
		From the last two displays we have 
		\begin{align*}
			\sum_{i=1}^m |o_i| r_i \leq (1 + \epsilon) f_{D'}(\OPT_{D'}) + 1.
		\end{align*}
		
	\end{proof}
	
	\begin{lemma}
		The ideal thresholded cost $\sum_{i=1}^m \lvert o_i \rvert r_i$ can be related to the achieved thresholded cost $\sum_{i=1}^m \lvert a_i \rvert r_i$ by the following inequality
		\begin{align}
			\sum_{i=1}^m \lvert a_i\rvert r_i &\leq \frac{1-\epsilon}{1 - \epsilon - \epsilon^2} \sum_{i=1}^m \left(\lvert o_i\rvert r_i \right) + O\left(\frac{k \log n}{\epsilon_E \cdot \poly(\epsilon)} \log\frac{n }{\gamma}\right). \label{eqn:step2}
		\end{align}
	\end{lemma}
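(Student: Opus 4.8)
The plan is to compare, point by point, the iteration at which each data point is removed from $D'$ with its ideal threshold level. For $p \in D'$ write $i(p)$ for the index with $p \in o_{i(p)}$ and $j(p)$ for the index of the call to \cref{alg:privateGridSetCover} that removes $p$. Then $\sum_i \lvert o_i \rvert r_i = \sum_{p \in D'} r_{i(p)}$ and $\sum_i \lvert a_i \rvert r_i = \sum_{p \in D'} r_{j(p)}$, so it suffices to bound $\sum_p (r_{j(p)} - r_{i(p)})$ from above. Only ``late'' points, i.e.\ those with $j(p) > i(p)$, which were eligible to be caught at level $i(p)$ but slipped through, contribute positively, and for such a point $r_{j(p)} - r_{i(p)} = \sum_{l = i(p)+1}^{j(p)} (r_l - r_{l-1})$. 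Swapping the order of summation, $\sum_p (r_{j(p)} - r_{i(p)}) \le \sum_{l \ge 2} (r_l - r_{l-1})\, y_l$, where $y_l := \lvert \{ p \in D' : i(p) \le l-1 \text{ and } p \text{ is not removed before the } l\text{-th call} \} \rvert$ counts the points that ``should already have been caught'' at the start of the $l$-th call.

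The second step is to control $y_l$. At the start of the $l$-th call, the still-uncovered points whose distance to $\OPT_{D'}$ is below $r_l$ form a set covered by the $k$ balls of radius $r_l + t_l \sqrt{d'}$ centred at the grid snaps $\snap{\mu}{l}$ of the centres $\mu \in \OPT_{D'}$; running the argument behind \cref{lem:privateMaxCoverRelax} against this set — all that proof uses is that $k$ grid balls cover the set in question — the $l$-th call removes all but an $\epsilon$-fraction of it, up to an additive $E := O\!\left(\frac{k \log n}{\epsilon_E \cdot \poly(\epsilon)} \log \frac{n}{\gamma}\right)$. Since that set consists of the still-uncovered points of true level $\le l-1$ (there are $y_l$ of them) together with at most $\lvert o_l \rvert$ further points of true level $l$, this yields a recursion of the form $y_{l+1} \le \epsilon ( y_l + \lvert o_l \rvert ) + E$ with $y_1 = 0$, which unrolls to $y_l \le \sum_{i=1}^{l-1} \epsilon^{\,l-i} \lvert o_i \rvert + \frac{E}{1-\epsilon}$.

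Finally I would substitute this estimate into the first step. Using $r_l - r_{l-1} = \epsilon r_{l-1}$ and the geometric identity $\sum_{l > i} \epsilon^{\,l-i} r_{l-1} = \frac{\epsilon r_i}{1 - \epsilon - \epsilon^2}$, valid since $\epsilon(1+\epsilon) < 1$ for $\epsilon \le 1/2$, the $\lvert o_i \rvert$-terms contribute exactly $\frac{\epsilon^2}{1-\epsilon-\epsilon^2} \sum_i \lvert o_i \rvert r_i$, while the $E$-terms contribute at most $\frac{\epsilon E}{1-\epsilon} \sum_l r_{l-1} = O(E)$ because $\sum_l r_l = \frac{(1+\epsilon)^m - 1}{\epsilon n} = O(1/\epsilon)$ for $m = \lceil \log_{1+\epsilon} 2n \rceil$. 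Adding $\sum_i \lvert o_i \rvert r_i$ back and using $1 + \frac{\epsilon^2}{1-\epsilon-\epsilon^2} = \frac{1-\epsilon}{1-\epsilon-\epsilon^2}$ gives \eqref{eqn:step2}. The main obstacle is the second step: one has to apply the per-call coverage guarantee to the right ``universe'' — the currently uncovered points near $\OPT_{D'}$, not merely $o_l$ — so that $y_l$ really decays geometrically rather than accumulating an $\Omega(\epsilon n)$ error; everything else is bookkeeping and summing geometric series.
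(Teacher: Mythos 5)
The gap is in your second step. The claim that ``the $l$-th call removes all but an $\epsilon$-fraction of it, up to an additive $E$'', where ``it'' is the set $U_l$ of still-uncovered points at distance less than $r_l$ from $\OPT_{D'}$, is a \emph{per-set} coverage guarantee, and it does not follow from the argument behind \cref{lem:privateMaxCoverRelax} — indeed it can fail outright. The exponential mechanism inside \cref{alg:privateGridSetCover} scores a grid point by $\lvert \cover[g]\rvert$, its marginal coverage of the \emph{whole} remaining dataset, not its coverage of $U_l$. So \cref{lem:expMech} only guarantees that each pick's total marginal coverage is within $\tfrac{2}{\epsilon_E}\log(\lvert G_l\rvert/\gamma)$ of the best total marginal coverage; since the $U_l$-marginal is dominated by the total marginal, this lower-bounds how many points the pick removes overall but says nothing about how many of them lie in $U_l$. (The written proof of \cref{lem:privateMaxCoverRelax} does assert the exponential-mechanism guarantee for the $o_l$-marginals $\Delta_i h$, but what is actually available — and all that its statement and its later use require — is the total-coverage version $\lvert a_l\rvert \ge (1-\epsilon)\lvert T\rvert - E$ for a target set $T$ covered by $k$ grid balls.) Concretely, suppose the remaining data at call $l$ contains $k/\epsilon$ tight clusters of points whose true level is much larger than $l$, mutually far from each other, from $\OPT_{D'}$ and from $U_l$, each coverable by a single ball of radius $r_l+t_l\sqrt{d'}$ and each of size far exceeding $\lvert U_l\rvert$. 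Then with high probability every one of the $k'$ picks in call $l$ is spent on such a decoy cluster and no point of $U_l$ is removed, so $y_{l+1}=y_l+(\text{new level-}l\text{ points})$ and your recursion $y_{l+1}\le \epsilon(y_l+\lvert o_l\rvert)+E$ is violated by a margin of order $\lvert U_l\rvert$, not $E$.

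This is fatal to the proof as structured, because in your first step you discard exactly the term that makes this scenario harmless: the decoy points are ``early'' ($j(p)<i(p)$), they get charged the small radius $r_l$ instead of their large true radius, and it is this negative contribution that pays for the delay of $U_l$. The paper's proof keeps that credit by working with the cumulative tails $A_i=\sum_{j\ge i}\lvert a_j\rvert$ and $O_i=\sum_{j\ge i}\lvert o_j\rvert$: it needs only the total-coverage bound $\lvert a_i\rvert \ge (1-\epsilon)(A_i-O_{i+1})-E$, and the Abel summation plus the recursion $A_{i+1}\le \tfrac{\epsilon}{1-\epsilon}\lvert a_i\rvert + O_{i+1}+\tfrac{E}{1-\epsilon}$ automatically nets early coverage against late coverage. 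Your steps 1 and 3 (the pointwise rewriting and the geometric-series algebra, including the constant $\tfrac{1-\epsilon}{1-\epsilon-\epsilon^2}$) are fine, but to close step 2 you would have to either establish a per-set coverage guarantee the algorithm does not satisfy, or retain the negative terms you dropped — at which point the bookkeeping collapses back into the paper's argument.
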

	
	\begin{proof}
		We define $O_i = \sum_{j=i}^m |o_j|$ and $A_i = \sum_{j=i}^m |a_j|$. We then have
		\begin{align}\label{eqn:telescope}
			\sum_{i=1}^m \lvert a_i \rvert r_i &= \sum_{i=1}^m A_i (r_i - r_{i-1}).
		\end{align}
		We note that centers in $\OPT_{D'}$ cover $n - O_{i+1}$ points at a maximum distance of $r_i$. We also know that \cref{alg:privateKMeans} has already covered $n - A_i$ points at a distance of $r_{i-1} + t_{i-1}\sqrt{d'}$. It then follows that there are some $k$ grid points in $G_i$ (snapping the centers in $\OPT_{D'}$ to grid) that cover at least $(n - O_{i+1}) - (n - A_i) = A_i - O_{i+1}$ uncovered points at a distance of at a distance of $r_i + t_i \sqrt{d'}$. From the \cref{lem:privateMaxCoverRelax} guarantee, we know that
		\begin{align*}
			\lvert a_i \rvert \geq (1-\epsilon)(A_i - O_{i+1}) - E,
		\end{align*}
		where $E = O\left(\frac{k \log n}{\epsilon_E \cdot \poly(\epsilon)} \log\frac{n }{\gamma}\right)$. Since $A_i = \lvert a_i \rvert + A_{i+1}$, we have that
		\begin{align*}
			\lvert a_i \rvert \geq \left(\frac{1-\epsilon}{\epsilon}\right)(A_{i+1} - O_{i+1}) - \frac{E}{\epsilon} \\
			\Rightarrow A_{i+1} \leq \left(\frac{\epsilon}{1-\epsilon}\right) \lvert a_i \rvert + O_{i+1} + \frac{E}{1-\epsilon}.
		\end{align*}
		Substituting this in \cref{eqn:telescope}, we continue as follows:
		\begin{align*}
			\sum_{i=1}^m \lvert a_i\rvert r_i &=\sum_{i=1}^m A_i (r_i - r_{i-1}) \\
			&\leq \sum_{i=1}^m \left( \left(\frac{\epsilon}{1-\epsilon}\right) \lvert a_{i-1} \rvert + O_{i} + \frac{E}{1-\epsilon} \right) (r_i - r_{i-1}) \nonumber \\
			&= \sum_{i=1}^m \left(\frac{\epsilon \lvert a_{i-1} \rvert}{1-\epsilon}\right) (r_i - r_{i-1}) + \sum_{i=1}^m \lvert o_i\rvert r_i + \frac{E}{1-\epsilon} (r_m - r_0) \\
			&\leq \sum_{i=1}^m\left(\frac{\epsilon^2 \lvert a_{i-1}\rvert}{1-\epsilon}\right)  r_{i-1} + \sum_{i=1}^m \lvert o_i\rvert r_i + \frac{2E}{1-\epsilon}\\
			\Rightarrow \sum_{i=1}^m \lvert a_i\rvert r_i \left(1 - \frac{\epsilon^2}{1-\epsilon} \right) &\leq \sum_{i=1}^m \lvert o_i\rvert r_i + \frac{2E}{1-\epsilon} \\
			\Rightarrow \sum_{i=1}^m \lvert a_i\rvert r_i &\leq \frac{1-\epsilon}{1 - \epsilon - \epsilon^2} \sum_{i=1}^m \lvert o_i\rvert r_i + \frac{2E}{1 - \epsilon - \epsilon^2} 
		\end{align*}
		Substituting the order term $E$ and using that $\epsilon$ is bounded away from $1$, we get the desired inequality.
	\end{proof}
	
	\begin{proof}[Proof of \cref{lem:costIncrease}]
		We recall that $a_i = D' \cap B_{r_i + t_i \sqrt{d'}}(C_i)$. Since $D' \in B_1 (0)$, and $r_m = 2$, $\cup_{i=1}^m a_i = D'$. It follows that
		\begin{align*}
			\sum_{p \in D'} d(p, \grid[p]) &= \sum_{i=1}^m \sum_{ p \in a_i } d(p, \grid[p]) \\
			&\leq \sum_{i=1}^m \sum_{ p \in a_i } r_i + t_i \sqrt{d} \\
			&\leq \sum_{i=1}^m \lvert a_i \rvert (r_i + t_i \sqrt{d}) \\
			&\leq (1 + \epsilon) \sum_{i=1}^m \lvert a_i \rvert r_i
		\end{align*}
		This proves the first inequality. 
		From \cref{eqn:step1} and \cref{eqn:step2} we see that $\sum_{i=1}^m |a_i| r_i$ obeys the inequality
		\begin{align}
			\sum_{i=1}^m |a_i| r_i &\leq \frac{1-\epsilon}{1 - \epsilon - \epsilon^2} ( ( 1 + \epsilon) f_{D'}(\OPT_{D'}) + 1) + O\left(\frac{k \log n}{\epsilon_E \cdot \poly(\epsilon)} \log\frac{n }{\gamma}\right) \nonumber \\
			\Rightarrow (1 + \epsilon) \sum_{i=1}^m |a_i| r_i &\leq \left(1 + \frac{\epsilon}{1 - \epsilon - \epsilon^2}\right) ( ( 1 + \epsilon) f_{D'}(\OPT_{D'}) + 1) +O\left(\frac{k \log n}{\epsilon_E\cdot \poly(\epsilon)} \log\frac{n }{\gamma}\right)
		\end{align}
		with probability $1-\gamma$. Absorbing smaller order terms into the additive error and using \cref{lem:costIncrease}, we get
		\begin{align*}
			\sum_{i=1}^m d(p, \grid[p]) \leq \left(1 + \frac{3\epsilon}{1 - \epsilon - \epsilon^2}\right) f_{D'} (\OPT_{D'}) + O\left(\frac{k \log n}{\epsilon_E\cdot \poly(\epsilon)} \log \frac{n}{\gamma} \right).
		\end{align*}	
	\end{proof}
	
	In \cref{lem:D''ClusterCostBound} we bound the $k$-means cost of the proxy dataset $D''$ in terms of the $k$-means cost of $D'$. 
	
	\begin{lemma}\label{lem:D''ClusterCostBound}
		With probability $1-\gamma$,
		\begin{align*}
			f_{D''} (\OPT_{D'}) \leq \left(4 + \frac{6\epsilon}{1 - \epsilon - \epsilon^2} \right) f_{D'} (\OPT_{D'}) + O\left(\frac{k \log n}{\epsilon_E\cdot \poly(\epsilon)} \log \frac{n}{\gamma} \right) + O\left(\frac{k \log n}{\epsilon_L \cdot \poly(\epsilon)}\right).
		\end{align*}
	\end{lemma}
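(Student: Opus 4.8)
The plan is to split the $k$-means cost of the noisy proxy $D''$ into the contribution of the (noiseless) assignment of points of $D'$ to $C$ and the contribution of the Laplace perturbations of the multiplicities, bound each piece, and finish with a union bound. Recall that $D''$ contains $n'_c = n_c + \Lap(1/\epsilon_L)$ copies of each $c\in C$ (clipped to a nonnegative integer), so
\begin{align*}
	f_{D''}(\OPT_{D'}) = \sum_{c\in C} n'_c\, d(c,\OPT_{D'}) \le \sum_{c\in C} n_c\, d(c,\OPT_{D'}) + \sum_{c\in C}\lvert\Lap(1/\epsilon_L)\rvert\cdot d(c,\OPT_{D'}),
\end{align*}
using $n'_c \le n_c + \lvert\Lap(1/\epsilon_L)\rvert$ since $n_c\ge 0$ (and noting that rounding to the nearest nonnegative integer only perturbs the bound by $O(\lvert C\rvert\cdot\max_c d(c,\OPT_{D'}))$, which will be absorbed). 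The first sum equals $\sum_{p\in D'} d(\grid[p],\OPT_{D'})$; if $p$ is in fact assigned in \cref{alg:privateKMeans} to a point of $C$ closer than its set-cover witness $\grid[p]$, this only decreases the sum, so we may pessimistically keep $\grid[p]$.

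For the noiseless part I would use \cref{lem:triangle} with $p=2$ applied to the Euclidean metric (the ordinary triangle inequality fails for squared distances, so this step is essential), together with $\kCenter[p]\in\OPT_{D'}$:
\begin{align*}
	d(\grid[p],\OPT_{D'}) \le d(\grid[p],\kCenter[p]) \le 2\, d(p,\grid[p]) + 2\, d(p,\kCenter[p]) = 2\, d(p,\grid[p]) + 2\, d(p,\OPT_{D'}).
\end{align*}
Summing over $p\in D'$ gives $\sum_{p\in D'} d(\grid[p],\OPT_{D'}) \le 2\sum_{p\in D'} d(p,\grid[p]) + 2 f_{D'}(\OPT_{D'})$. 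Feeding in \cref{lem:costIncrease} (which holds with probability $1-\gamma$) turns the first term into $2\bigl(1+\tfrac{3\epsilon}{1-\epsilon-\epsilon^2}\bigr) f_{D'}(\OPT_{D'}) + O\bigl(\tfrac{k\log n}{\epsilon_E\poly(\epsilon)}\log\tfrac n\gamma\bigr)$, which combines with the $2 f_{D'}(\OPT_{D'})$ to give exactly the $\bigl(4+\tfrac{6\epsilon}{1-\epsilon-\epsilon^2}\bigr)$ multiplicative constant claimed.

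For the noise part, note that $d(c,\OPT_{D'})$ is bounded: the centers of $\OPT_{D'}$ lie in $\mathrm{conv}(D')\subset B_1(0)$ and the grid points of $C$ lie in $[-1,1]^{d'}$, so $d(c,\OPT_{D'})=O(d')=O(\poly(\epsilon^{-1})\log n)$ in the worst case (and $O(1)$ for the $c$ that actually receive data). Since $\lvert C\rvert = O(m k') = O\bigl(\tfrac{k\log n}{\poly(\epsilon)}\bigr)$, a Bernstein-type tail bound for a weighted sum of independent Laplace variables with bounded coefficients — or, more crudely, a union bound giving $\lvert\Lap(1/\epsilon_L)\rvert = O\bigl(\tfrac1{\epsilon_L}\log(\lvert C\rvert/\gamma)\bigr)$ simultaneously for all $c\in C$ — shows that with probability $1-\gamma$ the noise term is $O\bigl(\tfrac{k\log n}{\epsilon_L\poly(\epsilon)}\bigr)$. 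A union bound over the two failure events and a rescaling of $\gamma$ by a constant then yields the stated inequality.

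I do not expect a genuine obstacle: the two tail bounds are routine and the only thing requiring care is the constant bookkeeping — using \cref{lem:triangle} rather than the plain triangle inequality, combining it cleanly with the $\bigl(1+\tfrac{3\epsilon}{1-\epsilon-\epsilon^2}\bigr)$ factor of \cref{lem:costIncrease} to land on $4+\tfrac{6\epsilon}{1-\epsilon-\epsilon^2}$, and confirming that negative draws of $n'_c$ and any mismatch between $\grid[p]$ and the center $p$ is assigned to both only help (via $\max(n'_c,0)\le n_c+\lvert\Lap(1/\epsilon_L)\rvert$ and "$p$'s assigned center in $C$ is no farther than $\grid[p]$").
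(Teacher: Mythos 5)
Your proposal follows essentially the same route as the paper's proof: the same split of $f_{D''}(\OPT_{D'})$ into the noiseless reassignment term plus the Laplace-noise term, the same use of \cref{lem:triangle} with $p=2$ combined with \cref{lem:costIncrease} to reach the $4+\tfrac{6\epsilon}{1-\epsilon-\epsilon^2}$ factor, and a tail bound for the weighted sum of Laplace variables that is exactly the paper's \cref{lem:chernoff}. The only caveat is that your ``cruder'' fallback of a per-coordinate union bound $\lvert\Lap(1/\epsilon_L)\rvert=O(\log(\lvert C\rvert/\gamma)/\epsilon_L)$ would cost an extra logarithmic factor over the stated $O\bigl(\tfrac{k\log n}{\epsilon_L\poly(\epsilon)}\bigr)$, so the Chernoff-type bound of \cref{lem:chernoff} is the one to use, as the paper does.
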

	
	Since $D''$ is constructed by using noisy counts of candidate centers, we defer its proof and first derive a technical lemma to bound the error added by the privacy preserving algorithm.
	
	\begin{lemma} \label{lem:chernoff}
		Let $X_1, \dots X_{k'}$ be i.i.d. $\Lap (\frac{1}{\epsilon_L})$ random variables. Then for any positive constants $c_1, \dots , c_{k'}$, with probability $1-\gamma$,
		\begin{align*}
			X:= \sum_{i=1}^{k'} c_i \lvert X_i \rvert \leq (\log 2) \sum_{i=1}^{k'} \frac{c_i}{ \epsilon_L} + \frac{2 \max_i c_i}{\epsilon_L} \log 1/\gamma .
		\end{align*}
	\end{lemma}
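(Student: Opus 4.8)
The plan is to prove this by a standard moment-generating-function (Chernoff) argument for a weighted sum of independent exponential random variables. The first observation is that if $X_i \sim \Lap(1/\epsilon_L)$ then $\lvert X_i\rvert$ is exponentially distributed with rate $\epsilon_L$, that is $P(\lvert X_i\rvert > x) = e^{-\epsilon_L x}$ for all $x \ge 0$; consequently $c_i \lvert X_i\rvert$ is exponential with mean $c_i/\epsilon_L$, and its moment-generating function is $\E[e^{s c_i \lvert X_i\rvert}] = (1 - s c_i/\epsilon_L)^{-1}$ for every $s$ with $0 \le s < \epsilon_L/c_i$. Since the $X_i$ are independent, for all $0 \le s < \epsilon_L/\max_i c_i$ we obtain $\E[e^{sX}] = \prod_{i=1}^{k'} (1 - s c_i/\epsilon_L)^{-1}$.

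Next I would apply Markov's inequality to $e^{sX}$ with the specific choice $s = \epsilon_L/(2\max_i c_i)$, which guarantees that each ratio $x_i := s c_i/\epsilon_L$ lies in $[0,1/2]$. The elementary inequality driving the bound is that $-\log(1-x) \le (2\log 2)\,x$ on $[0,1/2]$, which holds because $x \mapsto -\log(1-x)$ is convex and coincides with the linear function $(2\log 2)\,x$ at the two endpoints $x=0$ and $x=1/2$. Plugging $x = x_i$ into this inequality gives
\begin{align*}
    \E[e^{sX}] = \exp\!\Bigl( \textstyle\sum_{i=1}^{k'} -\log(1 - x_i) \Bigr) \le \exp\!\Bigl( (2\log 2)\, s \textstyle\sum_{i=1}^{k'} c_i/\epsilon_L \Bigr),
\end{align*}
so that $P(X > a) \le \exp\bigl( -sa + (2\log 2)\, s \sum_i c_i/\epsilon_L \bigr)$. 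Setting the right-hand side equal to $\gamma$, solving for $a$, and using $1/s = 2\max_i c_i/\epsilon_L$ yields a threshold of the form $a = (2\log 2)\sum_i c_i/\epsilon_L + (2\max_i c_i/\epsilon_L)\log(1/\gamma)$, i.e.\ exactly the claimed form, with the coefficient of $\sum_i c_i/\epsilon_L$ an absolute constant.

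I do not expect any step to be a genuine obstacle: this is a textbook sub-exponential tail estimate. The only two places needing a moment of care are (i) the verification of the convexity inequality on $[0,1/2]$, and (ii) the choice of the free MGF parameter $s$, which is tied to $\max_i c_i$ precisely so that the coefficient of $\log(1/\gamma)$ comes out as $2\max_i c_i/\epsilon_L$ while the coefficient of $\sum_i c_i/\epsilon_L$ remains an absolute constant. An alternative is to use the cruder bound $(1-x)^{-1} \le e^{2x}$ on $[0,1/2]$ and then optimize over $s$, but tying $s$ to $\max_i c_i$ as above produces the stated form most directly.
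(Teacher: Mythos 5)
Your proof is the same MGF/Chernoff computation as the paper's: observe that $\lvert X_i\rvert$ is exponential with rate $\epsilon_L$, multiply the moment generating functions, bound $-\log(1-x)$ by a linear function on $[0,1/2]$, and set the free parameter to $\epsilon_L/(2\max_i c_i)$. The one place you diverge is the linear bound, and there you are right where the paper is not: convexity gives the chord bound $-\log(1-x)\le (2\log 2)\,x$ on $[0,1/2]$, whereas the paper asserts $-\log\left(1-\tfrac{c_i t}{\epsilon_L}\right) < (\log 2)\tfrac{c_i t}{\epsilon_L}$, which is false for every positive argument since $-\log(1-x)\ge x > (\log 2)x$. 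Consequently what you prove --- and what is actually provable --- is the bound with leading constant $2\log 2$ in front of $\sum_i c_i/\epsilon_L$, not the $\log 2$ appearing in the lemma statement, so your remark that this is ``exactly the claimed form'' is off by that factor of $2$. Indeed the stated constant $\log 2$ cannot hold in general: take all $c_i$ equal and $k'$ large, so that $X$ concentrates at its mean $\sum_i c_i/\epsilon_L$, which for fixed $\gamma$ eventually exceeds $(\log 2)\sum_i c_i/\epsilon_L + \tfrac{2\max_i c_i}{\epsilon_L}\log(1/\gamma)$. This is a defect of the lemma as stated (inherited from the paper's faulty inequality), not of your argument, and it is harmless downstream because the lemma is only ever invoked inside $O(\cdot)$ error terms where the absolute constant is immaterial.
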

	
	\begin{proof}
		The moment generating function of $c \lvert X \rvert$ for $X \sim \Lap(\frac{1}{\epsilon_L})$ is $\frac{\epsilon_L}{\epsilon_L - c t} = \frac{1}{1 - \frac{ct}{\epsilon_L}}$ for $\lvert c t \rvert < \epsilon_L$. It then follows that for $t \leq \frac{\epsilon_L}{\max_i c_i}$,
		\begin{align*}
			M_X (t) = \prod_{i=1}^{k'} \frac{1}{1 - \frac{c_i t}{\epsilon_L}} .
		\end{align*}
		We use the Chernoff bound
		\begin{align*}
			P(X > \Delta n) &= P(e^{tX} > e^{t \Delta n}) \\
			&\leq \frac{\mathbb{E}[e^{tX}]}{e^{t \Delta n}} \\
			&\leq \frac{M_X (t)}{e^{t \Delta n}}.
		\end{align*}
		If we require that this event occur with probability at most $\gamma$, we derive a bound on $\Delta n$ that would suffice.
		\begin{align*}
			\prod_{i=1}^{k'} \left(\frac{1}{1 - \frac{c_i t}{\epsilon_L}}\right) \cdot \frac{1}{e^{t \Delta n}} &\leq \gamma \\
			\Leftrightarrow \sum_{i=1}^{k'} \log \left(\frac{1}{1 - \frac{c_i t}{\epsilon_L}}\right) - t \Delta n &\leq \log \gamma \\
			\Leftrightarrow t \Delta n &\geq -\sum_{i=1}^{k'} \log \left(1 - \frac{c_i t}{\epsilon_L}\right) + \log 1/\gamma
		\end{align*}
		For $\frac{c_i t}{\epsilon_L}<0.5$, $- \log \left( 1 - \frac{c_i t}{\epsilon_L}\right) < (\log 2) \frac{c_i t}{\epsilon_L}$ by convexity of $-\log (1 - x)$ in the argument $x$. Restricting $t$ to $[0, \frac{\epsilon_L}{2\max_i c_i}]$, we continue.
		\begin{align*}
			\Leftarrow t \Delta n &\geq \log 2 \sum_{i=1}^{k'} \frac{c_i t }{ \epsilon_L} + \log 1/\gamma\\
			\Leftrightarrow \Delta n &\geq \log 2 \sum_{i=1}^{k'} \frac{c_i}{ \epsilon_L} + \frac{1}{t}\log 1/\gamma
		\end{align*}
		Setting $t = \frac{\epsilon_L}{2\max_i c_i}$, we get the desired result.
	\end{proof}
	
	\begin{proof}[Proof of \cref{lem:D''ClusterCostBound}]
		In $D''$ each grid point occurs with multiplicity $n_c' = n_c + X_c$ where $X_c \sim \mbox{Lap}\left(\frac{2}{\epsilon_L} \right)$.
		\begin{align*}
			f_{D''} (\OPT_{D'}) &= \sum_{p \in D''} d(p, \OPT_{D'}) \\
			&= \sum_{c \in C} n_c' d(c, \OPT_{D'}) \\
			&= \sum_{c \in C} \left(n_c d(c, \argmin_{\mu \in D''} d(c,\mu)) + X_c d(c, \argmin_{\mu \in D''} d(c,\mu))\right) \\
			&\leq \sum_{c \in C} n_c d(c, \argmin_{\mu \in \OPT_{D'}} d(c,\mu)) + \sum_{c \in C} \lvert X_c \rvert \cdot 2
		\end{align*}
		Using that points in $D''$ are enumerated by running over centers in $c \in C$ with multiplicity $n_c$, and by applying \cref{lem:chernoff}, we get that with probability $1-\gamma$,
		\begin{align*}
			f_{D''} (\OPT_{D'}) &\leq \sum_{p \in D'} d(\grid[p], \argmin_{\mu \in \OPT_{D'}} d(\grid[p],\mu)) + O\left(\frac{k \log n}{\epsilon^2\cdot \epsilon_L}\right) + O \left( \frac{\log 1/\gamma}{\epsilon_L} \right)\\
			&\leq \sum_{p \in D'} 2 \left(d(\grid[p], p) + d(p, \argmin_{\mu \in \OPT_{D'}} d(\grid[p],\mu))\right) + O\left(\frac{k \log n}{\epsilon_L \cdot \poly(\epsilon)}\right).
		\end{align*}
		In the above we drop $\log 1/\gamma$ as it is asymptotically dominated by the other error term for any failure probability polynomially small in $n$. Using \cref{lem:costIncrease} to simplify the first term, we have that with probability $1-2\gamma$,
		\begin{align*}
			f_{D''} (\OPT_{D'}) &\leq 2 \left(1 + \frac{3\epsilon}{1 - \epsilon - \epsilon^2}\right) f_{D'} (\OPT_{D'}) + 2 f_{D'}(\OPT_{D'}) + O\left(\frac{k \log n}{\epsilon_E\cdot \poly(\epsilon)} \log \frac{n}{\gamma} \right) + O\left(\frac{k \log n}{\epsilon_L \cdot \poly(\epsilon)}\right) \\
			&\leq \left(4 + \frac{6\epsilon}{1 - \epsilon - \epsilon^2} \right) f_{D'} (\OPT_{D'}) + O\left(\frac{k \log n}{\epsilon_E\cdot \poly(\epsilon)} \log \frac{n}{\gamma} \right) + O\left(\frac{k \log n}{\epsilon_L \cdot \poly(\epsilon)}\right).
		\end{align*}
	\end{proof}
	
	In \cref{lem:D'ClusterCostBound} we bound the error incurred on using the $k$-means solution for the proxy dataset $D''$ for the dataset $D'$.
	
	\begin{lemma}\label{lem:D'ClusterCostBound}
		Let $\mathcal{A}$ be the clustering algorithm used in \cref{alg:privateKMeans;line:cluster} of \cref{alg:privateKMeans}. If $\mathcal{A}$ has the utility guarantee
		\begin{align*}
			f_S(\mathcal{A}(S)) \leq E_M \cdot f_{S}(\OPT_S) + E_A
		\end{align*}
		then 
		\begin{align*}
			f_{D'} (\mathcal{A}(D'')) \leq (8 E_M + 2 + (8 E_M + 4) \epsilon) f_{D'}(\OPT_{D'}) + 2 E_A + O\left(\frac{k \log n}{\epsilon_E} \log \frac{kn}{\gamma}\right).
		\end{align*}
	\end{lemma}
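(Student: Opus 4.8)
The plan is to compare the cost of the solution $\mathcal A(D'')$ on three datasets in turn — $D'$, the \emph{noiseless} proxy $\hat D := \{\,\grid[p]:p\in D'\,\}$ (equivalently, each $c\in C$ taken with multiplicity $n_c$), and the noisy proxy $D''$, which is $\hat D$ with each multiplicity $n_c$ perturbed by an independent $\Lap(1/\epsilon_L)$ variable $X_c$ — and then substitute the bounds already established. The first comparison is the one-sided $2$-approximate triangle inequality for squared distances (\cref{lem:triangle} with $p=2$): pairing each $p\in D'$ with its image $\grid[p]\in\hat D$ and letting $\mu$ be the center of $S$ nearest $\grid[p]$ gives $d(p,S)\le d(p,\mu)\le 2d(p,\grid[p])+2d(\grid[p],S)$, hence for \emph{any} center set $S$,
\begin{align*}
 f_{D'}(S)\le 2\sum_{p\in D'}d(p,\grid[p])+2f_{\hat D}(S).
\end{align*}
Taking $S=\mathcal A(D'')$ and bounding $\sum_{p}d(p,\grid[p])$ by \cref{lem:costIncrease} reduces the task to controlling $f_{\hat D}(\mathcal A(D''))$.

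For the second comparison I use the exact identity $f_{\hat D}(S)=f_{D''}(S)-\sum_{c\in C}X_c\,d(c,S)$, valid for every $S$. Since after the dimension reduction and rescaling of \cref{alg:privateKMeans} all of $D'$ — hence every grid point in $C$, and every center returned by $\mathcal A$ (which lies in the convex hull of $D''$) — sits in a ball of radius $O(1)$, we have $d(c,S)=O(1)$ uniformly. Thus \cref{lem:chernoff} with $c_i=O(1)$ and $k'=|C|=O(k\log n/\epsilon^2)$ bounds $\bigl|\sum_{c}X_c\,d(c,S)\bigr|=O\!\left(\tfrac{k\log n}{\epsilon_L\,\poly(\epsilon)}\right)$ with probability $1-\gamma$, so $f_{\hat D}(\mathcal A(D''))\le f_{D''}(\mathcal A(D''))+O\!\left(\tfrac{k\log n}{\epsilon_L\,\poly(\epsilon)}\right)$.

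The third comparison combines the utility guarantee of $\mathcal A$ with optimality of $\OPT_{D''}$ for $D''$: $f_{D''}(\mathcal A(D''))\le E_M f_{D''}(\OPT_{D''})+E_A\le E_M f_{D''}(\OPT_{D'})+E_A$, and then \cref{lem:D''ClusterCostBound} bounds $f_{D''}(\OPT_{D'})$ by $\bigl(4+\tfrac{6\epsilon}{1-\epsilon-\epsilon^2}\bigr)f_{D'}(\OPT_{D'})$ plus an $O\!\left(\tfrac{k\log n}{\epsilon_E\,\poly(\epsilon)}\log\tfrac n\gamma\right)+O\!\left(\tfrac{k\log n}{\epsilon_L\,\poly(\epsilon)}\right)$ additive term. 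Substituting the three comparisons into one another, plugging in $\sum_p d(p,\grid[p])\le\bigl(1+\tfrac{3\epsilon}{1-\epsilon-\epsilon^2}\bigr)f_{D'}(\OPT_{D'})+O\!\left(\tfrac{k\log n}{\epsilon_E\,\poly(\epsilon)}\log\tfrac n\gamma\right)$ from \cref{lem:costIncrease}, and taking a union bound over the $O(1)$ failure events (each of probability $\le\gamma$), yields
\begin{align*}
 f_{D'}(\mathcal A(D''))\le\Bigl(2\bigl(1+\tfrac{3\epsilon}{1-\epsilon-\epsilon^2}\bigr)+2E_M\bigl(4+\tfrac{6\epsilon}{1-\epsilon-\epsilon^2}\bigr)\Bigr)f_{D'}(\OPT_{D'})+2E_A+O\!\left(\tfrac{k\log n}{\epsilon_E}\log\tfrac{kn}{\gamma}\right),
\end{align*}
after absorbing $E_M=O(1)$, the $\poly(\epsilon)$ denominators, and the $\epsilon_L$-terms (which carry no extra $\log$) into the displayed $\epsilon_E$ additive error, the $\log(kn/\gamma)$ rather than $\log(n/\gamma)$ arising from the $k'$-fold union bound already folded into \cref{lem:privateMaxCoverRelax}. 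Bounding $\tfrac{\epsilon}{1-\epsilon-\epsilon^2}$ by a constant multiple of $\epsilon$ (valid since $\epsilon\le 1/2$) and collecting the $\epsilon$-linear corrections then gives the claimed multiplicative factor $8E_M+2+(8E_M+4)\epsilon$.

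The triangle-inequality bookkeeping and the final collection of constants are routine. The step that needs care is the passage from $\hat D$ to $D''$: because the noisy counts $n_c'=n_c+X_c$ need not be nonnegative and need not sum to $n$, $D''$ is not a reweighting of $D'$ and there is no direct coupling of the two; routing through the noiseless intermediate $\hat D$ and charging the discrepancy $\sum_c X_c\,d(c,\cdot)$ to the Laplace-tail bound \cref{lem:chernoff} is what makes this clean, and it is precisely here that one uses the $O(1)$ diameter of the relevant point set so that each term $d(c,\cdot)$ is a constant.
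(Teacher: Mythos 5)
Your proof is correct and follows the same skeleton as the paper's: the squared triangle inequality (\cref{lem:triangle}) to charge $f_{D'}(\mathcal A(D''))$ to the movement cost plus the proxy cost, the guarantee of $\mathcal A$ together with $f_{D''}(\OPT_{D''})\le f_{D''}(\OPT_{D'})$ and \cref{lem:D''ClusterCostBound}, and \cref{lem:costIncrease} for $\sum_p d(p,\grid[p])$. The one place you genuinely diverge is the passage from the noiseless proxy to $D''$: the paper simply writes $\sum_{p\in D'} d(\grid[p],\argmin_{\mu\in\mathcal A(D'')}d(\grid[p],\mu))\le f_{D''}(\mathcal A(D''))$, which as stated needs all the Laplace perturbations $X_c$ to be nonnegative; your detour through $\hat D$ with the identity $f_{D''}(S)=f_{\hat D}(S)+\sum_c X_c\,d(c,S)$ and a \cref{lem:chernoff} tail bound repairs exactly this step, at the price of an extra $O\!\left(\tfrac{k\log n}{\epsilon_L\,\poly(\epsilon)}\right)$ additive term that is of the same order as terms already present, so nothing is lost. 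Two caveats, both shared with the paper rather than introduced by you: the claim $d(c,S)=O(1)$ uniformly over $c\in C$ is not fully justified, since \cref{alg:privateGridSetCover} can output uniformly random grid points anywhere in $[-1,1]^{d'}$, whose squared distance to centers near the data can be $\Theta(d')$ (the paper's own bound $d(c,\OPT_{D'})\le 2$ in \cref{lem:D''ClusterCostBound} has the same issue; even with the $d'$ factor the final theorem's error bound survives); and your final coefficient $2\bigl(1+\tfrac{3\epsilon}{1-\epsilon-\epsilon^2}\bigr)+2E_M\bigl(4+\tfrac{6\epsilon}{1-\epsilon-\epsilon^2}\bigr)$ does not literally reduce to the stated $8E_M+2+(8E_M+4)\epsilon$ by bounding $\tfrac{1}{1-\epsilon-\epsilon^2}$ by a constant — but the paper's own proof ends with the even looser $\bigl(8+\tfrac{12\epsilon}{1-\epsilon-\epsilon^2}\bigr)(E_M+1)$, so this is constant-level bookkeeping slack in the source, not a gap in your argument.
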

	
	\begin{proof}
		From the clustering algorithm guarantee we have that
		\begin{align*}
			f_{D''}(\mathcal{A}(D'')) \leq E_M \cdot f_{D''}(\OPT_{D''}) + E_A.
		\end{align*}
		By definition, $f_{D''} (\OPT_{D''}) < f_{D''}(\OPT_{D'})$. Substituting the bound from \cref{lem:D''ClusterCostBound}, we get
		\begin{align*}
			f_{D''}(\mathcal{A}(D'')) \leq \left(8 + \frac{12\epsilon}{1 - \epsilon - \epsilon^2} \right) (E_M + 1)  f_{D'}(\OPT_{D'}) + 2 E_A + O\left(\frac{k \log n}{\epsilon_E\cdot \poly(\epsilon)} \log \frac{n}{\gamma} \right) + O\left(\frac{k \log n}{\epsilon_L \cdot \poly(\epsilon)}\right).
		\end{align*}
		So then
		\begin{align*}
			f_{D'} ( \mathcal{A}(D'')) &= \sum_{p \in D'} d(p, \argmin_{\mu \in \mathcal{A}(D'')} d(p,\mu)) \\
			&\leq \sum_{p \in D'} d(p, \argmin_{\mu \in \mathcal{A}(D'')} d(\grid[p],\mu)) \\
			&\leq \sum_{p \in D'} 2 \left(d(p, \grid[p]) + d(\grid[p], \argmin_{\mu \in \mathcal{A}(D'')} d(\grid[p],\mu))\right) \\
			&\leq \left(2 \sum_{p \in D'} d(p, \grid[p]) \right) + 2 f_{D''}(\mathcal{A}(D'')) \\
			&\leq  2 \cdot \left(4 + \frac{6\epsilon}{1 - \epsilon - \epsilon^2} \right) f_{D'} (\OPT_{D'}) + O\left(\frac{k \log n}{\epsilon_E\cdot \poly(\epsilon)} \log \frac{n}{\gamma} \right) + O\left(\frac{k \log n}{\epsilon_L \cdot \poly(\epsilon)}\right)  + \\
			&2 E_M \cdot \left(4 + \frac{6\epsilon}{1 - \epsilon - \epsilon^2} \right) f_{D'} (\OPT_{D'}) + O\left(\frac{k \log n}{\epsilon_E\cdot \poly(\epsilon)} \log \frac{n}{\gamma} \right) + O\left(\frac{k \log n}{\epsilon_L \cdot \poly(\epsilon)}\right)+ 2 E_A \\
			&\leq \left(8 + \frac{12\epsilon}{1 - \epsilon - \epsilon^2} \right) (E_M + 1)  f_{D'}(\OPT_{D'}) + 2 E_A + O\left(\frac{k \log n}{\epsilon_E\cdot \poly(\epsilon)} \log \frac{n}{\gamma} \right) + O\left(\frac{k \log n}{\epsilon_L \cdot \poly(\epsilon)}\right).
		\end{align*}
	\end{proof}
	To complete the utility analysis, we need to account for the projection and scaling as well as the Gaussian noise added to maintain privacy. In \cref{lem:projectionAndScaling} we derive an expression for the utility without accounting for any noise in and \cref{thm:finalUtility} we derive an expression for the net utility guarantee of \cref{alg:privateKMeans}. 
	
	\begin{lemma}\label{lem:projectionAndScaling}
		If we cluster $D$ according to its projected and scaled version $D'$, we get a set $S$ of size $k$ such that
		\begin{align*}
			 f_D (S) &\leq (1 + \epsilon) \left(8 + \frac{12\epsilon}{1 - \epsilon - \epsilon^2} \right) (E_M + 1)  f_{D}(\OPT_{D}) + 2(1 + \epsilon)\Delta^2 E_A + \\
			 & + O\left(\frac{k \Delta^2 \log n}{\epsilon_E\cdot \poly(\epsilon)} \log \frac{n}{\gamma} \right) + O\left(\frac{k \Delta^2 \log n}{\epsilon_L \cdot \poly(\epsilon)}\right).
		\end{align*}
	\end{lemma}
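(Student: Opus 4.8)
The plan is to peel off the three preprocessing layers of \cref{alg:privateKMeans} --- the Johnson--Lindenstrauss map $T$, the scaling by $\lambda := 2(1+\epsilon)/\Delta$, and the projection onto $B_1(0)$ --- and then invoke \cref{lem:D'ClusterCostBound} on the reduced instance $D'$. The first step is to identify what $S$ actually is: ignoring the noisy averaging of \cref{alg:privateKMeans;line:gaussMech}, $S$ is the set of per-cluster centroids of the partition $\mathcal{P}=\{P_1,\dots,P_k\}$ of $D$ obtained by pulling the clustering $\{D'_1,\dots,D'_k\}$ of $D'$ back through $T'$, so that $f_D(S)$ is at most the $k$-means cost of $\mathcal{P}$ measured against those centroids. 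I would then condition on the high-probability events that $T$ is a $(1\pm\epsilon)$-embedding of the finite set $D\cup\OPT_D$ and that, as noted in Step~1, the projection moves no point of $D$; on this event $T'$ restricted to $D$ equals the \emph{linear} map $\tilde T$ (JL followed by the scaling), it is injective, and $D'=\tilde T(D)$.

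The one genuinely delicate point --- and the step I expect to be the main obstacle --- is that the $k$-means cost of a \emph{partition} is taken against centroids, which are not data points and hence are not directly governed by the JL distance guarantee. I would get around this with the standard identity $\sum_{p\in P}d(p,\mathrm{centroid}(P))=\tfrac{1}{2|P|}\sum_{p,q\in P}d(p,q)$, which writes the cost of $\mathcal{P}$ as one fixed nonnegative combination of intra-cluster pairwise squared distances among data points; the same combination, evaluated in the image, computes the cost of $T(\mathcal{P})$, so the JL guarantee yields $f_D(\mathcal{P})\le\tfrac{1}{1-\epsilon}f_{T(D)}(T(\mathcal{P}))=\tfrac{1}{\lambda^2(1-\epsilon)}f_{D'}(\{D'_i\})$, using linearity of $\tilde T$ for the last equality. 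Since $\{D'_i\}$ is precisely the clustering of $D'$ induced by $S''=\mathcal{A}(D'')$ and centroids minimise the cost of a fixed clustering, $f_{D'}(\{D'_i\})\le f_{D'}(\mathcal{A}(D''))$, at which point \cref{lem:D'ClusterCostBound} bounds the right-hand side by $(8E_M+2+(8E_M+4)\epsilon)\,f_{D'}(\OPT_{D'})+2E_A+O\!\big(\tfrac{k\log n}{\epsilon_E\poly(\epsilon)}\log\tfrac{n}{\gamma}\big)+O\!\big(\tfrac{k\log n}{\epsilon_L\poly(\epsilon)}\big)$.

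It then remains to transfer $f_{D'}(\OPT_{D'})$ back to $f_D(\OPT_D)$. Here I would use $\tilde T(\OPT_D)$ as a feasible center set for $D'$, so that optimality of $\OPT_{D'}$, linearity of $\tilde T$, and the JL upper bound $d(T(p),T(c))\le(1+\epsilon)d(p,c)$ for $p\in D$, $c\in\OPT_D$ (all of which lie in the embedded point set) give $f_{D'}(\OPT_{D'})\le f_{D'}(\tilde T(\OPT_D))=\lambda^2 f_{T(D)}(T(\OPT_D))\le\lambda^2(1+\epsilon)f_D(\OPT_D)$. Substituting into the previous display, the factor $\lambda^2=4(1+\epsilon)^2/\Delta^2$ coming from the $\OPT$ transfer cancels the $\lambda^{-2}$ multiplying $f_{D'}(\OPT_{D'})$ and leaves a $\tfrac{1+\epsilon}{1-\epsilon}$-type constant in front of $f_D(\OPT_D)$, while $\lambda^{-2}=\Theta(\Delta^2)$ rescales the additive $E_A$ and the $O(\cdot)$ terms and produces the $\Delta^2$ factors in the claimed additive error. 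What is left is pure $(1\pm\epsilon)$-bookkeeping for $\epsilon\le\tfrac12$ --- verifying, for instance, $\tfrac{1+\epsilon}{1-\epsilon}\big(8E_M+2+(8E_M+4)\epsilon\big)\le(1+\epsilon)\big(8+\tfrac{12\epsilon}{1-\epsilon-\epsilon^2}\big)(E_M+1)$ and $\tfrac{2E_A}{\lambda^2(1-\epsilon)}\le 2(1+\epsilon)\Delta^2 E_A$ --- which is tedious but carries no risk; the only places demanding real care are the centroid-versus-data-point reformulation above and checking that the embedding and no-projection events hold with probability $1-o(1)$.
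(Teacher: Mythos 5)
Your proposal is correct and follows essentially the same route as the paper: both peel off the $\Theta(\Delta^2)$ scaling factor, handle the centroid-versus-data-point issue through the identity $\sum_{p\in P} d(p,\mathrm{centroid}(P)) = \frac{1}{2|P|}\sum_{p,q\in P} d(p,q)$ so that the JL $(1\pm\epsilon)$ distance preservation applies to the cost of the pulled-back partition, and then invoke \cref{lem:D'ClusterCostBound} together with a transfer between $f_{D'}(\OPT_{D'})$ and $f_D(\OPT_D)$. If anything, your explicit treatment of that transfer (using $\tilde{T}(\OPT_D)$ as a feasible center set and the JL upper bound) is slightly more careful than the paper, which simply asserts $\frac{\Delta^2}{4} f_{D'}(\OPT_{D'}) = f_D(\OPT_D)$.
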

	
	\begin{proof}
		To scale the data from a ball that lies within a diameter of $\Delta$ to a diameter of $2$, we note that the clustering cost is multiplied by a factor of $\Delta^2/4$. To account for the projection we recall that the $k$-means cost can be expressed without explicit reference to the means themselves. We let $\{ D'_i : i \in [k] \}$ be the partition of the data $D'$ into $k$ clusters, where the $i$th cluster is centered at $\mu'_i$ and $S' = \{\mu'_i : i \in [k]\}$.
		\begin{align*} 
			f_{D'} (S') &= \sum_{i \in [k]} \sum_{p \in D'_i} d(p, \mu'_i) \\
			&= \sum_{i \in [k]} \sum_{p \in D'_i} \lVert p - \mu'_i \rVert^2 \\
			&= \sum_{i \in [k]} \frac{1}{\lvert D'_i \rvert} \sum_{p\not= q \in D'_i} \lVert p - q\rVert^2.
		\end{align*}
		Since the Johnson Lindenstrauss transform preserves the $\ell_2$ norm squared within a multiplicative factor of $(1 \pm \epsilon)$, it follows from the display above that the cost of clustering $D$ according to its image $D'$ is at most $(1 + \epsilon) f_{D'}(S')$. Denoting the cluster centers derived in this fashion by $S$, this gives us
		\begin{align*}
			f_D (S) &\leq (1 + \epsilon)\left(2 + \frac{4\epsilon}{1 - \epsilon - \epsilon^2} \right) (E_M + 1)  \Delta^2 f_{D'}(\OPT_{D'}) +  \frac{\Delta^2 E_A}{2} + \\
			& O\left(\frac{k \Delta^2 \log n}{\epsilon_E\cdot \poly(\epsilon)} \log \frac{n}{\gamma} \right) + O\left(\frac{k \Delta^2 \log n}{\epsilon_L \cdot \poly(\epsilon)}\right) \\
			\Rightarrow f_D (S) &\leq (1 + \epsilon) \left(8 + \frac{12\epsilon}{1 - \epsilon - \epsilon^2} \right) (E_M + 1)  f_{D}(\OPT_{D}) + 2(1 + \epsilon)\Delta^2 E_A + \\
			& + O\left(\frac{k \Delta^2 \log n}{\epsilon_E\cdot \poly(\epsilon)} \log \frac{n}{\gamma} \right) + O\left(\frac{k \Delta^2 \log n}{\epsilon_L \cdot \poly(\epsilon)}\right),
		\end{align*}
		where we use that $\frac{\Delta^2}{4} f_{D'} (\OPT_{D'}) = f_D (\OPT_D)$.
	\end{proof}
	
	\begin{theorem}\label{thm:finalUtility}
		\Cref{alg:privateKMeans} returns a set of points $\tilde{S}$ such that
		\begin{align*}
			\mathbb{E}\left[f_D (\tilde{S})\right] &\leq (1 + \epsilon) \left(8 + \frac{12\epsilon}{1 - \epsilon - \epsilon^2} \right) (E_M + 1)  f_{D}(\OPT_{D}) + 2(1 + \epsilon)\Delta^2 E_A + \\
			& + O\left(\frac{k \Delta^2 \log n}{\epsilon_E\cdot \poly(\epsilon)} \log \frac{n}{\gamma} \right) + O\left(\frac{k \Delta^2 \log n}{\epsilon_L \cdot \poly(\epsilon)}\right) + O \left( \frac{k \Delta^2 \sqrt{d \log 1/\delta_G }}{\epsilon_G} \right) + O \left( \frac{k \Delta^2 \log n/\delta_G }{\epsilon_G} \right).
		\end{align*}
	\end{theorem}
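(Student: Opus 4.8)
The plan is to derive \cref{thm:finalUtility} from \cref{lem:projectionAndScaling} by paying separately for the one remaining source of error: the replacement of each exact cluster mean by the output of \cref{alg:noisyAVG} in the final loop of \cref{alg:privateKMeans}. Write $\bar D_i = \{p\in D : T'(p)\in D'_i\}$ for the $T'$-pre-image of the $i$-th cluster $D'_i$ produced on \cref{alg:privateKMeans;line:cluster}, and let $\mu_i = \frac{1}{|\bar D_i|}\sum_{p\in\bar D_i}p$ be its true mean, so that $S=\{\mu_1,\dots,\mu_k\}$ is exactly the center set analysed in \cref{lem:projectionAndScaling} while $\tilde S=\{\tilde\mu_1,\dots,\tilde\mu_k\}$ is what \cref{alg:privateKMeans} returns. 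First I would assign each $p\in\bar D_i$ to $\tilde\mu_i$ (a feasible, suboptimal assignment) and use the mean-centering identity $\sum_{p\in\bar D_i}d(p,\tilde\mu_i)=f_{\bar D_i}(\mu_i)+|\bar D_i|\,d(\mu_i,\tilde\mu_i)$ to get
\begin{align*}
	f_D(\tilde S) \le \sum_{i=1}^k f_{\bar D_i}(\mu_i) + \sum_{i=1}^k |\bar D_i|\, d(\mu_i,\tilde\mu_i).
\end{align*}
The first sum is precisely what the per-cluster Johnson--Lindenstrauss argument inside the proof of \cref{lem:projectionAndScaling} bounds (that argument controls $\sum_i f_{\bar D_i}(\mu_i)$, from which $f_D(S)$ is only smaller), so it contributes the first four terms of the claimed bound; since that lemma holds with probability $1-\gamma$ and every center lies in the input ball, taking $\gamma$ to be an inverse polynomial in $n$ lets me pass to a bound in expectation at the cost of a negligible additive $\gamma\cdot n\Delta^2$.

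It then remains to show $\E\big[\sum_{i=1}^k |\bar D_i|\,d(\mu_i,\tilde\mu_i)\big]=O\big(\frac{k\Delta^2\sqrt{d\log(1/\delta_G)}}{\epsilon_G}\big)+O\big(\frac{k\Delta^2\log(n/\delta_G)}{\epsilon_G}\big)$. I would fix a size threshold $\tau=\Theta\big(\frac{1}{\epsilon_G}(\sqrt{d\log(1/\delta_G)}+\log(n/\delta_G))\big)$ large enough to exceed the quantity $A\big(\frac{1}{\epsilon_G}\ln\frac{1}{\beta\delta_G}\big)$ in \cref{thm:noisyAVG} with $\beta$ a small inverse polynomial in $n$, and split the clusters at $\tau$. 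For a cluster with $|\bar D_i|\ge\tau$, \cref{thm:noisyAVG} gives, with probability $1-\beta$, $\tilde\mu_i=\mu_i+\eta_i$ where the coordinates of $\eta_i$ are i.i.d.\ $N(0,\sigma_i^2)$ with $\sigma_i\le\frac{4\Delta}{\epsilon_G|\bar D_i|}\sqrt{2\ln(8/\delta_G)}$, so
\begin{align*}
	|\bar D_i|\,\E\big[d(\mu_i,\tilde\mu_i)\big] = |\bar D_i|\, d\,\sigma_i^2 = O\!\left(\frac{d\,\Delta^2\log(1/\delta_G)}{\epsilon_G^2\,|\bar D_i|}\right) \le O\!\left(\frac{d\,\Delta^2\log(1/\delta_G)}{\epsilon_G^2\,\tau}\right) = O\!\left(\frac{\Delta^2\sqrt{d\log(1/\delta_G)}}{\epsilon_G}\right),
\end{align*}
using $\tau\ge\sqrt{d\log(1/\delta_G)}/\epsilon_G$; the complementary probability-$\beta$ event and every cluster with $|\bar D_i|<\tau$ are handled by the crude bound $|\bar D_i|\,d(\mu_i,\tilde\mu_i)\le|\bar D_i|\Delta^2\le\tau\Delta^2=O\big(\frac{\Delta^2(\sqrt{d\log(1/\delta_G)}+\log(n/\delta_G))}{\epsilon_G}\big)$, valid once $\tilde\mu_i$ stays in the input ball $B_{\Delta/2}(0)$ (and for the $\beta$-event it contributes only $\beta n\Delta^2$, negligible). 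Summing over the $k$ clusters produces exactly the two new error terms.

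Adding the two bounds gives the theorem. I expect the only genuine obstacle to be the small-cluster and failure regime: \cref{thm:noisyAVG} gives no utility guarantee when $|\bar D_i|<\tau$ (and a pathologically small noisy count in \cref{alg:noisyAVG} could otherwise make the perturbation and hence the expectation blow up), so the argument has to lean on the returned center lying in $B_{\Delta/2}(0)$ to secure the $O(|\bar D_i|\Delta^2)$ per-cluster bound --- exactly the role of the ``$\hat m<0$'' branch of \cref{alg:noisyAVG} --- or equivalently to state the theorem with probability $1-1/\poly(n)$. Correspondingly, $\tau$ must be chosen to balance the ``give-up'' cost $\tau\Delta^2$ on small clusters against the Gaussian-noise cost $d\Delta^2\log(1/\delta_G)/(\epsilon_G^2\tau)$ on the survivors; the balance point $\tau\asymp\sqrt{d\log(1/\delta_G)}/\epsilon_G$, together with the floor $\tau\gtrsim\log(n/\delta_G)/\epsilon_G$ forced by the hypothesis of \cref{thm:noisyAVG}, is precisely what makes the two new terms come out in the stated form. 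Everything else is routine substitution into \cref{lem:projectionAndScaling,thm:noisyAVG}.
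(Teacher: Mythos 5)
Your proposal is correct and follows essentially the same route as the paper: both keep the cluster assignment fixed, use the mean-centering identity so the noise contributes $\lvert \bar D_i\rvert\, d(\mu_i,\tilde\mu_i)$ per cluster, split clusters at a size threshold of order $\frac{1}{\epsilon_G}(\sqrt{d\log(1/\delta_G)}+\log(n/\delta_G))$ so that large clusters pay $O(\Delta^2\sqrt{d\log(1/\delta_G)}/\epsilon_G)$ via the \cref{thm:noisyAVG} noise bound and small clusters pay the crude $\tau\Delta^2$ bound, absorb the failure events, and then substitute \cref{lem:projectionAndScaling}. Your explicit remark about the small-cluster/failure regime (needing the returned center to stay near the data, or stating the bound with high probability) is a point the paper's own proof treats only implicitly.
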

	
	\begin{proof}
		The final set of points returned, denoted $\tilde{S}$, is obtained by using \cref{alg:noisyAVG}. From the statement of \cref{thm:noisyAVG}, we know that for the $i$th cluster if $\lvert D_i \rvert \ge A\left( \frac{1}{\epsilon_G} \log \left( \frac{nk}{\delta_G} \right) \right)$ with sufficiently large constant $A$ then with probability $1- \frac{1}{k n}$, \cref{alg:noisyAVG} returns $\mu_i + g_i$ where $g_i$ is sampled from $N(0,\sigma^2)$ for some $\sigma < \frac{4 \Delta}{\epsilon_G \lvert D_i \rvert} \sqrt{2 \ln (4/\delta_G)}$. We let $c = 4\sqrt{2 \ln (4/\delta_G)}$ so that $\sigma<\frac{c \Delta}{ \epsilon_G \lvert D_i \rvert}$. We can upper bound the clustering cost by assuming that cluster sets remain the same even with the noisy means, and then add up the cost cluster by cluster.
		\begin{align*}
			f_D (\tilde{S}) &= \sum_{p \in D} d(p, \tilde{S}) \\
			&\leq \sum_{i \in [k]} \sum_{p \in D_i} d(p, \tilde{\mu_i}) \\
			\sum_{p \in D_i} d(p, \tilde{\mu_i})  &= \sum_{p \in D_i} \lVert x - \tilde{\mu_i} \rVert^2\\
			&= \sum_{p \in D_i} \lVert p - \mu_i + g_i \rVert^2 \\
			&= \sum_{p \in D_i} \langle p - \mu_i + g_i, p - \mu_i + g_i \rangle \\
			&= \left(\sum_{p \in D_i} \lVert p - \mu_i \rVert^2 + \left\langle \sum_{p \in D_i} p- \mu_i, g_i \right\rangle + \sum_{p \in D_i} \lVert g_i \rVert^2 \right) \\
			&= f_{D_i} (\{ \mu_i \}) + \left\langle(|D_i| - 1)\mu_i, g_i\right\rangle + \sum_i \lvert D_i \rvert \lVert g_i \rVert^2,
		\end{align*}
		where in the last step we use that $\sum_{p \in D_i} p = \lvert D_i \rvert \mu$. If $\lvert D_i \rvert \ge A \left( \frac{1}{\epsilon_G} \log \left( \frac{nk}{\delta_G} \right)\right)$ for sufficiently large constant $A$, then taking the expectation, we get
		\begin{align*}
			\Rightarrow \mathbb{E}\left[\sum_{p \in D_i} d(p, \tilde{S})\right] &\leq f_{D_i} (\{ \mu_i \})  + \lvert D_i \rvert \mathbb{E}\left[\sum_{j=1}^d\lVert g_i \rVert^2 \right] \\
			&\leq f_{D_i} (\{ \mu_i \}) + \lvert D_i \rvert \left( \frac{c \Delta}{\lvert D_i \rvert \epsilon_G} \right)^2  d \\
			&\leq f_{D_i} (\{ \mu_i \}) + \frac{c^2 \Delta^2}{\lvert D_i \rvert \epsilon_G^2}   d.
		\end{align*}
		If $\lvert D_i \rvert \geq \frac{c \sqrt{d}}{\epsilon_G}$ then this is at most $f_{D_i} (\{ \mu_i\}) + \frac{c \Delta^2}{\epsilon_G} \sqrt{d}$. On the other hand, if $\lvert D_i \rvert < \frac{c \sqrt{d}}{\epsilon_G}$, we observe that the clustering cost $f_{D_i} (\tilde{\mu}_i)$ can be at most $ \frac{c \sqrt{d}}{\epsilon_G} \Delta^2$ unconditionally. Similarly if $\lvert D_i \rvert = O \left( \frac{1}{\epsilon_G} \log \left( \frac{nk}{\delta_G}\right) \right)$, then the clustering cost $f_{D_i} (\tilde{\mu}_i)$ can be at most $ O\left( \frac{1}{\epsilon_G} \log \left( \frac{nk}{\delta_G} \right) \Delta^2\right)$. With probability $1 - \frac{1}{n}$ the large cluster cost bound holds for all clusters simultaneously and we then have
		\begin{align*}
			\mathbb{E}\left[\sum_{p \in D} d(p, \tilde{S})\right] &\leq \sum_{i \in [k]} \mathbb{E}\left[\sum_{p \in D_i} d(p, \tilde{\mu_i})\right] \\
			&\leq \left( 1- \frac{1}{n} \right)\left( \sum_{i: \lvert D_i \rvert \geq \sqrt{d}} f_{D_i} (\{ \mu_i \}) + \frac{c \Delta^2}{\epsilon_G} \sqrt{d} \right) + \frac{1}{n} \Delta^2 n \\ 
			& + \sum_{i: \lvert D_i \rvert < \sqrt{d}} \frac{c \sqrt{d}}{\epsilon_G} \Delta^2 + \sum_{i: \lvert D_i \rvert < \frac{8}{\epsilon_G} \log \left( \frac{2nk}{\delta_G} \right)} O\left( \frac{1}{\epsilon_G} \log \left( \frac{nk}{\delta_G} \right) \Delta^2\right) \\
			&\leq \left(\sum_{i \in k} f_{D_i} (\{ \mu_i \}) + k \frac{c \Delta^2}{\epsilon_G} \sqrt{d} \right) + \Delta^2 + k \Delta^2  \frac{c \sqrt{d}}{\epsilon_G} + O\left( \frac{k}{\epsilon_G} \log \left( \frac{nk}{\delta_G} \right) \Delta^2\right)  \\
			&=  f_D (S) + O \left( \frac{k \Delta^2 \sqrt{d \log 1/\delta_G }}{\epsilon_G} \right) + O \left( \frac{k \Delta^2 \log n/\delta_G }{\epsilon_G} \right).
		\end{align*}
		Substituting the bound on $f_D(S)$ from \cref{lem:projectionAndScaling}, we get
		\begin{align*}
			\mathbb{E}\left[f_D (\tilde{S})\right] &\leq (1 + \epsilon) \left(8 + \frac{12\epsilon}{1 - \epsilon - \epsilon^2} \right) (E_M + 1)  f_{D}(\OPT_{D}) + 2(1 + \epsilon)\Delta^2 E_A + \\
			& + O\left(\frac{k \Delta^2 \log n}{\epsilon_E\cdot \poly(\epsilon)} \log \frac{n}{\gamma} \right) + O\left(\frac{k \Delta^2 \log n}{\epsilon_L \cdot \poly(\epsilon)}\right) + O \left( \frac{k \Delta^2 \sqrt{d \log 1/\delta_G }}{\epsilon_G} \right) + O \left( \frac{k \Delta^2 \log n/\delta_G }{\epsilon_G} \right).
		\end{align*}
	\end{proof}

    \section{Privacy}

	The main result of this section is the following:
	
	\begin{theorem}\label{thm:finalPrivacy}
		\Cref{alg:privateKMeans} is $\left(\frac{e \epsilon_E \ln \delta_E^{-1} }{2}+ \epsilon_L + \epsilon_G, \delta_E + \delta_G \right)$-differentially private.
	\end{theorem}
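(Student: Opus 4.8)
\emph{Proof proposal.} The plan is to split \cref{alg:privateKMeans} into a short adaptive sequence of mechanisms, bound each one's privacy, and recombine with \cref{thm:basicComp} and \cref{thm:parallelComp}. The data-dependent operations, in the order executed, are: (i) the loop on \cref{alg:privateKMeans;line:loopStart}--\cref{alg:privateKMeans;line:loopEnd}, i.e.\ all invocations of the exponential mechanism inside the $m$ calls to \cref{alg:privateGridSetCover}; (ii) the Laplace perturbation of the assignment counts on \cref{alg:privateKMeans;line:laplaceMech}; and (iii) the $k$ calls to \cref{alg:noisyAVG} on \cref{alg:privateKMeans;line:gaussMech}. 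Everything else is free: the Johnson--Lindenstrauss map, the rescaling by the \emph{public} factor $\Delta/(2(1+\epsilon))$, and the projection onto $B_1(0)$ use data-independent randomness and act pointwise, so they map neighbours to neighbours and may be conditioned on (this is precisely why we project rather than rescaling by a data-dependent norm); forming $D''$ from the noisy counts $n'_c$ and running Lloyd is post-processing of (i)--(ii); and the assignment on \cref{alg:privateKMeans;line:cluster} reads only $S''$ and the fixed map $T'$. Hence, by post-processing invariance and \cref{thm:basicComp}, it suffices to show that (i) is $\bigl(\tfrac{e\,\epsilon_E\ln\delta_E^{-1}}{2},\,\delta_E\bigr)$-DP, (ii) is $(\epsilon_L,0)$-DP, and (iii) is $(\epsilon_G,\delta_G)$-DP, and then add the parameters.

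Steps (ii) and (iii) are routine. For (ii), with $C$ fixed (an output of (i)), adding or removing one point of $D'$ changes exactly one count $n_c$ by $1$, so the $\ell_1$-sensitivity is $1$ and \cref{lem:lapMech} with scale $1/\epsilon_L$ gives $(\epsilon_L,0)$-DP. For (iii), once $S''$ is fixed the sets $D'_i=\{p\in D:\ T'(p)\text{ lies in the }i\text{th Voronoi cell of }S''\}$ are disjoint, and \cref{alg:noisyAVG} with predicate $1_{D'_i}$ depends only on $D\cap D'_i$; each call is $(\epsilon_G,\delta_G)$-DP by \cref{thm:noisyAVG} (using $\epsilon_G\le 1/3$), so by \cref{thm:parallelComp} the $k$ calls jointly are $(\epsilon_G,\delta_G)$-DP.

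The heart of the argument is (i). Naively this loop composes $mk'=\Theta(k\,\poly(1/\epsilon)\log n)$ adaptive $(\epsilon_E,0)$-DP exponential-mechanism rounds, which is far too lossy; the gain is that a data point is deleted from the working set the first time it is covered, so it influences only the rounds up to that deletion. I would condition on the full transcript $T=(g_1,g_2,\dots)$ of chosen grid points and fix neighbours $E_1=E_0\cup\{x\}$. Along $T$, in the $E_1$-execution $x$ survives exactly until the first round $\tau$ in which $g_\tau$ lies within the current threshold radius of $x$, after which the two executions have identical working sets. Since adding $x$ raises $|\cover[g]|$ by at most $1$ and only for $g$ within radius of $x$, the round-$i$ likelihood ratio $\Pr[g_i\mid E_0]/\Pr[g_i\mid E_1]$ equals $1+(e^{\epsilon_E/2}-1)q_i$ for $i<\tau$ (where $q_i$ is the probability, under the $E_0$-law given $T_{<i}$, that $g_i$ falls within radius of $x$), is $\le 1$ at $i=\tau$, and is $1$ for $i>\tau$; hence
\[
\frac{\Pr[A(E_0)=T]}{\Pr[A(E_1)=T]}\ \le\ \prod_{i<\tau}\bigl(1+(e^{\epsilon_E/2}-1)q_i\bigr)\ \le\ \exp\!\Bigl((e^{\epsilon_E/2}-1)\textstyle\sum_{i<\tau}q_i\Bigr),
\]
while the reciprocal ratio is $\le e^{\epsilon_E/2}$ for \emph{every} $T$. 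Using $e^{\epsilon_E/2}-1\le \tfrac{e\,\epsilon_E}{2}$ (valid for $\epsilon_E\le 2$), it remains to control $\sum_{i<\tau}q_i$: letting $R_i$ indicate that $x$ is covered in round $i$ and $Z_i=\prod_{j\le i}(1-R_j)$, we have $\sum_{i<\tau}q_i=\sum_i q_iZ_i$, and since $R_i$ is Bernoulli with parameter $Z_{i-1}q_i$ given the past, \cref{lem:gupta} yields $\Pr\!\bigl[\sum_i q_iZ_i>\ln\delta_E^{-1}\bigr]\le\delta_E$ --- and this holds in both the $E_0$- and the $E_1$-execution, with their respective coverage probabilities. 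Excising this $\le\delta_E$-probability set of ``bad'' transcripts and using the displayed bound on the remainder gives $\Pr[A(E_0)\in S]\le e^{\epsilon_1}\Pr[A(E_1)\in S]+\delta_E$ with $\epsilon_1=\tfrac{e\,\epsilon_E\ln\delta_E^{-1}}{2}$, and the reverse direction follows from $e^{\epsilon_E/2}\le e^{\epsilon_1}$ with no $\delta$ loss; so (i) is $(\epsilon_1,\delta_E)$-DP.

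I expect the bookkeeping inside (i) --- pinning down the one-sided likelihood ratios and the constant $e^{\epsilon_E/2}-1\le e\epsilon_E/2$, identifying the coverage indicators with the random process of \cref{lem:gupta}, and checking that the ``bad'' transcript set is small under \emph{both} neighbouring executions --- to be the only real obstacle; the decomposition and the analyses of (ii)--(iii) are standard. Given (i)--(iii), \cref{thm:basicComp} delivers overall $\bigl(\tfrac{e\,\epsilon_E\ln\delta_E^{-1}}{2}+\epsilon_L+\epsilon_G,\ \delta_E+\delta_G\bigr)$-differential privacy, as claimed.
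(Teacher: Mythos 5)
Your proposal is correct and follows essentially the same route as the paper: the same decomposition into the grid-cover loop, the Laplace count perturbation, and the parallel-composed NoisyAVG calls, and the same key argument for the loop (per-round likelihood ratios, the uniform $e^{\epsilon_E/2}$ bound in one direction, and \cref{lem:gupta} applied to the coverage indicators of the differing point to get the $\frac{e\epsilon_E\ln\delta_E^{-1}}{2}$ bound with failure probability $\delta_E$), which is exactly the paper's \cref{lem:loopPrivacyLoss}. Your write-up is, if anything, slightly more explicit about the stopping time $\tau$ and the post-processing steps, but it is not a different proof.
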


	From the basic (\cref{thm:basicComp}) and parallel (\cref{thm:parallelComp}) composition laws of differential privacy and the privacy guarantees of the Laplace mechanism (\cref{lem:lapMech}) and \cref{alg:noisyAVG} (\cref{thm:noisyAVG}) most of the expression for the bound on privacy loss claimed in this result follows relatively straightforwardly. To bound the privacy loss incurred in the calls to \cref{alg:privateGridSetCover}, we adapt a technique from \cite{gupta2010differentially}. We use this technique in the following lemma to show that the privacy loss when using the exponential mechanism many times successively can be bounded as an expression of the sum of expected gains in the cover. For the set cover function this sum of expected gains can be shown to decay exponentially using \cref{lem:gupta}, which leads to a strong bound on the privacy loss.
	
	\begin{lemma} \label{lem:loopPrivacyLoss}
		The subroutine \cref{alg:privateKMeans;line:loopStart}-\cref{alg:privateKMeans;line:loopEnd} of \cref{alg:privateKMeans} that constructs set of centers $C$ (over $m$ iterations) is $\left(\frac{e \epsilon_E \ln \delta_E^{-1} }{2},\delta_E \right)$-differentially private
	\end{lemma}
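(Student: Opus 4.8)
The plan is to treat lines \cref{alg:privateKMeans;line:loopStart}--\cref{alg:privateKMeans;line:loopEnd} as a run of $mk'$ adaptive invocations of the exponential mechanism of \cref{lem:expMech}: by the preceding claim, the sampling code (the $P_{\mathrm{samp}}$ split together with the uniform fallback) faithfully draws, in round $i$, a grid point $g_i$ with probability $\propto \exp(\epsilon_E|\cover[g_i]|/2)$, where the utility $|\cover[\cdot]|$---the number of still-uncovered points of the current working set $D'$ within the round's threshold radius---has sensitivity $1$. Basic composition would only give $(mk'\epsilon_E/2,0)$-privacy; the point is that a single data point $p^*$ can be covered, and hence deleted from $D'$, at most once over the whole run, so its influence is used up quickly and a potential-function argument in the style of \cref{lem:gupta} replaces the factor $mk'$ by $O(\ln(1/\delta_E))$ at the cost of a $\delta_E$ slack.

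I would fix a neighbouring pair, WLOG $D_1 = D_0\cup\{p^*\}$, regard the output as the transcript $\vec g = (g_1,\dots,g_{mk'})$, and factor $P_{D_0}(\vec g)/P_{D_1}(\vec g) = \prod_i P_{D_0}(g_i\mid g_{<i})/P_{D_1}(g_i\mid g_{<i})$, where each conditional law depends on the dataset only through the working set it induces together with $g_{<i}$. A short induction on $i$ shows that for a fixed transcript the working sets satisfy $D_1^{(i)} = D_0^{(i)}\cup\{p^*\}$ until the first round $\tau$ whose chosen ball contains $p^*$, and $D_1^{(i)} = D_0^{(i)}$ afterwards, because whether a point is currently uncovered is a purely geometric function of the previously chosen grid points and so is unaffected by the presence of $p^*$. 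Hence the round-$i$ ratio is $1$ for $i>\tau$; for $i<\tau$ the chosen $g_i$ does not cover $p^*$, its numerator utility is unchanged, and the ratio collapses to $Z_{D_0^{(i)}}/Z_{D_1^{(i)}} = \bigl(1+(e^{\epsilon_E/2}-1)\,P_{D_0}(g_i\text{ covers }p^*\mid g_{<i})\bigr)^{-1}$ (only grid points whose ball contains $p^*$ have a shifted weight, each by the factor $e^{\epsilon_E/2}$); and for $i=\tau$ the changed numerator utility contributes an extra $e^{-\epsilon_E/2}$, so that factor is at most $1$. Summing logs, $\log\bigl(P_{D_1}(\vec g)/P_{D_0}(\vec g)\bigr)\le \epsilon_E/2$ deterministically, while $\log\bigl(P_{D_0}(\vec g)/P_{D_1}(\vec g)\bigr)\le (e^{\epsilon_E/2}-1)\sum_{i<\tau}P_{D_0}(g_i\text{ covers }p^*\mid g_{<i})$.

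To bound the last sum under $P_{D_0}$, let $R_i$ indicate that round $i$ is the first whose chosen ball contains $p^*$, and $Z_i = \prod_{j\le i}(1-R_j)$; the conditional Bernoulli parameter of $R_i$ is $p_i = Z_{i-1}P_{D_0}(g_i\text{ covers }p^*\mid g_{<i})$, and since $\{i<\tau\}=\{Z_i=1\}$ the sum above is exactly $\sum_i p_i Z_i$ in the notation of \cref{lem:gupta}. Applying \cref{lem:gupta} with $q=\ln(1/\delta_E)$ gives $P_{D_0}\bigl(\sum_i p_i Z_i > \ln(1/\delta_E)\bigr)\le\delta_E$, and since $e^{\epsilon_E/2}-1\le e\epsilon_E/2$ for $\epsilon_E\le 1$, off a set of $P_{D_0}$-mass at most $\delta_E$ the transcript satisfies $\log\bigl(P_{D_0}(\vec g)/P_{D_1}(\vec g)\bigr)\le \frac{e\epsilon_E\ln\delta_E^{-1}}{2}$. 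Splitting an arbitrary output event over this good set and its complement yields $P_{D_0}(S)\le e^{e\epsilon_E\ln\delta_E^{-1}/2}\,P_{D_1}(S)+\delta_E$, while the deterministic bound yields the reverse inequality with $\delta=0$; post-processing carries the guarantee to the returned set $C$, so the subroutine is $\bigl(\frac{e\epsilon_E\ln\delta_E^{-1}}{2},\delta_E\bigr)$-differentially private.

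The step I expect to be the main obstacle is the middle one: rigorously establishing that for a fixed transcript the two coupled executions differ only in the single point $p^*$ up to time $\tau$, and then identifying the per-round partition-function ratio as exactly $1+(e^{\epsilon_E/2}-1)P_{D_0}(g_i\text{ covers }p^*\mid g_{<i})$, so that the telescoping sum lands precisely in the form $\sum_i p_i Z_i$ controlled by \cref{lem:gupta}, with the Bernoulli parameters measured with respect to the execution one conditions on. The remaining ingredients---the sensitivity bound, the harmless fact that the grid and threshold change across the $m$ calls, and the standard passage from a high-probability bound on the log-likelihood ratio to an $(\epsilon,\delta)$ guarantee---are routine.
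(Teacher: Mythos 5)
Your proposal is correct and follows essentially the same route as the paper's proof of \cref{lem:loopPrivacyLoss}: factor the transcript probability into per-round exponential-mechanism ratios, bound the direction favouring the larger dataset deterministically by $e^{\epsilon_E/2}$ (the differing point boosts at most one numerator, since covered points are removed), and bound the other direction by turning each normalization ratio into $1+(e^{\epsilon_E/2}-1)\,P(\text{chosen ball covers }p^*)$, exponentiating via $1+ex$, and controlling the resulting sum of conditional coverage probabilities under the smaller dataset's run with \cref{lem:gupta}, which is exactly the paper's argument. One small slip to fix: for $i<\tau$ the ratio $P_{D_0}(g_i\mid g_{<i})/P_{D_1}(g_i\mid g_{<i})$ equals $Z_{D_1^{(i)}}/Z_{D_0^{(i)}}=1+(e^{\epsilon_E/2}-1)P_{D_0}(g_i\text{ covers }p^*\mid g_{<i})$, not its reciprocal as written --- the inequality you state immediately afterwards already uses the correct value, so nothing downstream changes.
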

	
	\begin{proof}
		Let $A$ and $B$ be two neighbouring datasets, i.e. $A \triangle B = \{I\}$. To show that this subroutine (denoted $\mathcal{A}$) is $(\epsilon, \delta)$ differentially private, we need to show that the ratio $P(\mathcal{A}(A) = C)/P(\mathcal{A}(B) = C)$ is bounded from above by $e^\epsilon$ with probability $1-\delta$, where $C$ is an arbitrary sequence of grid points $c_1, \dots , c_{km/\epsilon}$ that might be picked in the thresholded max-cover subroutine.
		\begin{align*}
			\frac{P(\mathcal{A}(A)=C)}{P(\mathcal{A}(B)=C)} &= \prod_{i=1}^{km/\epsilon} \frac{P(\mathcal{A}(A)_i= c_i | c_1 , \dots , c_{i-1})}{P(\mathcal{A}(B)_i= c_i | c_1 , \dots , c_{i-1})} \\
			P(\mathcal{A}(A)_i= c_i | c_1 , \dots , c_{i-1}) &= \frac{\exp\left( \frac{\epsilon_E \lvert \cover [c_i] \rvert}{2} \right)}{\sum_{g} \exp\left( \frac{\epsilon_E \lvert \cover [g]\rvert}{2} \right)}\\
			\Rightarrow \frac{P(\mathcal{A}(A)=C)}{P(\mathcal{A}(B)=C)} &= \prod_{i=1}^{km/\epsilon} \frac{\exp\left( \frac{\epsilon_E \lvert \cover_A[c_i] \rvert}{2} \right)}{\exp\left( \frac{\epsilon_E \lvert \cover_B[c_i] \rvert}{2} \right)} \cdot \prod_{i=1}^{km/\epsilon} \frac{\sum_{g} \exp\left( \frac{\epsilon_E \lvert \cover_B[g] \rvert}{2} \right)}{\sum_{g} \exp\left( \frac{\epsilon_E \lvert \cover_A[g] \rvert}{2} \right)}.
		\end{align*}
		If $A \backslash B = \{ I \}$ then we see that the second factor is at most $1$ and the first factor is at most $\exp \left(\frac{\epsilon_E}{2}\right)$, since $\cover_A [c_i] \backslash \cover_B [c_i]$ can be at most the data point $I$, and that too for at most one index $i$, since $\cover$ counts only yet uncovered data points.
		Inversely if $B \backslash A = \{ I \}$, then the first factor is at most $1$ and we need to bound the second factor. We observe that this ratio of sums can be written as an expectation by factoring out the indicator of $I$ as follows:
		\begin{align*}
			\prod_{i=1}^{km/\epsilon} \frac{\sum_{g} \exp\left( \frac{\epsilon_E \lvert \cover_B[g] \rvert}{2} \right)}{\sum_{g} \exp\left( \frac{\epsilon_E \lvert \cover_A[g] \rvert}{2} \right)} &= \prod_{i=1}^{km/\epsilon} \mathbb{E}_{g \sim \exp\left(\frac{\epsilon_E \lvert \cover_A[g] \rvert}{2} \right)} \left[ \exp\left( \frac{ \epsilon_E 1_{I \in \cover_B[g] } }{2} \right) \right] \\
			&\leq \prod_{i=1}^{km/\epsilon} \mathbb{E}_{g \sim \cdot } \left[ 1 + e\cdot  \frac{\epsilon_E 1_{I \in \cover_B[g] }}{2} \right] \\
			&= \prod_{i=1}^{km/\epsilon} 1 + \frac{e \epsilon_E\mathbb{E}[1_{I \in \cover_B[g] }]}{2}  \\
			&\leq \prod_{i=1}^{km/\epsilon} \exp \left( \frac{e \epsilon_E\mathbb{E}[1_{I \in \cover_B[g] }]}{2} \right) \\
			&= \exp \left( \frac{e \epsilon_E \sum\limits_{i=1}^{km/\epsilon} \mathbb{E}[1_{I \in \cover_B[g] }]}{2} \right).
		\end{align*}
		To bound the sum of expectations that occurs in the exponent, we use \cref{lem:gupta} with $R_i = 1_{I \in \cover_B [c_i]}$ and $p_i = \mathbb{E}[R_i]$ if $I$ has not been picked by the $(i-1)$th round and $R_i = \mbox{Ber}(0)$ otherwise. We see that $Z_j = \prod_{i=1}^j (1-R_i)$ then simply indicates the event that $I$ has not been covered by the $j$th round. With these definitions, $\sum_{i=1}^{km/\epsilon} p_i Z_i = \sum_{i=1}^{km/\epsilon} \mathbb{E}[1_{I \in \cover_B[g] }]$ and
		\begin{align*}
			P\left(\sum\limits_{i=1}^{km/\epsilon} \mathbb{E}[1_{I \in \cover_B[g] }] > q\right) < \exp(-q).
		\end{align*}
		If $\sum\limits_{i=1}^{km/\epsilon} \mathbb{E}[1_{I \in \cover_B[g] }] < q$ then we say that the sequence $C$ is $q$-good. If a sequence is not $q$-good, it is called $q$-bad. If we let $q = \ln \delta_E^{-1}$, we see that the probability of an arbitrary sequence being $\ln \delta_E^{-1}$-good is at least $1-\delta_E$. This means that with probability $1-\delta_E$, 
		\begin{align*}
			\frac{P(\mathcal{A}(A)=C)}{P(\mathcal{A}(B)=C)}  &\leq \exp \left( \frac{e \epsilon_E \ln \delta_E^{-1} }{2} \right).
		\end{align*}
		Putting everything together, we see that this subroutine satisfies $\left( \frac{e \epsilon_E \ln \delta_E^{-1} }{2},\delta_E \right)$-differential privacy.
		
	\end{proof}
	
	\begin{proof}[Proof of \cref{thm:finalPrivacy}]
		We divide the privacy analysis into two halves; first, we bound the loss in privacy that occurs when constructing the proxy dataset $D''$. From \cref{lem:loopPrivacyLoss} we know that in the $m$ calls to \cref{alg:privateGridSetCover} the net loss in privacy is $(\frac{e \epsilon_E \log \delta_E^{-1}}{2}, \delta_E)$. In the calculation of noisy counts we see that two neighbouring datasets can only differ in their true counts by 1 unit at one center of $C$, from whence it follows that the $\ell_1$ sensitivity of the tuple of all counts is $1$ unit; this justifies the choice of parameter in the Laplace mechanism. Using basic composition \cref{thm:basicComp} along with the privacy loss bound for the Laplace mechanism \cref{lem:lapMech} we see that the net loss in privacy on releasing the proxy dataset $D''$ is $\left(\frac{e \epsilon_E \log \delta_E^{-1}}{2} + \epsilon_L, \delta_E \right)$. 
		
		We now have that $D''$ is publicly known and that the low-dimensional domain can be partitioned by identifying each point in the domain with the closest point in the set returned by the non-private clustering algorithm used (a Voronoi diagram). 
		
		In the second half of the analysis we use the parallel composition theorem (\cref{thm:parallelComp}) of \cite{mcsherry2009privacy} along with \cref{alg:noisyAVG} (\cref{thm:noisyAVG}). Since each application of \cref{alg:noisyAVG} on the separate clusters is $(\epsilon_G, \delta_G)$-differentially private, we apply parallel composition (\cref{thm:parallelComp}) to conclude that the net privacy loss over all $k$ applications is still $(\epsilon_G, \delta_G)$.
		
		Using basic composition we conclude that \cref{alg:privateKMeans} is $\left(\frac{e \epsilon_E \log \delta_E^{-1}}{2} + \epsilon_L + \epsilon_G, \delta_E + \delta_G \right)$-differentially private.
	\end{proof}
    
	\section{Experiments} \label{sec:experiments}

\begin{figure}[H]
	\centering
	\makebox[\textwidth]{%
		\begin{minipage}[t]{0.45\textwidth}
			\centering
			\begin{tikzpicture}
			\begin{axis}[
			title={Synthetic Dataset $\eps = 1$},
			xlabel={Centers},
			ylabel={K-Means Objective},
			xmin=0, xmax=20,
			ymin=6.9e8, ymax=1.055e9,
			xtick={2,6,10,14,18},
			ytick={7e8, 7.5e8, 8e8, 8.5e8, 9e8, 9.5e8, 1e9},
			legend style={at={(0.5,-0.2)},anchor=north},
			ymajorgrids=true,
			xmajorgrids=true,
			grid style=dashed,
			width=\textwidth
			]
			\addplot[purple] plot[ 
			error bars/.cd,
			y dir = both, y explicit
			]
			coordinates {
				(2,1.01446e9) +- (2.395755e6,2.395755e6)
				(6,9.87552e8) +- (2.5743e6,2.5743e6)
				(10,9.747e8) +- (1.639e6,1.639e6)
				(14,9.79e8) +- (6.1616e6,6.16161e6)
				(18,9.806712e8) +- (7.949e6,1.1815e6)
			};
			
			\addplot[blue] plot[ error bars/.cd,
			y dir = both, y explicit
			]
			coordinates {
				(2,1.00311e9) +- (2.914799e6,2.914799e6)
				(6,9.10188e8) +- (5.0831e6,5.0381e6)
				(10,8.25516e8) +- (1.1815e6,1.1815e6)
				(14,7.43920e8) +- (3.41029e6,3.41029e6)
				(18,7.0133e8) +- (5.771404e6,5.771404e6)
			};
			\addplot[green] plot[ 
			error bars/.cd,
			y dir = both, y explicit
			]
			coordinates {
				(2,1.0083e+09) +- (1.0738e+06,1.0738e+06)
				(6,1.0355e+09) +- (3.7897e+07,3.7897e+07)
				(10,1.0054e+09) +- (9.6128e+06,9.6128e+06)
				(14,9.9964e+08) +- (2.5999e+06,2.5999e+06)
				(18,1.0158e+09) +- (1.1279e+07,1.1279e+07)
			};
			\legend{Ours (\cref{alg:privateKMeans}), Non-private Lloyds, Balcan et al.}
			\end{axis}
			\end{tikzpicture}
			\label{fig:synth}
		\end{minipage} \hfill
		\begin{minipage}[t]{0.45\textwidth}
			\centering
			\begin{tikzpicture}
			\begin{axis}[
			title={MNIST Dataset $\eps = 1$},
			xlabel={Centers},
			ylabel={K-Means Objective},
			xmin=0, xmax=20,
			ymin=1.10e11, ymax=2.2e11,
			xtick={2,6,10,14,18},
			ytick={1.1e11, 1.25e11, 1.40e11, 1.55e11, 1.70e11, 1.85e11, 2e11, 2.15e11},
			legend style={at={(0.5,-0.2)},anchor=north},
			ymajorgrids=true,
			xmajorgrids=true,
			grid style=dashed,
			width=\textwidth
			]
			
			\addplot[purple] plot[ 
			error bars/.cd,
			y dir = both, y explicit
			]
			coordinates {
				(2,1.9685e11) +- (1.2536e9,1.2536e9)
				(6,1.8788e11) +- (1.141e9,1.141e9)
				(10,1.9007e11) +- (2.2242e9,2.2242e9)
				(14,1.9046e11) +- (1.8164e9,1.8164e9)
				(18,1.91789e11) +- (1.325e9,1.325e9)
			};
			\addplot[blue] plot[ error bars/.cd,
			y dir = both, y explicit
			]
			coordinates {
				(2,1.9435e11) +- (5.78e8,5.78e8)
				(6,1.77e11) +- (2.195e8,2.149e8)
				(10,1.58e11) +- (1.541e9,1.541e9)
				(14,1.43692e11) +- (2.322e8,2.322e8)
				(18,1.286e11) +- (2.584e8,2.584e8)
			};
			\addplot[green] plot[ 
			error bars/.cd,
			y dir = both, y explicit
			]
			coordinates {
				(2,2.0222e11) +- (2.3789e9,2.3789e9)
				(6,2.0765e11) +- (6.8912e9,6.8912e9)
				(10,1.9855e+11) +- (2.7748e9,2.7748e9)
				(14,1.9916e11) +- (4.7858e+09,4.7847e+09)
				(18,2.0476e+11) +- (5.7650e+09,5.7650e+09)
			};
			\legend{Ours (\cref{alg:privateKMeans}), Non-private Lloyds, Balcan et al.}
			\end{axis}
			\end{tikzpicture}
		\end{minipage}
	}%
	\caption{Empirical comparison of \cref{alg:privateKMeans} and the private $k$-means clustering algorithm from \cite{balcan2017differentially} }
	\label{fig:graphs}
\end{figure}
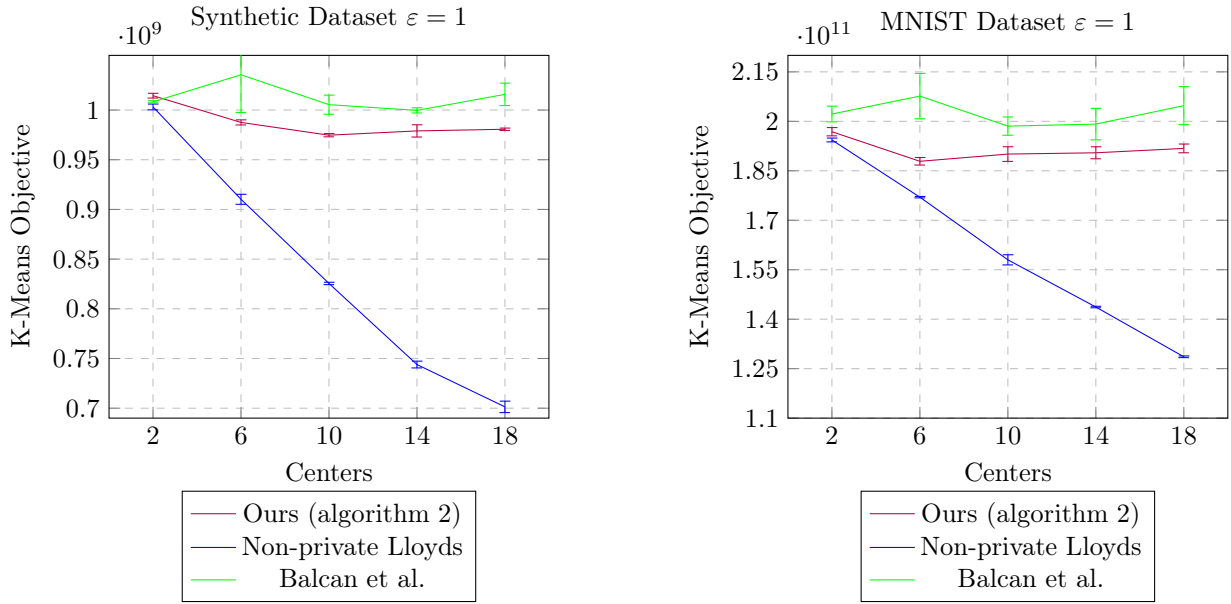
    
    In this section we present an experimental comparison between \cref{alg:privateKMeans}, the differentially private $k$-means clustering algorithm from \cite{balcan2017differentially}, and the non-private Lloyd's algorithm. Although there are other works with strong theoretical guarantees (such as \cite{KS18}), we are not aware of any implementation for those methods. The comparison here is done for two datasets; a synthetic dataset reproducing the construction in \cite{balcan2017differentially} and the MNIST dataset \cite{726791}. 
    
    The empirical results shown here for Balcan et al.'s algorithm~\cite{balcan2017differentially} come largely from their MATLAB implementation available on Github. Some corrections were made to the implementation of \cite{balcan2017differentially}; although the pseudocode uses a noisy count of the cluster sizes when computing the noisy average of the clusters found their implementation used the non-private exact count. We replaced this subroutine with \cref{alg:noisyAVG} to use the best method we know for privately computing the average.
	
	\paragraph{Implementation details:}{The privacy parameters were set to $\epsilon = 1$ and $\delta=n^{-1.5}$ for both algorithms. For each algorithm and dataset we let the number of centers $k = 2, 6, 10, 14$ and $18$. Our implementation of the algorithm, similar to \cite{balcan2017differentially}, projects to a smaller subspace of dimension size $\log(n)/2$ rather than $O(\log(n)/\epsilon^2)$ - note that this does not have any effect upon the privacy guarantee.}
		
	At the conclusion of both algorithms, we run one round of differentially private Lloyd's algorithm; adding this call to the differentially private Lloyd's yielded better empirical results for both the algorithm of \cite{balcan2017differentially} and ours. The addition of these rounds of Lloyd's requires adjusting privacy parameters by a constant factor but otherwise does not affect the privacy guarantees of the original algorithms. Although \cite{balcan2017differentially} satisfy $(\epsilon, 0)$ differential privacy and hence use the Laplace mechanism for their noisy average, we replaced this step with the noisyAVG routine of \cite{NSV16} for a fair comparison. The non-private Lloyd's algorithm was executed with 10 iterations. \Cref{fig:graphs} records the averages and standard deviation over five runs of each experiment.
	
	\paragraph{Datasets:}{The synthetic dataset is comprised of 50000 points randomly sampled from a mixture of 64 Gaussians in $\mathbb{R}^{100}$. The MNIST dataset uses the raw pixels; it is comprised of 70000 points with 784 features each.}
	
	\paragraph{Results:}{As can be seen in \cref{fig:graphs}, our algorithm achieves a lower $k$-means objective score than that of \cite{balcan2017differentially} for both the synthetic Dataset as well as the MNIST dataset. Similar to the experimental results in \cite{balcan2017differentially}, increasing the number of centers results in a decrease in the cost in the non-private algorithm but did not result in a concomitant decrease in the cost of the private algorithms. This behavior suggests that the algorithms are limited by their additive errors and that perhaps further decreasing them even in the constants would improve the gap compared with their non-private counterpart.}

	\bibliography{biblio}
	\bibliographystyle{plain}
    
\end{document}